\setlist{noitemsep,topsep=0pt,parsep=0pt,partopsep=0pt}
\newcommand{\mi}{\mathrm{i}}
\newcommand{\beginMatlab}[1]
{
	\definecolor{dkgreen}{rgb}{0,0.6,0}
	\definecolor{gray}{rgb}{0.5,0.5,0.5}
	\lstloadlanguages{Matlab}%
	\lstset{
	language=Matlab,
	keywordstyle=[1]\color{blue}\bfseries, 
	keywordstyle=[2]\color{dkgreen},   
	keywordstyle=[3]\color{black},      
	escapebegin=\color{dkgreen},
	keywords={abs,collect,simple,exp,log,eigen,break,case,catch,continue,else,elseif,end,for,
	function,global,if,otherwise,persistent,return,switch,try,while,diff,simplify,
	subs,int2str,syms,clear,all, factor,hold, on, off, plot,solve, title,xlabel, ylabel,
	grid,eval},
	caption = {#1},
	basicstyle=\ttfamily,
	morekeywords={},
	commentstyle=\color{dkgreen},
	stringstyle=\color{gray},
	numbers=left, 
	firstnumber=1, 
	numberstyle=\tiny\color{blue}, 
	numbersep=10pt,
	backgroundcolor=\color{white},
	frame=shadowbox,
	tabsize=4,
	showspaces=false,
	showstringspaces=false,
	numbers=left,
	mathescape
 	}
}
\DeclareMathOperator{\LE}{LE}
\DeclareMathOperator{\LCE}{LCE}
\begin{document}
\title{
Homoclinic orbits, and self-excited and hidden attractors in a Lorenz-like system describing convective fluid motion
}
\subtitle{Homoclinic orbits, and self-excited and hidden attractors}
\author{
G. A. Leonov\inst{1} \and
N. V. Kuznetsov\inst{\,1}\inst{2}\fnmsep\thanks{
Corresponding author: nkuznetsov239@gmail.com (N.V. Kuznetsov)
}
\and
T. N. Mokaev\inst{1}\inst{2}
}
\institute{
Faculty of Mathematics and Mechanics,
St. Petersburg State University, St. Petersburg, Russia \and
Department of Mathematical Information Technology,
University of Jyv\"{a}skyl\"{a}, Jyv\"{a}skyl\"{a}, Finland
}

\abstract{
In this paper, we discuss self-excited and hidden attractors for systems of differential equations.
We considered the example of a Lorenz-like system
derived from the well-known Glukhovsky--Dolghansky and Rabinovich systems,
to demonstrate the analysis of self-excited and hidden attractors
and their characteristics.
We applied the \emph{fishing principle} to demonstrate the existence of a homoclinic orbit,
proved the dissipativity and completeness of the system, and found absorbing and positively invariant sets.
We have shown that this system has a self-excited attractor and a {\it hidden} attractor for certain parameters.
The upper estimates of the Lyapunov dimension of self-excited and hidden attractors were obtained analytically.
}
\maketitle

\section{Introduction: self-excited and hidden attractors}
\label{intro}
 When the theories of dynamical systems and oscillations were first developed
 (see, e.g., the fundamental works of Poincare and Lyapunov), researchers mainly
 focused on analyzing equilibria stability and the birth of periodic oscillations.
 The structures of many applied systems (see, e.g., the Rayleigh \cite{Rayleigh-1877},
 Duffing \cite{Duffing-1918}, van der Pol \cite{VanDerPol-1926}, Tricomi \cite{Tricomi-1933},
 and Beluosov-Zhabotinsky \cite{Belousov-1959} systems) are such
 that it is almost obvious that periodic oscillations exist, because the oscillations
 are excited by an unstable equilibrium.
 This meant that scientists of that time could compute such oscillations
 (called self-excited oscillations) by constructing a solution using initial data from a
 small neighborhood of the equilibrium, observing how it is attracted, and visualizing
 the oscillation (\emph{standard computational procedure}).
 In this procedure, computational methods and the engineering notion of a
 \emph{transient process} were combined to study oscillations.

 At the end of the 19th century Poincare considered Newtonian dynamics of the three body problem,
 and revealed the possibility of more complicated behaviors of orbits
 ``{\it so tangled that I cannot even begin to draw them}''.
 He arrived at the conclusion that ``{\it it may happen that small differences in the initial positions
  may lead to enormous differences in the final phenomena}''.
 Further analyses and visualizations of such complicated ``chaotic'' systems became possible
 in the middle of the 20th century after the appearance of powerful computational tools.

 An oscillation can generally be easily numerically localized if the initial data from its open
 neighborhood in the phase space (with the exception of a minor set of points) lead to
 a long-term behavior that approaches the oscillation.
 From a computational perspective, such an oscillation (or set of oscillations)
 is called an attractor, and its attracting set is called the basin of attraction
 (i.e., a set of initial data for which the trajectories tend to the attractor).

 The first well-known example of a visualization of chaotic behavior
 in a dynamical system is from the work of Lorenz \cite{Lorenz-1963}.
 It corresponds to the excitation of chaotic oscillations
 from unstable equilibria, and could have been found using the standard computational procedure
 (see Fig.~\ref{fig:lorenz:attr:se}).
 Later, various self-excited chaotic attractors
 were discovered in many continuous and discrete systems
 (see, e.g., \cite{Rossler-1976,Henon-1976,ChuaKM-1986,Sprott-1994,Celikovsky-1994,ChenU-1999,LuChen-2002}).

\begin{figure}[!ht]
 \centering
 \subfloat[
 {\scriptsize Initial data near the equilibrium $S_0$}
 ] {
 \label{fig:lorenz:attr:se0}
 \includegraphics[width=0.3\textwidth]{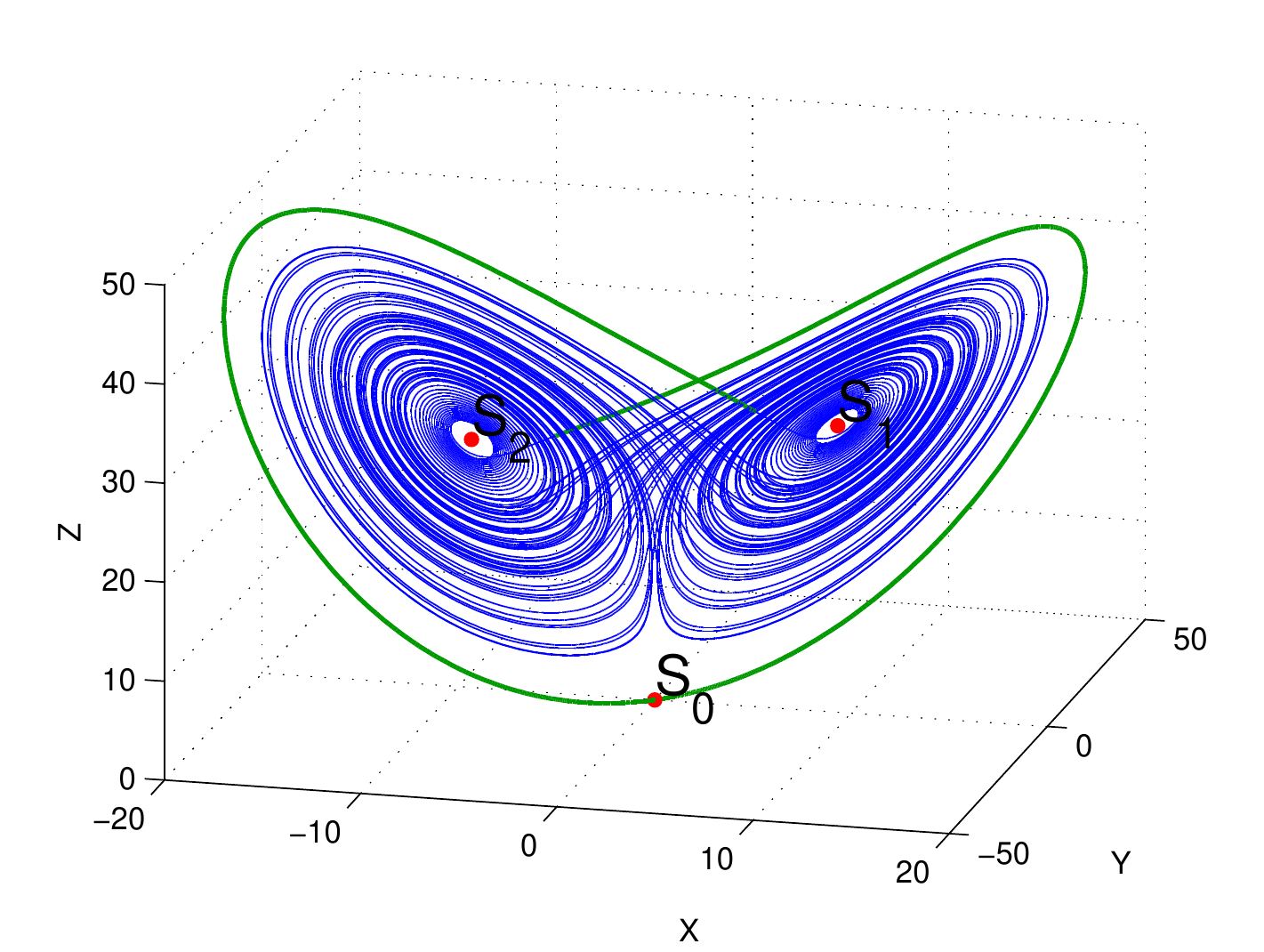}
 }~
 \subfloat[
 {\scriptsize Initial data near the equilibrium $S_1$}
 ] {
 \label{fig:lorenz:attr:se1}
 \includegraphics[width=0.3\textwidth]{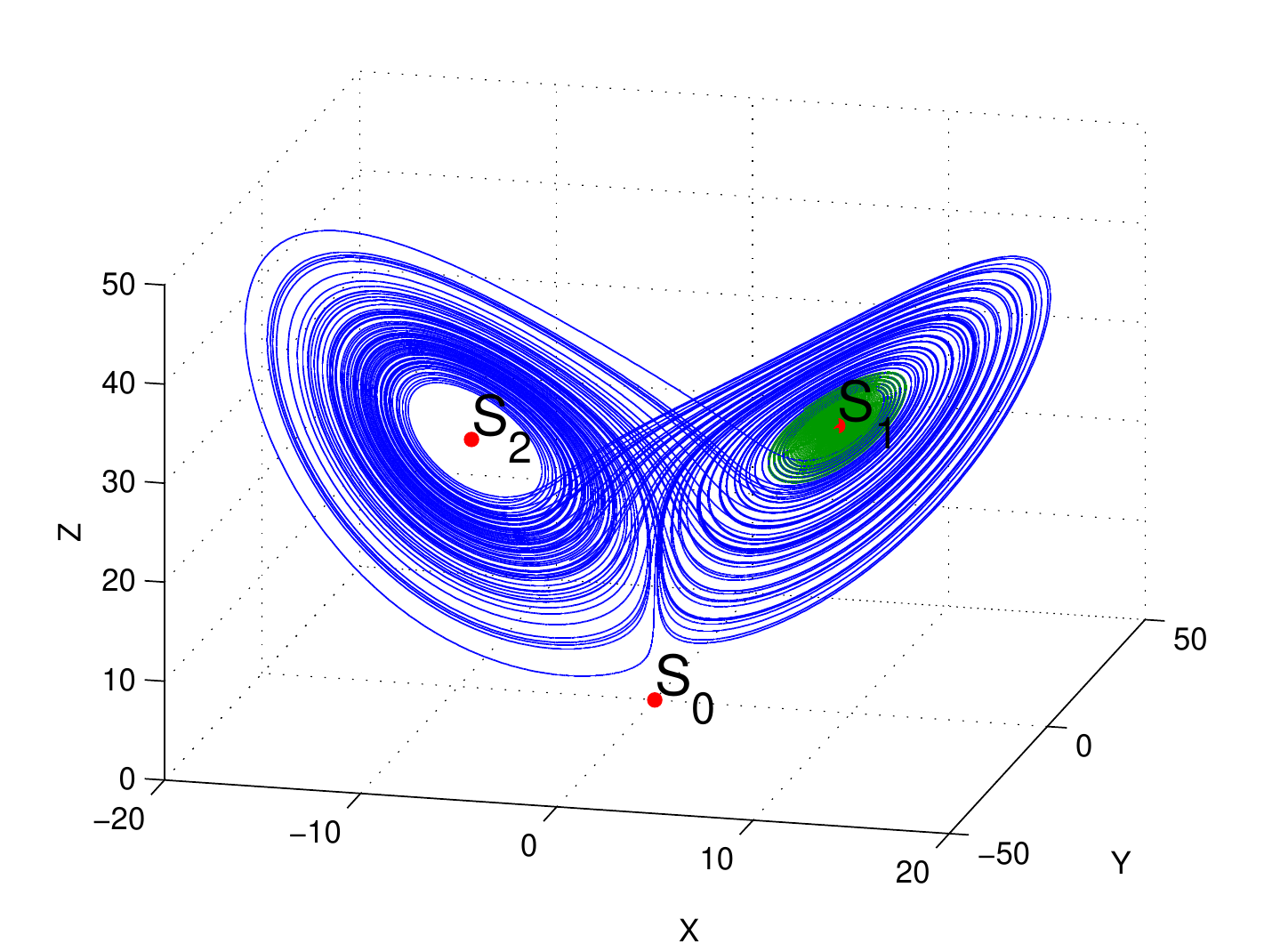}
 }~
 \subfloat[
 {\scriptsize Initial data near the equilibrium $S_2$}
 ] {
 \label{fig:lorenz:attr:se2}
 \includegraphics[width=0.3\textwidth]{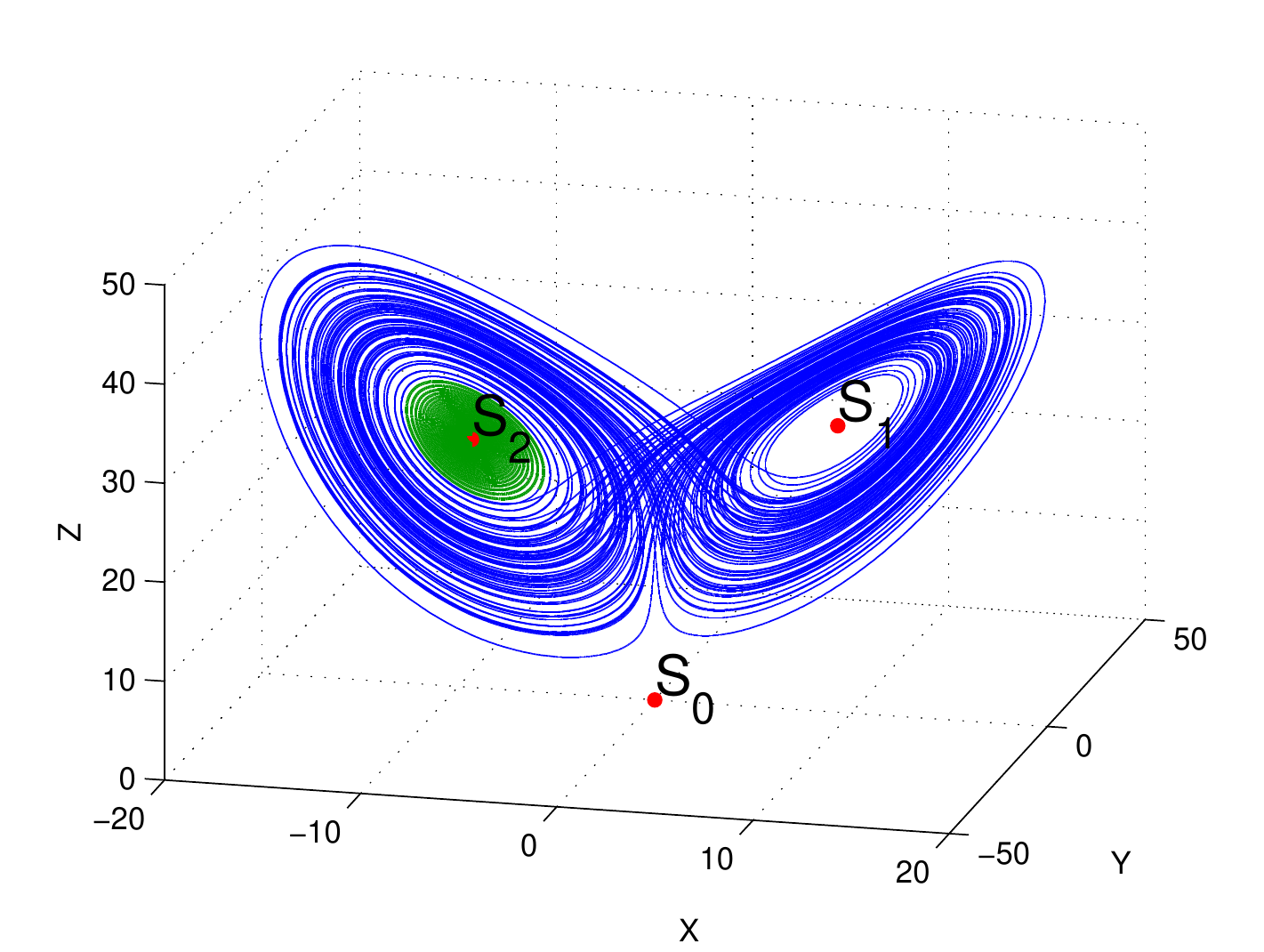}
 }
 \caption{
 Numerical visualization of
 the classical, self-excited, chaotic attractor in the Lorenz system
 $
 \dot x= 10(y-x),
 \dot y= 28 x-y-xz,
 \dot z=-8/3 z+xy
 $
 by the trajectories that start in small neighborhoods of the unstable equilibria $S_{0,1,2}$.
 Here the separation of the trajectory into transition process (green)
 and approximation of attractor (blue) is rough.
 }
 \label{fig:lorenz:attr:se}
\end{figure}

The study of an autonomous (unperturbed) system typically begins with an analysis of the equilibria,
which are easily found numerically or analytically.
Therefore, from a computational perspective, it is natural to suggest the following classification
of attractors \cite{KuznetsovLV-2010-IFAC,LeonovKV-2011-PLA,LeonovKV-2012-PhysD,LeonovK-2013-IJBC},
which is based on the simplicity of finding their basins of attraction in the phase space:

\begin{definition}
\cite{KuznetsovLV-2010-IFAC,LeonovKV-2011-PLA,LeonovKV-2012-PhysD,LeonovK-2013-IJBC}
 An attractor is called a \emph{self-excited attractor}
 if its basin of attraction
 intersects with any open neighborhood of a stationary state (an equilibrium),
 otherwise it is called a \emph{hidden attractor}.
\end{definition}


The basin of attraction for a hidden attractor is not connected with any equilibrium.
For example, hidden attractors are attractors in systems
with no equilibria or with only one stable equilibrium
(a special case of the multistability: coexistence of attractors in multistable systems).
Note that multistability can be inconvenient in various practical applications
(see, for example, discussions on problems related
to the synchronization of coupled multistable systems
in \cite{Kapitaniak-1992,Kapitaniak-1996,PisarchikF-2014,KuznetsovL-2014-IFACWC}).
Coexisting self-excited attractors can be found using
the standard computational procedure\footnote{
We have not discussed such possible computational difficulties as caused by Wada and riddled basins},
whereas there is no standard way of predicting the existence
or coexistence of hidden attractors in a system.

Hidden attractors arise in connection with various
fundamental problems and applied models.
The problem of analyzing hidden periodic oscillations first
arose in the second part of Hilbert's 16th problem (1900),
which considered the number and mutual disposition of limit cycles
in two-dimensional polynomial systems \cite{Hilbert-1901}.
The first nontrivial results were obtained by Bautin (see, e.g., \cite{Bautin-1952}),
which were devoted to the theoretical construction
of three nested limit cycles around one equilibrium in quadratic systems.
Bautin's method can only be used to construct nested, small-amplitude limit cycles,
which can hardly be visualized.
However, recently an analytical approach has been developed, which can be used to
effectively visualize nested, normal amplitude limit cycles in quadratic systems
\cite{LeonovK-RCD-2010,LeonovK-2013-IJBC,KuznetsovKL-2013-DEDS}.

\begin{figure}[!ht]
 \centering
 \subfloat[
 {\scriptsize }
 ] {
 \label{fig:1-lim-cycle}
 \includegraphics[width=0.5\textwidth]{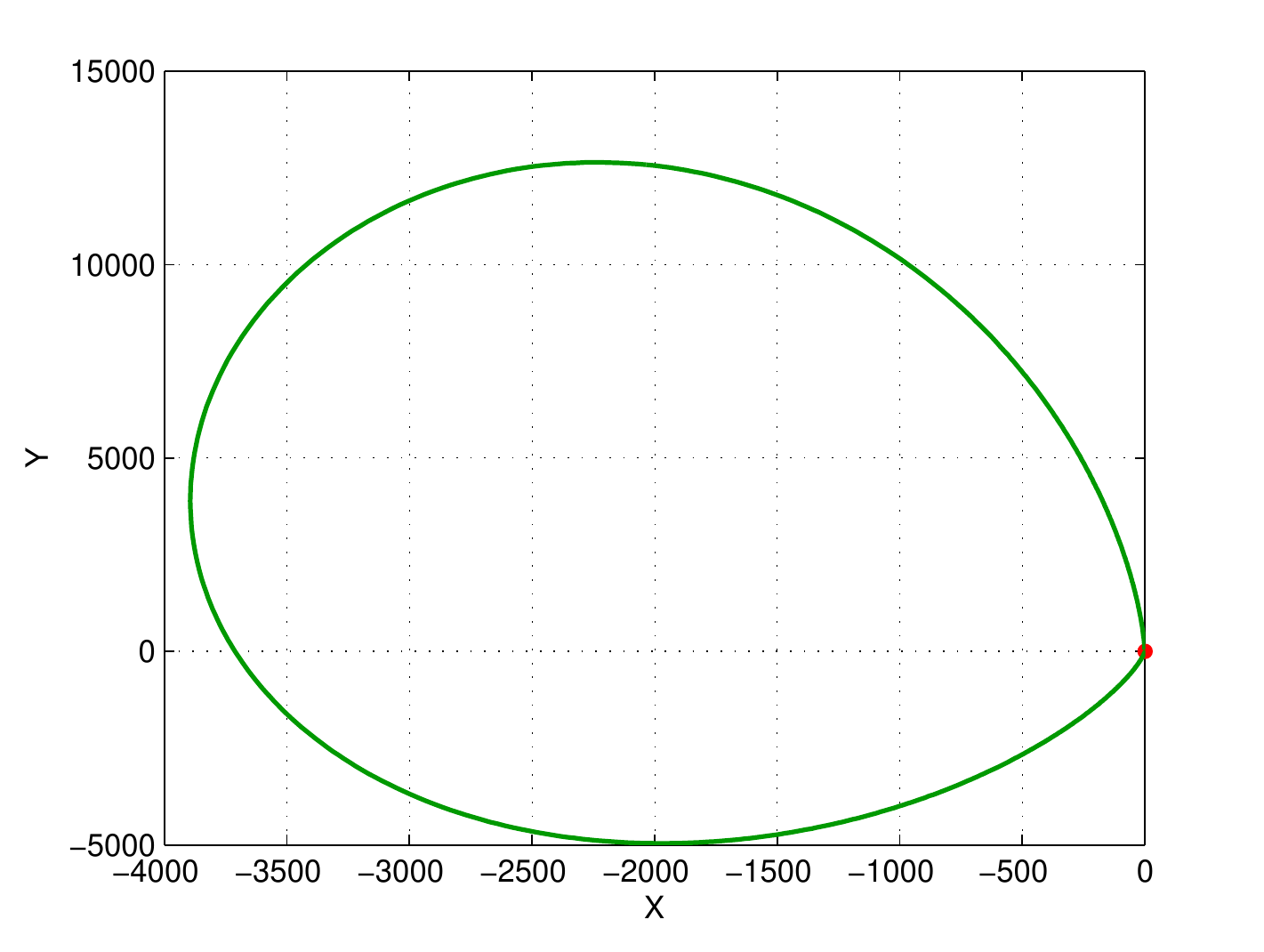}
 }~
 \subfloat[
 {\scriptsize }
 ] {
 \label{fig:3-lim-cycle}
 \includegraphics[width=0.5\textwidth]{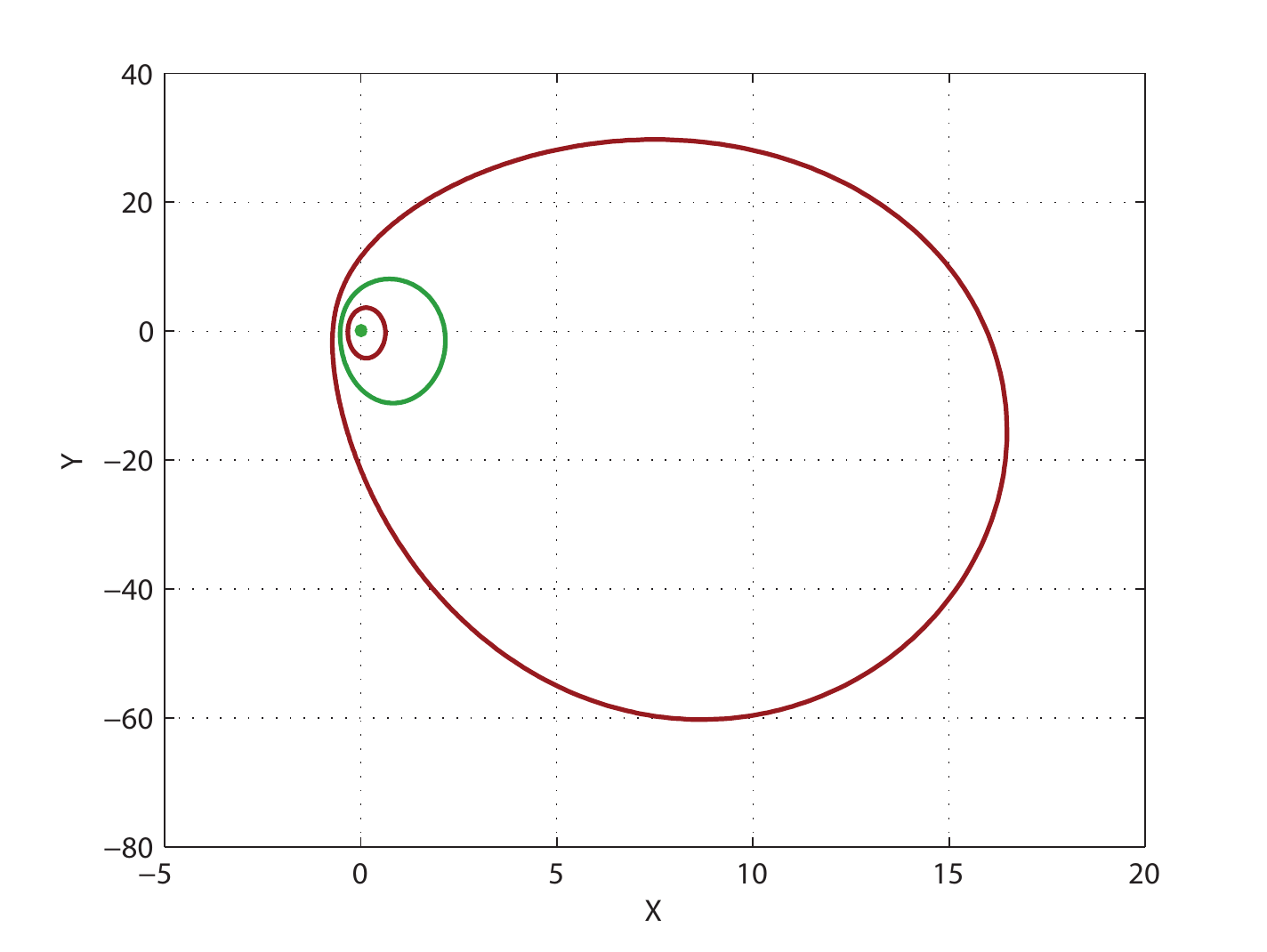}
 }
 \caption{
 Visualization of four limit cycles (green represents stable and red represents unstable)
 in a two-dimensional polynomial quadratic system
 $
 \dot{x} = -(a_1 x^2 + b_1 x y + c_1 y^2 + \alpha_1 x + \beta_1 y),
 \dot{y} = -(a_2 x^2 + b_2 x y + c_2 y^2 + \alpha_2 x + \beta_2 y),
 $
 for the coefficients $a_1 = b_1 = \beta_1 = -1$, $c_1 = \alpha_1 = 0$,
 $b_2 = -2.2$, and $c_2 = -0.7, a_2 = 10$, $\alpha_2 = 72.7778$,
 and $\beta_2 = -0.0015$.
 Localization of three nested limit cycles around the stable zero point (green dot) and
 one limit cycle to the left of the straight line $x=-1$.
 }
 \label{fig:4-lim-cycles}
\end{figure}

Later, in the 1950s-1960s, studies of the well-known
Markus-Yamabe's \cite{MarkusY-1960},
Aizerman's \cite{Aizerman-1949}, and Kalman's \cite{Kalman-1957}
conjectures on absolute stability
led to the discovery of the possible coexistence of
a hidden periodic oscillation and a unique stable stationary point
in automatic control systems
(see \cite{Pliss-1958,Fitts-1966,Barabanov-1988,BernatL-1996,BraginVKL-2011,
LeonovK-2011-DAN,LeonovK-2011-IFAC,KuznetsovLS-2011-IFAC};
the corresponding discrete examples were considered in \cite{Alli-Oke-2012-cu}).

The Rabinovich system \cite{Rabinovich-1978} and the Glukhovsky-Dolghansky system \cite{GlukhovskyD-1980}
are among the first known chaotic systems that have hidden chaotic attractors  \cite{LeonovKM-2015-CNSNS,KuznetsovLM-2015}.
The first one describes the interaction of plasma waves and was considered in 1978 by
Rabinovich \cite{Rabinovich-1978,PikovskiRT-1978}
Another is a model of convective fluid motion and was considered in 1980
by Glukhovsky and Dolghansky \cite{GlukhovskyD-1980}
(which we consider in the remainder of this paper).

Hidden oscillations appear naturally in systems without equilibria, describing
various mechanical and electromechanical models with rotation, and electrical circuits with cylindrical phase space.
One of the first examples is from a 1902 paper \cite{Sommerfeld-1902}
in which Zommerfield analyzed the vibrations caused by a motor driving an unbalanced weight
and discovered the so-called Zommerfield effect (see, e.g., \cite{BlekhmanIF-2007,Eckert-2013}).
Another well-known chaotic system without equilibria
is the Nos\`{e}--Hoover oscillator \cite{Nose-1984,Hoover-1985}
(see also the corresponding Sprott system, which was
discovered independently \cite{Sprott-1994,SprottHH-2014}).
In 2001, a hidden chaotic attractor was reported in a power system with no equilibria \cite{Venkatasubramanian-2001}
(and references within).


After the idea of a ``hidden attractor'' was introduced and
the first hidden Chua attractor was discovered
\cite{LeonovK-2009-PhysCon,KuznetsovLV-2010-IFAC,KuznetsovVLS-2011,LeonovKV-2011-PLA,KuznetsovKLV-2011-ICINCO,LeonovKV-2012-PhysD,KuznetsovKLV-2013},
hidden attractors have received much attention.
Results on the study of hidden attractors were presented
in a number of invited survey and plenary lectures
at various international conferences\footnote{
X Int. Workshop on
Stability and Oscillations of Nonlinear Control Systems (Russia, 2008),
Physics and Control \cite{LeonovK-2009-PhysCon} (Italy, 2009),
3rd International Conference on Dynamics, Vibration and Control
(Hangzhou, China, 2010),
IFAC 18th World Congress \cite{LeonovK-2011-IFAC} (Italy, 2011),
IEEE 5th Int. Workshop on Chaos-Fractals Theories and Applications \cite{LeonovK-2012-IEEE}
(Dalian, China, 2012),
International Conference on Dynamical Systems and Applications (Ukraine, 2012),
Nostradamus (Czech Republic, 2013) \cite{LeonovK-2013-AISC} and others.}.
In 2012,
an invited comprehensive survey on hidden attractors was prepared
for the International Journal of Bifurcation and Chaos \cite{LeonovK-2013-IJBC}.

Many researchers are currently
studying hidden attractors.
Hidden periodic oscillations and hidden chaotic attractors
have been studied in models such as phase-locked loops
\cite{KuznetsovLYY-2014-IFAC,KuznetsovKLNYY-2015-ISCAS},
Costas loops \cite{BestKKLYY-2015-ACC}, drilling systems
\cite{KiselevaKLN-2012-IEEE,LeonovKKSZ-2014},
DC-DC converters \cite{ZhusubaliyevM-2015-HA},
aircraft control systems \cite{AndrievskyKLP-2013-IFAC},
launcher stabilization systems \cite{AndrievskyKLS-2013-IFAC},
plasma waves interaction \cite{KuznetsovLM-2015},
convective fluid motion \cite{LeonovKM-2015-CNSNS},
and many others models (see, e.g., \cite{SprottWC-2013,WangC-2013,SharmaSPKL-2015,DangLBW-2015-HA,KuznetsovKMS-2015-HA,PhamVJWV-2014-HA,PhamJVWG-2014-HA,WeiWL-2014-HA,LiSprott-2014-HA,PhamRFF-2014-HA,WeiML-2014-HA,
PhamVVLV-2015-HA,ChenYB-2015-HA,ChenLYBXW-2015-HA,WeiZWY-2015-HA,BurkinK-2014-HA,WeiZ-2014-HA,LiZY-2014-HA,ZhaoLD-2014-HA,LaoSJS-2014-HA,ChaudhuriP-2014-HA,
PhamVJW-2014,PhamVJWW-2014,KingniJSW-2014,LiSprott-2014-PLA-cu,MolaieJSG-2013,JafariSPGJ-2014,JafariS-2013-cu}).

Similar to autonomous systems,
when analyzing and visualizing chaotic behaviors of nonautonomous systems,
we can consider the extended phase space
and introduce various notions of attractors
(see, e.g., \cite{ChebanKS-2002,KloedenM-2011}).
Alternatively, we can regard time $t$ as a phase space variable that obeys the equation $\dot t=1$.
For systems that are periodic in time, we can also introduce a cylindrical phase space and consider
the behavior of trajectories on a Poincare section.

The consideration of system equilibria and the notions of
self-excited and hidden attractors
are natural for autonomous systems,
because their equilibria can be easily found
analytically or numerically.
However, we may use other objects to construct transient processes that
lead to the discovery of chaotic sets.
These objects can be constructed for the considered system or its modifications
(i.e., instead of analyzing the scenario of the system
transiting into chaos,
we can synthesis a new transition scenario).
For example, we can use perpetual points \cite{Prasad-2015}
or the equilibria of the complexified system \cite{PhamJVWG-2014-HA}.
A periodic solution or homoclinic trajectory can be used in a similar way
(some examples of theoretical studies can be found in
\cite{CartwrightL-1945, Levinson-1949, Melnikov-1963, Shilnikov-1965};
however the presence of chaotic behavior in the considered examples
may not imply the existence of a chaotic attractor,
which can be numerically visualized using the standard computational procedure).

For nonautonomous systems, depending on the physical problem statement,
the notion of self-excited and hidden attractors can be introduced
with respect to either the stationary states ($x(t)\equiv x_0\ \forall t$)
of the considered  nonautonomous system,
the stationary points of the system at fixed initial time $t=t_0$,
or the corresponding system without time-varying excitations.
If the discrete dynamics of the system are considered on a Poincare section,
then we can also use stationary or periodic points on the section that corresponds
to a periodic orbit of the system (the consideration of periodic orbits is also natural
for discrete systems).

In the following, we consider an example of a nonautonomous system
(a forced Duffing oscillator),
so that we can visualize the chaotic behavior.
The classical example of a self-excited chaotic attractor (Fig.~\ref{fig:uead})
in a Duffing system $\ddot x+0.05\dot x+x^3=7.5\cos(t)$
was numerically constructed by Ueda in 1961,
but it become well-known much later \cite{UedaAH-1973}.
To construct a self-excited chaotic attractor in this system,
we use a transient process from the zero equilibrium
of the unperturbed autonomous system (i.e., without $\cos(t)$)
to the attractor (Fig.~\ref{fig:uead}) in the forced system.

\begin{figure}[!ht]
 \centering
 \includegraphics[width=0.5\textwidth]{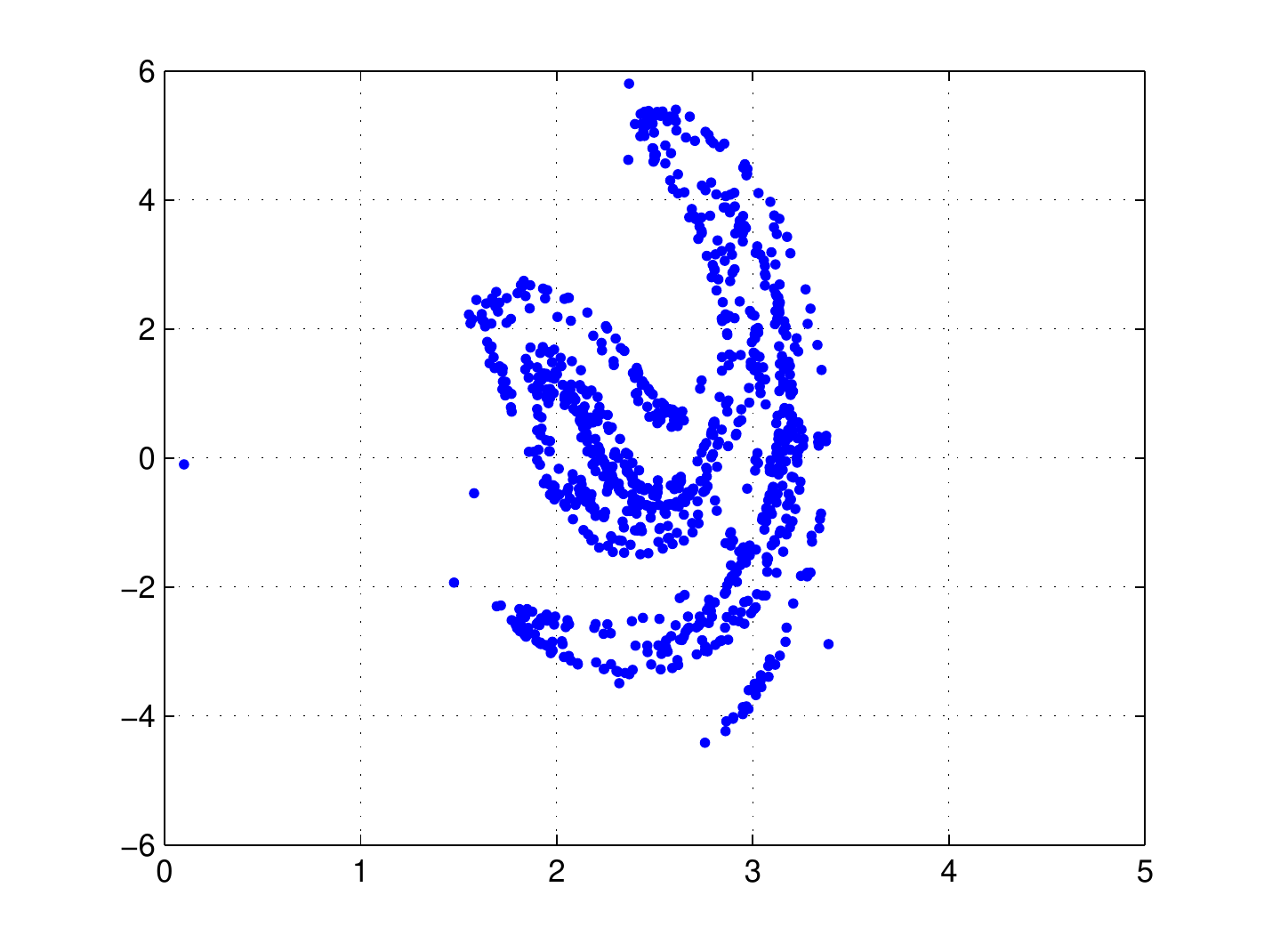}
 \caption{
   Forced Duffing oscillator: $\ddot x+0.05\dot x+x^3=7.5\cos(t)$.
   The $(x,\dot x)$ plane is mapped into itself
   by following the trajectory for time $0 \leq t \leq 2\pi$.
   After a transition process a trajectory from the vicinity of the
   zero stationary point of the
   unperturbed Duffing oscillator (without $7.5\cos(t)$) visualizes
   a self-excited chaotic attractor in the forced oscillator.
 }
 \label{fig:uead}
\end{figure}

Note that if the attracting domain is the
whole state space, then the attractor can be visualized by
any trajectory and the only difference between computations
is the timing of the transient process.

\section{A Lorenz-like system}
\label{intro:lorenz}

Consider a three-dimensional Lorenz-like system
\begin{equation}
\begin{cases}
 \dot{x} $ = $ - \sigma(x-y) - ayz\\
 \dot{y} $ = $ rx-y-xz \\
 \dot{z} $ = $ -bz+xy
\end{cases}\label{sys:lorenz-general}
\end{equation}
For  $a = 0$, system \eqref{sys:lorenz-general} coincides
with  the classical Lorenz system \cite{Lorenz-1963}.
For $\sigma > ar$ and $b = 1$ after a linear change of variables \cite{LeonovB-1992}
\begin{equation}
x \rightarrow x, \quad y \rightarrow \frac{C}{\sigma - ar} z, \quad
z \rightarrow r - \frac{C}{\sigma - ar} y
\label{sys:lorenz-general:change_var}
\end{equation}
system~\eqref{sys:lorenz-general} takes the following form
\begin{equation}
\begin{cases}
 \dot{x} $ = $ -\sigma x + Cz + A yz \\
 \dot{y} $ = $  R_a - y - xz \\
 \dot{z} $ = $ -z + xy
\end{cases}
\label{sys:conv_fluid}
\end{equation}
with
\begin{equation}
 C > 0, \quad
 R_a = \frac{r (\sigma - ar)}{C} > 0, \quad
 A = \frac{C^2 a}{(\sigma - ar)^2} > 0.
 \label{sys:conv_fluid:param}
\end{equation}
System \eqref{sys:conv_fluid} was suggested by Glukhovsky and
Dolghansky~\cite{GlukhovskyD-1980}, and describes convective
fluid motion in an ellipsoidal rotating cavity,
which can be interpreted as one of the models of ocean flows
(see Appendix~\ref{appendix:phys-problem} for a description of this problem).

In \cite{LeonovB-1992}, system \eqref{sys:lorenz-general} 
was obtained as a linear transformation of the Rabinovich system \cite{Rabinovich-1978}.
It describes interactions between waves in plasma
\cite{Rabinovich-1978,PikovskiRT-1978}. 
Additionally, system \eqref{sys:lorenz-general}
describes the following physical processes \cite{LeonovB-1992}:
a rigid body rotation in a resisting medium,
the forced motion of a gyrostat,
a convective motion in harmonically oscillating horizontal fluid layer,
and Kolmogorov flow.
Systems \eqref{sys:lorenz-general} and \eqref{sys:conv_fluid}
are interesting because of the discovery of chaotic attractors
in their phase spaces.
Moreover, system \eqref{sys:lorenz-general} was used to describe
the specific mechanism of transition to chaos
in low-dimensional dynamical systems (gluing bifurcations) \cite{AkhtanovZZ-2013}.

For system \eqref{sys:lorenz-general} with $\sigma = \pm a r$, \cite{EvtimovPS-2000}
contains a detailed analysis of
equilibria stability and the asymptotic behavior of trajectories, and a derivation of
the parameter values for which the system is integrable.
Other researchers have also considered the analytical and numerical analysis of some
extensions of system \eqref{sys:lorenz-general} \cite{PanchevSV-2007,LiaoTA-2010}.

Further, following \cite{GlukhovskyD-1980}, we consider system \eqref{sys:lorenz-general}
with
\[
  b=1, \quad a>0, \quad r>0, \quad \sigma>ar.
\]

\subsection{Classical scenario of the transition to chaos}

For the Lorenz system \cite{Sparrow-1982}, 
the following classical scenario of transition to chaos is known.
Suppose that $\sigma$ and $b$ are fixed
(let us consider the classical parameters $\sigma = 10$, $b = 8/3$),
and that $r$ varies.
Then, as $r$ increases,
the phase space of the Lorenz system is subject to the following sequence of bifurcations.
For $0<r<1$, there is globally asymptotically stable zero equilibrium $S_0$.
For $r>1$, equilibrium $S_0$ is a saddle,
and a pair of symmetric equilibria $S_{1,2}$ appears.
For  $1 < r < r_{h} \approx 13.9$, the separatrices $\Gamma_{1,2}$ of equilibria $S_0$
are attracted to the equilibria $S_{1,2}$.
For $r = r_{h} \approx 13.9$, the separatrices $\Gamma_{1,2}$
form two homoclinic trajectories of equilibria $S_0$  (homoclinic butterfly).
For  $r_h < r < r_c \approx 24.06$,
the separatrices $\Gamma_1$ and $\Gamma_2$ tend to $S_2$ and $S_1$, respectively.
For  $r_c < r$, the separatrices $\Gamma_{1,2}$ are attracted to a self-excited attractor
(see, e.g., \cite{Sparrow-1982,YuC-2004}).
For $r>r_a$, the equilibria $S_{1,2}$ become unstable.
Finally, $r=28$ corresponds to the classical self-excited Lorenz attractor
(see Fig.~\eqref{fig:lorenz:attr:se}).


Furthermore, it has been shown that system \eqref{sys:lorenz-general} follows
a similar scenario of transition to chaos.
However, a substantial distinction of
this scenario is the presence of hidden chaotic attractor
in the phase space of system \eqref{sys:lorenz-general}
for certain parameters values \cite{LeonovKM-2015-CNSNS}.

Let us determine the stationary points of system \eqref{sys:lorenz-general}.
We can show that for positive parameters,
if $r < 1$, system \eqref{sys:lorenz-general} has a unique equilibrium
${\bf \rm S_0} = (0,0,0)$,
which is globally asymptotically Lyapunov stable \cite{BoichenkoLR-2005}.
If $r > 1$, then \eqref{sys:lorenz-general} possesses three equilibria:
a saddle ${\bf \rm S_0} = (0,0,0)$ and symmetric (with respect to $z = 0$) equilibria
\begin{equation}
 {\bf \rm S_{1,2}} = (\pm x_1, \pm y_1, z_1), \label{eq:equil_s12}
\end{equation}
where
$$
 x_1 = \frac{\sigma \sqrt{\xi}}{\sigma + a \xi}, \quad
 y_1 = \sqrt{\xi}, \quad
 z_1 = \frac{\sigma \xi}{\sigma 1 + a \xi},
$$
and 
$$
 \xi = \frac{\sigma }{2 a^2} \left[ a (r-2) -
 \sigma + \sqrt{(\sigma - ar)^2 + 4a\sigma} \right].
$$
The characteristic polynomial of the Jacobian matrix of system \eqref{sys:lorenz-general}
at the point $(x,\,y,\,z)$ has the form
\[
	\chi(x,\,y,\,z) = \lambda^3 +  p_1(x,\,y,\,z) \lambda^2 +
	p_2(x,\,y,\,z) \lambda + p_3(x,\,y,\,z),
\]
where
\begin{align*}
& p_1(x,\,y,\,z) =  \sigma + 2, \quad
p_2(x,\,y,\,z) = x^2 + a y^2 - a z^2 +
	(\sigma + a r)z - r \sigma + 2 \sigma + 1, \\
& p_3(x,\,y,\,z) = \sigma x^2 + a y^2 -
a z^2 - 2 a x y z + (\sigma  + a r) x y +
(\sigma  + a r) z - r \sigma + \sigma.
\end{align*}
Following \cite{GlukhovskyD-1980}, we let $\sigma = 4$
and define the stability domain of equilibria $S_{1,2}$.
Using the Routh-Hurwitz criterion we can obtain
the following (see Appendix~\ref{appendix:stability}).
\begin{proposition}\label{proposition:gen_lorenz:stability}
The equilibria $S_{1,2}$ are stable if
\begin{equation}\label{eq:ineq:rA2}
 8 a^2 r^3 + a(7 a - 64) r^2 + (288 a + 128) r + 256 a - 2048 < 0.
\end{equation}
\end{proposition}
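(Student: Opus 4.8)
The plan is to apply the Routh--Hurwitz criterion to the characteristic cubic $\chi(x_1,y_1,z_1)=\lambda^3+p_1\lambda^2+p_2\lambda+p_3$ evaluated at $S_{1,2}$. Since every monomial occurring in $p_1,p_2,p_3$ is even in $(x,y)$ or depends on $z$ alone, the two symmetric equilibria $S_1=(x_1,y_1,z_1)$ and $S_2=(-x_1,-y_1,z_1)$ yield the same characteristic polynomial, so it suffices to treat one of them. For the monic cubic the Routh--Hurwitz conditions for all roots to lie in the open left half-plane are $p_1>0$, $p_3>0$, and $p_1p_2-p_3>0$. The first is immediate, as $p_1=\sigma+2>0$, so the entire content reduces to the remaining two inequalities, and I expect the decisive one to be $p_1p_2-p_3>0$.

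First I would substitute the equilibrium coordinates into $p_2$ and $p_3$. Rather than insert the radical expression for $\xi$ directly, I would use the equilibrium relations $z_1=x_1y_1$, $y_1^2=\xi$, $x_1=\sigma y_1/(\sigma+a\xi)$ and $z_1=\sigma\xi/(\sigma+a\xi)$, which turn $p_2$ and $p_3$ into rational functions of $\xi$ with the common positive denominator $(\sigma+a\xi)^2$. The crucial auxiliary fact is the quadratic equation satisfied by $\xi$, obtained by eliminating $x_1,y_1,z_1$ from the equilibrium equations:
\[
a^2\xi^2+\sigma(\sigma+2a-ar)\xi-\sigma^2(r-1)=0 .
\]
One checks that its positive root reproduces the stated formula for $\xi$, the radical $\sqrt{(\sigma-ar)^2+4a\sigma}$ coming from its discriminant; this relation lets me repeatedly lower the degree in $\xi$ and, in particular, remove the square root.

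Next I would form $p_1p_2-p_3$, clear the positive denominator $(\sigma+a\xi)^2$, and reduce the resulting polynomial numerator modulo the quadratic above. The key simplification I am counting on is that after this reduction the leftover linear-in-$\xi$ term vanishes identically, so that the $\xi$-dependence (equivalently, the radical) cancels, leaving a polynomial purely in $r$ and $a$; setting $\sigma=4$, this polynomial is, up to a positive factor, the negative of $8a^2r^3+a(7a-64)r^2+(288a+128)r+256a-2048$, whence $p_1p_2-p_3>0$ is equivalent to \eqref{eq:ineq:rA2}. In parallel I would verify that $p_3>0$ holds throughout the existence range $r>1$ of $S_{1,2}$ (a stable configuration forces the product of eigenvalues to be negative, i.e. $p_3>0$), so that \eqref{eq:ineq:rA2} alone is sufficient for stability.

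The main obstacle is the algebraic reduction in the third step: carrying the numerator of $p_1p_2-p_3$ through the division by the quadratic relation, confirming that the residual linear-in-$\xi$ coefficient is identically zero, and matching the surviving polynomial with the cubic in \eqref{eq:ineq:rA2}. This is elementary but heavy bookkeeping, best organised by keeping $\sigma$ symbolic until the final substitution $\sigma=4$ and by exploiting the companion identity $(\sigma+a\xi)^2=\sigma r(\sigma+a\xi)-\sigma^2\xi$ to handle the denominators; by comparison the sign verification for $p_3$ is routine.
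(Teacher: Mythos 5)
Your proposal is correct and follows essentially the same route as the paper: Routh--Hurwitz applied to the cubic at $S_{1,2}$, with $p_2>0$ and $p_3>0$ checked to hold automatically for $r>1$, $\sigma>ar$, so that the decisive condition is $p_1p_2-p_3>0$, which after eliminating the radical reduces (at $\sigma=4$) to the stated cubic in $r$. The only cosmetic difference is that you remove the radical by reducing modulo the quadratic satisfied by $\xi$, whereas the paper squares the inequality involving $\sqrt{(\sigma-ar)^2+4\sigma a}$ directly through a chain of equivalences (using $\sigma>2$ to ensure the relevant coefficient $C_2$ is positive).
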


\begin{figure}[ht!]
 \centering
 \includegraphics[width=0.55\textwidth]{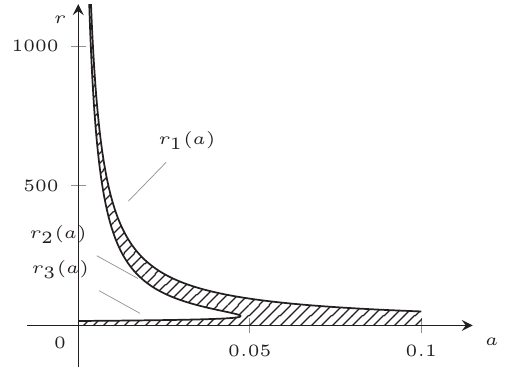}
 \caption{The stability domain of equilibria $S_{1,2}$
  for  $\sigma = 4$.}
 \label{fig:gen_lorenz:regimes}
\end{figure}

The discriminant of the left-hand side of \eqref{eq:ineq:rA2} has only
one positive real root, $a^* \approx 0.04735$.
So the roots of the polynomial in \eqref{eq:ineq:rA2} are as follows.
For $0 < a < a^*$, there are three real roots $r_1 (a) > r_2 (a) > r_3 (a)$;
for $a = a^*$, there are two real roots: $r_1 (a)$ and $r_2(a) = r_3(a)$;
for  $a > a^*$, there is one real root $r_1 (a)$.
Thus, for  $0 < a < a^*$, the equilibria $S_{1,2}$ are stable
for  $r < r_3 (a)$ and for  $r_2 (a) < r < r_1 (a)$;
and for  $a > a^*$ the equilibria $S_{1,2}$ are stable for $r < r_1 (a)$
(see Fig. \ref{fig:gen_lorenz:regimes}).

\begin{figure}[!ht]
 \centering
 \includegraphics[width=0.7\textwidth]{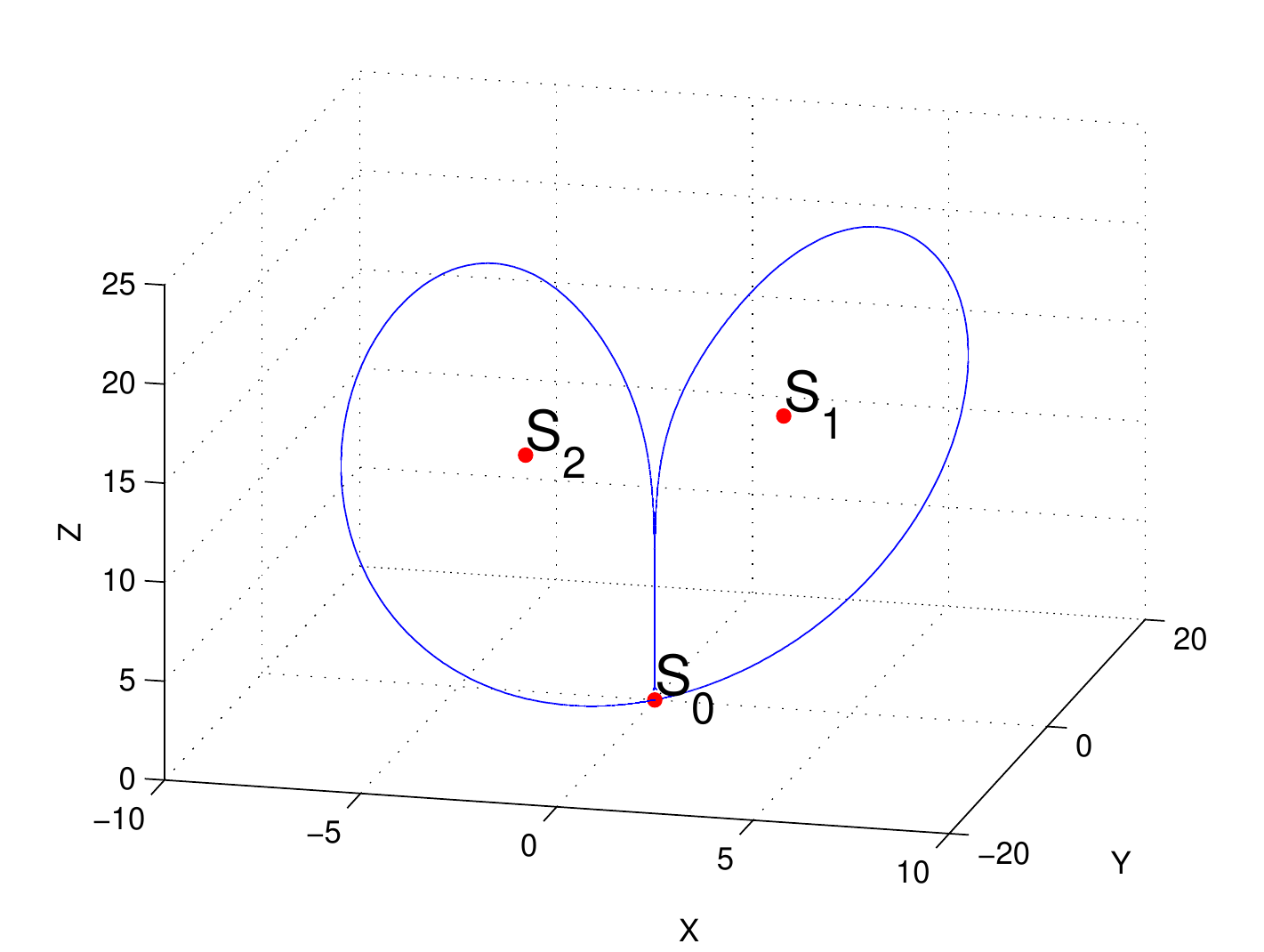}
 \caption{Separatrices of the saddle $S_0 = (0,0,0)$ of system
 \eqref{sys:lorenz-general} for  $\sigma = 4, \, a = 0.0052...$, and $r \approx 7.443$.}
 \label{fig:separatrix}
\end{figure}

Consider the problem of the existence of a homoclinic orbit,
which is important in bifurcation theory
and in scenarios of transition to chaos (see, e.g., \cite{AfraimovichGLShT-2014}).
For \eqref{sys:lorenz-general} and \eqref{sys:conv_fluid},
we can prove the existence of homoclinic trajectories for the zero saddle equilibrium $S_0$
using the fishing principle
\cite{Leonov-2012-PLA,Leonov-2013-IJBC,Leonov-2014-ND,LeonovK-2015-AMC}.
The fishing principle is based on the construction of
a special two-dimensional manifold such that a separatrix of a saddle point intersects
or does not intersect the manifold for two different system parameter values.
Continuity implies the existence of some intermediate parameter value for which the
separatrix touches the manifold.
According to the construction, the separatrix must
touch a saddle point, so we can numerically localize the birth of a homoclinic orbit.
A rigorous description is given in Appendix~\ref{appendix:homo}.

For $\sigma = 4, a = 0.0052$, and  $r \approx 7.443$
we numerically obtain a homoclinic trajectory (see Fig.~\ref{fig:separatrix}).

We come now to the study of the limit behaviors of trajectories and attractors.
We introduce some rigorous notions of a dynamical system and attractor and discuss
the connection with the notions of self-excited
and hidden attractors from a computational perspective.

\section{Definitions of attractors}

\subsection{Dynamical systems and ordinary differential equations}

Consider an autonomous system of the differential equations
\begin{equation}
 \dot{{\bf \rm x}} = {\bf \rm f}({\bf \rm x}), \quad
 t \in \mathbb{R},~ {\bf \rm x} \in \mathbb{R}^n,
 \label{eq:ode}
\end{equation}
where ${\bf \rm f} : \mathbb{R}^n \to \mathbb{R}^n$ is a continuous
vector-function that satisfies {\it a local Lipschitz condition} in $\mathbb{R}^n$.
The Picard theorem (see, e.g., \cite{CoddingtonL-1955,Hartman-1964})
for a local Lipschitz condition on the function ${\bf \rm f}$ implies that,
for any ${\bf \rm x}_0 \in \mathbb{R}^n$, there exists a unique solution
${\bf \rm x}(t,{\bf \rm x}_0)$ to differential equation \eqref{eq:ode}
with the initial data ${\bf \rm x}(t_0,{\bf \rm x}_0) = {\bf \rm x}_0$,
which is given on a certain finite time interval: $t \in I \subset \mathbb{R}$.
The theorem regarding the continuous dependence on initial data
\cite{CoddingtonL-1955,Hartman-1964}\footnote{
Similar theorems on the existence, unicity,
and  continuous dependence on the initial data
for solutions of system with the discontinuous right-hand side
are considered in \cite{YakubovichLG-2004,KiselevaK-2015-Vestnik}.}
implies that the solution ${\bf \rm x}(t,{\bf \rm x}_0)$
continuously depends on ${\bf \rm x}_0$.

To study the limit behavior of trajectories and compute the limit values,
characterizing trajectories, we consider the solutions of \eqref{eq:ode} for
$t \to +\infty$ or $t \to \pm \infty$.
For arbitrary quadratic systems, the existence of solutions
for  $t \in [t_0,\, +\infty)$ does not generally imply the existence of solutions for  $t \in (-\infty,\, t_0]$
(see the classical one-dimensional example $\dot{x} = x^2$  or multidimensional
examples from the work on the completeness
of quadratic polynomial systems \cite{GingoldS-2011}).
It is known that if ${\bf \rm f}$ is continuously
differentiable (${\bf \rm f} \in C^1$), then ${\bf \rm f}$
is locally Lipschitz continuous in $\mathbb{R}^n$ (see, e.g., \cite{HirschSD-2004}).
Additionally, if ${\bf \rm f}$ is locally Lipschitz continuous,
 then for any ${\bf \rm x}_0 \in \mathbb{R}^n$
 the solution ${\bf \rm x}(\cdot, \, {\bf \rm x}_0):I \to \mathbb{R}^n$
 exists on maximal time interval $I = (t_{-}, \, t_{+}) \in \mathbb{R}$,
 where $-\infty \leq t_{-} < t_{+} \leq + \infty$.
 If $t_{+} < +\infty$, then $||{\bf \rm x}(t, {\bf \rm x}_0)|| \to \infty$
 for  $t \to t_{+}$, and if
 $t_{-} > -\infty$, then $||{\bf \rm x}(t, {\bf \rm x}_0)|| \to \infty$
 for  $t \to t_{-}$ (see, e.g., \cite{Teschl-2012}).
This implies that a solution of \eqref{eq:ode}
is continuous if it remains bounded.
For convenience, we introduce a set of time values
$\mathbb{T} \in \left\{\mathbb{R},\mathbb{R}_{+}\right\}$.
The existence and uniqueness of solutions of \eqref{sys:lorenz-general} for all $t \in \mathbb{T}$
can be provided, for example, by a {\it global Lipschitz condition}.
Another effective method for studying the boundedness
of solutions for all $t \in \mathbb{T}$ is to construct a Lyapunov function.

If the existence and uniqueness conditions for all $t \in \mathbb{T}$
are satisfied, then: 1) the solution of \eqref{eq:ode}
satisfies the group property (\cite{CoddingtonL-1955,Hartman-1964})
\begin{equation}\label{group_prop}
 {\bf \rm x}(t+s, {\bf \rm x}_0) = {\bf \rm x}(t,{\bf \rm x}(s,{\bf \rm x}_0)),
 \quad \forall ~ t, s \in \mathbb{T},
\end{equation}
and 2)
${\bf \rm x}(\cdot, \cdot) : \mathbb{T} \times
\mathbb{R}^n \to \mathbb{R}^n$ is a continuous mapping
according to the theorem of the continuous dependence of the
solution on the initial data.
Thus, if the solutions of \eqref{eq:ode} exist and satisfy
\eqref{group_prop} for all $t \in \mathbb{T}$,
the system generates a \emph{dynamical system} \cite{Birkhoff-1927} on the phase space
$(\mathbb{R}^n, ||\cdot||)$.
Here $||{\bf \rm x}|| = \sqrt{x_1^2 + \cdots + x_n^2}$ is
an Euclidean norm of the vector
${\bf \rm x} = (x_1, \ldots, x_n) \in \mathbb{R}^n$, which generates a metric on
$\mathbb{R}^n$.
We abbreviate {\it ``dynamical system generated by
a differential equation''} \, to {\it ``dynamical system''}.
Because the initial time is not important for dynamical systems,
without loss of generality we consider
\[
   x(t,x_0):\ x(0,x_0)=x_0.
\]

Consider system \eqref{sys:lorenz-general}.
Its right-hand side is continuously differentiable
in $\mathbb{R}^n$, which means that it is locally Lipschitz continuous in $\mathbb{R}^n$
(but not globally Lipschitz continuous).
Analogous with the results for the Lorenz system \cite{Meisters-1989,Coomes-1989},
we can prove that the
solutions of \eqref{sys:lorenz-general} exist for all $t \in \mathbb{R}$,
i.e. system \eqref{sys:lorenz-general} is invertible.
For this purpose, we can use the Lyapunov function (Appendix~\ref{appendix:dissip})
\begin{equation}\label{flgd}
 V(x,y,z) = \frac{1}{2} \left( x^2 + y^2 + (a+1)
 \left(z - \frac{\sigma + r}{a + 1} \right)^2 \right) \geq 0.
\end{equation}
Then, system \eqref{sys:lorenz-general} generates a dynamical system
and we can study its limit behavior and attractors.

\subsection{Classical definitions of attractors}
The notion of an attractor is connected with investigations of the limit
behavior of the trajectories of dynamical systems.
We define attractors as follows
\cite{Ladyzhenskaya-1987,Ladyzhenskaya-1991,BabinV-1992,Chueshov-1993,Temam-1997,Chueshov-2002-book,BoichenkoLR-2005,Leonov-2008}.

\begin{definition}\label{def:inv_set}
  A set $K$ is said to be \emph{positively invariant} for a dynamical system if
  $$
    {\bf \rm x}(t, K) \subset K, \ \forall t \geq 0,
  $$
 and to be \emph{invariant} if
  $$
    {\bf \rm x}(t, K) = K, \ \forall t \geq 0,
  $$
 where ${\bf \rm x}(t, K) = \left\{ {\bf \rm x}(t, {\bf \rm x}_0) ~ | ~
 {\bf \rm x}_0 \in K,\ t \geq 0 \right \}$.
\end{definition}

\begin{property}\label{property:local_attr_set}
Invariant set $K$ is said to be {\it locally attractive} for a dynamical system
if, for a certain $\varepsilon$-neighborhood $K(\varepsilon)$ of set $K$,
$$
 \lim_{t \to +\infty} \rho (K, {\bf \rm x}(t, {\bf \rm x}_0)) = 0, \quad \forall ~
 {\bf \rm x}_0 \in K(\varepsilon).
$$
Here $\rho(K, {\bf \rm x})$ is a distance from the point ${\bf \rm x}$
to the set $K$, defined as
$$
 \rho(K, {\bf \rm x}) = \inf_{{\bf \rm z} \in K} || {\bf \rm z} - {\bf \rm x}
||,
$$
and $K(\varepsilon)$ is a set of points ${\bf \rm x}$ for which
$\rho (K, {\bf \rm x}) < \varepsilon$.
\end{property}

\begin{property}\label{property:global_attr_set}
Invariant set $K$ is said to be
{\it globally attractive} for dynamical system if
$$
 \lim_{t \to +\infty} \rho (K, {\bf \rm x}(t, {\bf \rm x}_0)) = 0, \quad \forall ~
 {\bf \rm x}_0 \in \mathbb{R}^n.
$$
\end{property}

\begin{property}\label{property:unif_loc_attr_set}
Invariant set $K$ is said to be
{\it uniformly locally attractive} for a dynamical
system if for a certain $\varepsilon$-neighborhood
$K(\varepsilon)$,
any number $\delta > 0$, and any bounded set $B$,
there exists a number $t(\delta, B) > 0$ such that
$$
 {\bf \rm x}(t, B \cap K(\varepsilon)) \subset K(\delta),
 \quad \forall ~ t \geq t(\delta, B).
$$
Here
$$
 {\bf \rm x}(t, B \cap K(\varepsilon)) = \left\{{\bf \rm x}
 (t, {\bf \rm x}_0) ~|~ {\bf \rm x}_0 \in B \cap
 K(\varepsilon) \right\}.
$$
\end{property}

\begin{property}\label{property:unif_glob_attr_set}
Invariant set $K$ is said to be
{\it uniformly globally attractive} for a dynamical
 system if, for any number $\delta > 0$
and any bounded set $B \subset \mathbb{R}^n$,
there exists a number $t(\delta, B) > 0$ such that
$$
 {\bf \rm x}(t, B) \subset K(\delta), \quad \forall ~ t \geq t(\delta, B).
$$
\end{property}

\begin{definition}\label{def:attractor}
For a dynamical system, a bounded closed
invariant set $K$ is:
\begin{enumerate}[label=(\arabic*)]
 \item an \emph{attractor} if it is a locally attractive set
       (i.e., it satisfies Property~\ref{property:local_attr_set});
 \item a \emph{global attractor} if it is a globally attractive set
       (i.e., it satisfies Property~\ref{property:global_attr_set});
 \item a \emph{B-attractor} if it is a uniformly locally attractive set
       (i.e., it satisfies Property~\ref{property:unif_loc_attr_set}); or
 \item a \emph{global B-attractor} if it is a uniformly globally attractive set
       (i.e., it satisfies Property~\ref{property:unif_glob_attr_set}).
\end{enumerate}
\end{definition}

\begin{remark}\label{remark:minimal-attr}
In the definition of an attractor we assume closeness for the sake of uniqueness.
This is because the closure of a locally attractive invariant set $K$ is also a
locally attractive invariant set
(for example, consider an attractor with excluded one of the embedded unstable periodic orbits).
The closeness property is sometimes omitted from the attractor definition (see, e.g., \cite{Babin-2006}).
Additionally, the boundedness property is sometimes omitted (see, e.g., \cite{ChepyzhovG-1992}).
For example, a global attractor in a system describing a pendulum motion
is not bounded in the phase space $\mathbb{R}^2$ (but it is bounded in the cylindrical phase space).
Unbounded attractors are considered for nonautonomous systems in the extended phase space.
Note that if a dynamical system is defined for $t \in \mathbb{R}$,
then a locally attractive invariant set only contains the whole trajectories, i.e.
if $x_0 \in K$, then $x(t,x_0) \in K$ for $\forall t \in \mathbb{R}$ (see \cite{Chueshov-2002-book}).
\end{remark}
\smallskip

\begin{remark}\label{remark:minimal-attr}
The definition considered here implies that a global B-attractor
is also a global attractor.
Consequently, it is rational to introduce the notion of a
{\it minimal global attractor} (or {\it minimal attractor})
\cite{Chueshov-1993,Chueshov-2002-book}.
This is the smallest bounded closed invariant set that
possesses Property~\ref{property:global_attr_set}
(or Property~\ref{property:local_attr_set}).
Further, the attractors (global attractors)
will be interpreted as minimal attractors (minimal global attractors).
\end{remark}
\smallskip

\begin{definition}
For an attractor $K$, the {\it basin of attraction}
is a set ${\it B} (K) \subset \mathbb{R}^n$ such that
\[
 \lim_{t \to +\infty} \rho (K, {\bf \rm x}(t, {\bf \rm x}_0)) = 0, \quad
 \forall ~ {\bf \rm x}_0 \in {\it B}(K).
\]
\end{definition}

\begin{remark}
From a computational perspective,
it is not feasible to numerically check Property~\ref{property:local_attr_set}
for all initial states of the phase space of a dynamical system.
A natural generalization of the notion of an attractor is
consideration of the weaker attraction requirements: almost everywhere or
on a set of the positive Lebesgue measure (see, e.g., \cite{Milnor-2006}).
See also \emph{trajectory attractors} \cite{Sell-1996,ChepyzhovV-2002,ChueshovS-2005}.
To distinguish an artificial computer generated chaos from
a real behavior of the system one can consider
the shadowing property of the system (see, e.g., the survey in \cite{Pilyugin-2011}).

We can typically see an attractor (or global attractor)
in numerical experiments.
The notion of a B-attractor is mostly used in the theory of dimensions,
where we consider invariant sets covered by balls.
The uniform attraction requirement
in Property~\ref{property:unif_loc_attr_set}
implies that a global B-attractor involves
a set of stationary points ($\mathcal{S}$) and the corresponding unstable manifolds
$W^u (\mathcal{S}) = \left\{{\bf \rm x}_0 \in \mathbb{R}^n ~|~
\lim_{t \to -\infty} \rho (\mathcal{S}, {\bf \rm x}(t, {\bf \rm x}_0)) = 0\right\}$
(see, e.g., \cite{Chueshov-1993,Chueshov-2002-book}).
The same is true for B-attractor
if the considered neighborhood $K(\varepsilon)$
in Property~\ref{property:unif_loc_attr_set}
contains some of the stationary points from $\mathcal{S}$.
From a computational perspective,
numerically checking Property~\ref{property:unif_loc_attr_set}
is also difficult.
Therefore if the basin of attraction involves unstable manifolds of equilibria,
then computing the minimal attractor and the
unstable manifolds that are attracted to it may be regarded
as an approximation of B-attractor.
For example, consider the visualization of the classical Lorenz attractor
from the neighborhood of the zero saddle equilibria.
Note that a minimal global attractor involves
the set $\mathcal{S}$ and
its basin of attraction involves the set $W^u (\mathcal{S})$.
Various analytical-numerical methods for computing attractors
and their basins of attraction can be found in, for example,
\cite{Zubov-1964,Aulbach-1983,Vannelli-1985,Osipenko-1999,DellnitzJ-2002,Giesl-2007}.
\end{remark}
\smallskip

\begin{figure}[!ht]
 \centering
 \subfloat[
 {\scriptsize
  Two trajectories with symmetric initial data near $S_0$}
 ] {
 \label{fig:gen_lorenz:attr:se0}
 \includegraphics[width=0.33\textwidth]{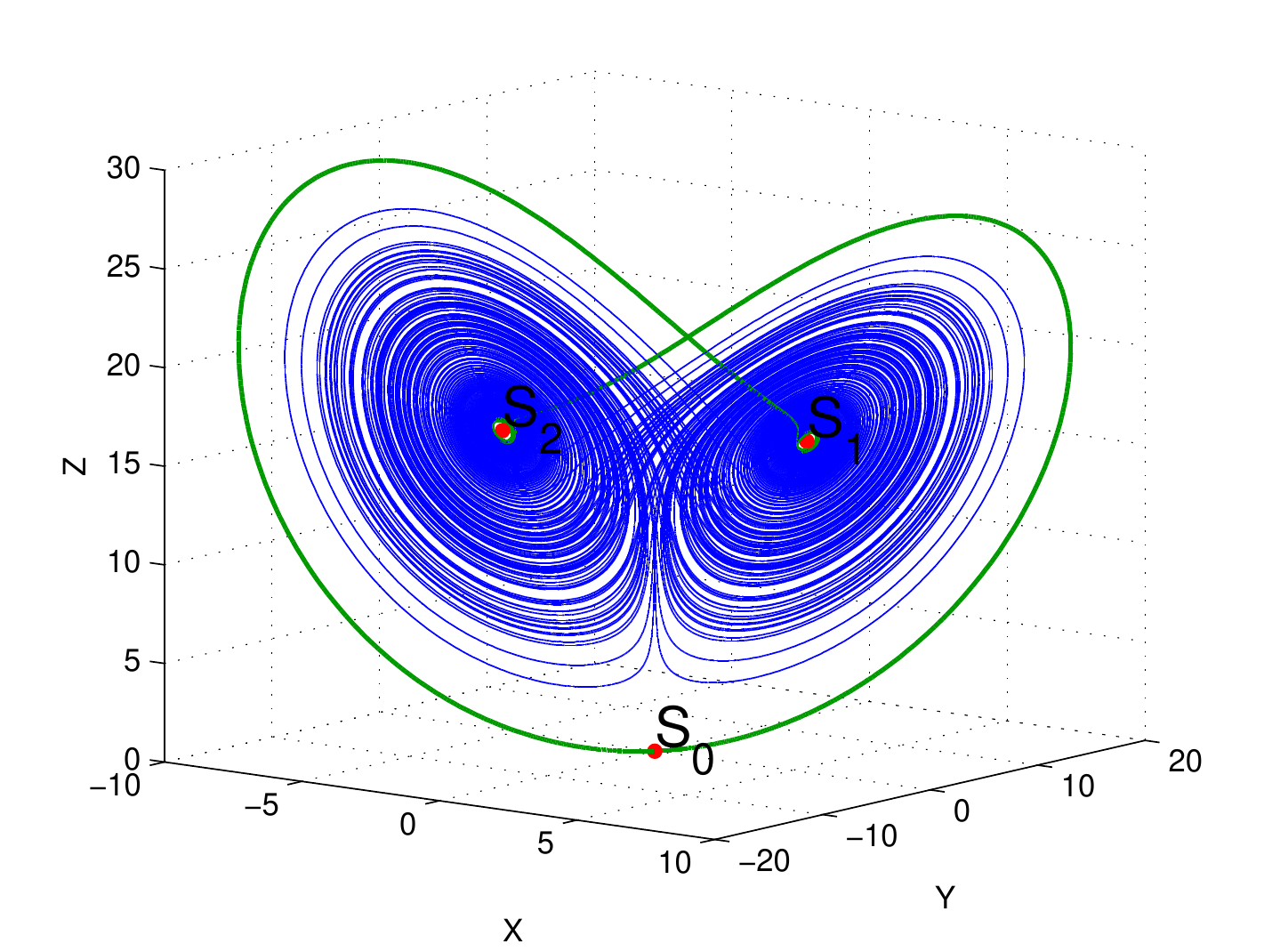}
 }~
 \subfloat[
 {\scriptsize A trajectory with initial data near $S_1$}
 ] {
 \label{fig:gen_lorenz:attr:se1}
 \includegraphics[width=0.33\textwidth]{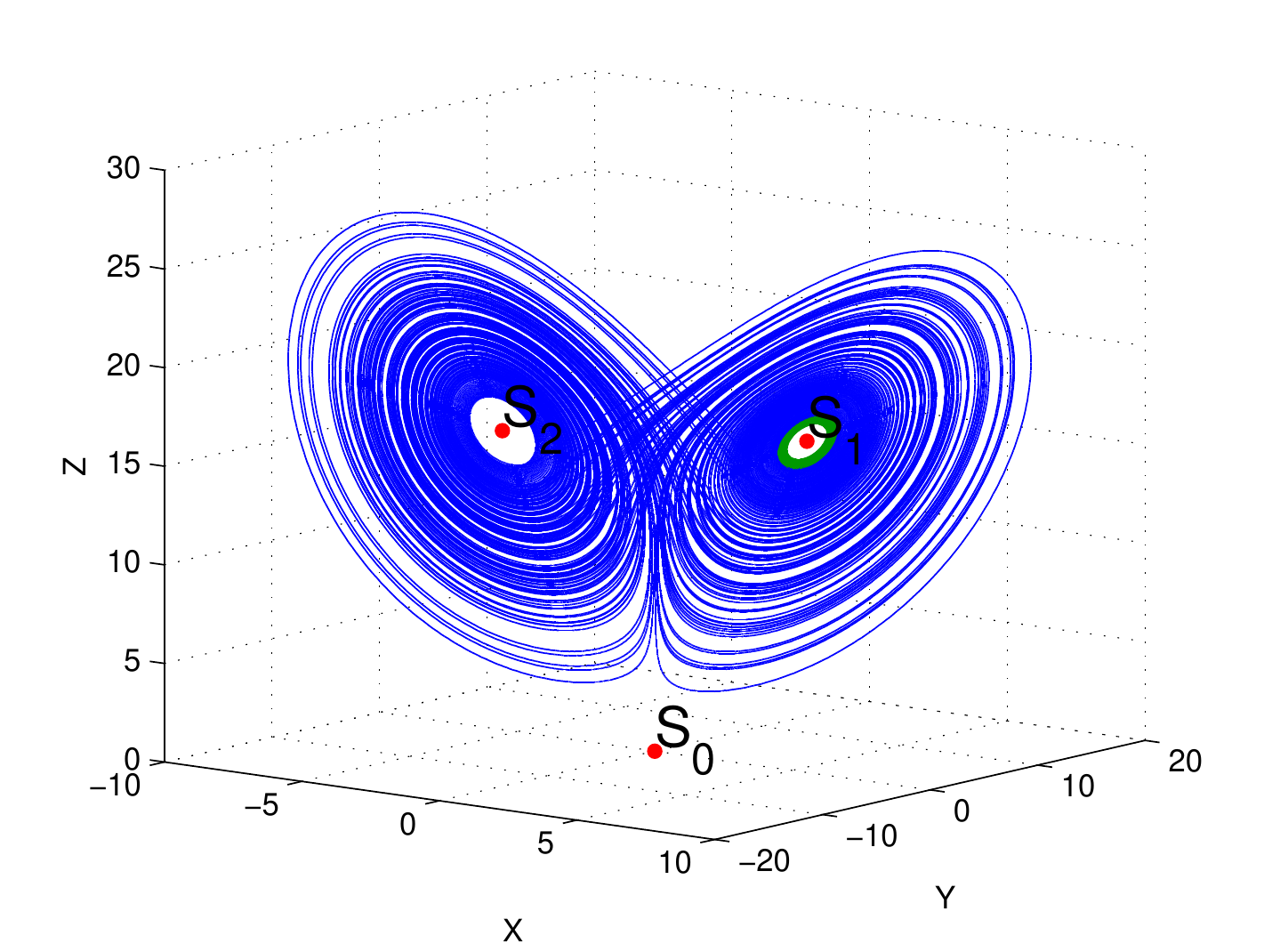}
 }
 \subfloat[
 {\scriptsize A trajectory with initial data near $S_2$}
 ] {
 \label{fig:gen_lorenz:attr:se2}
 \includegraphics[width=0.33\textwidth]{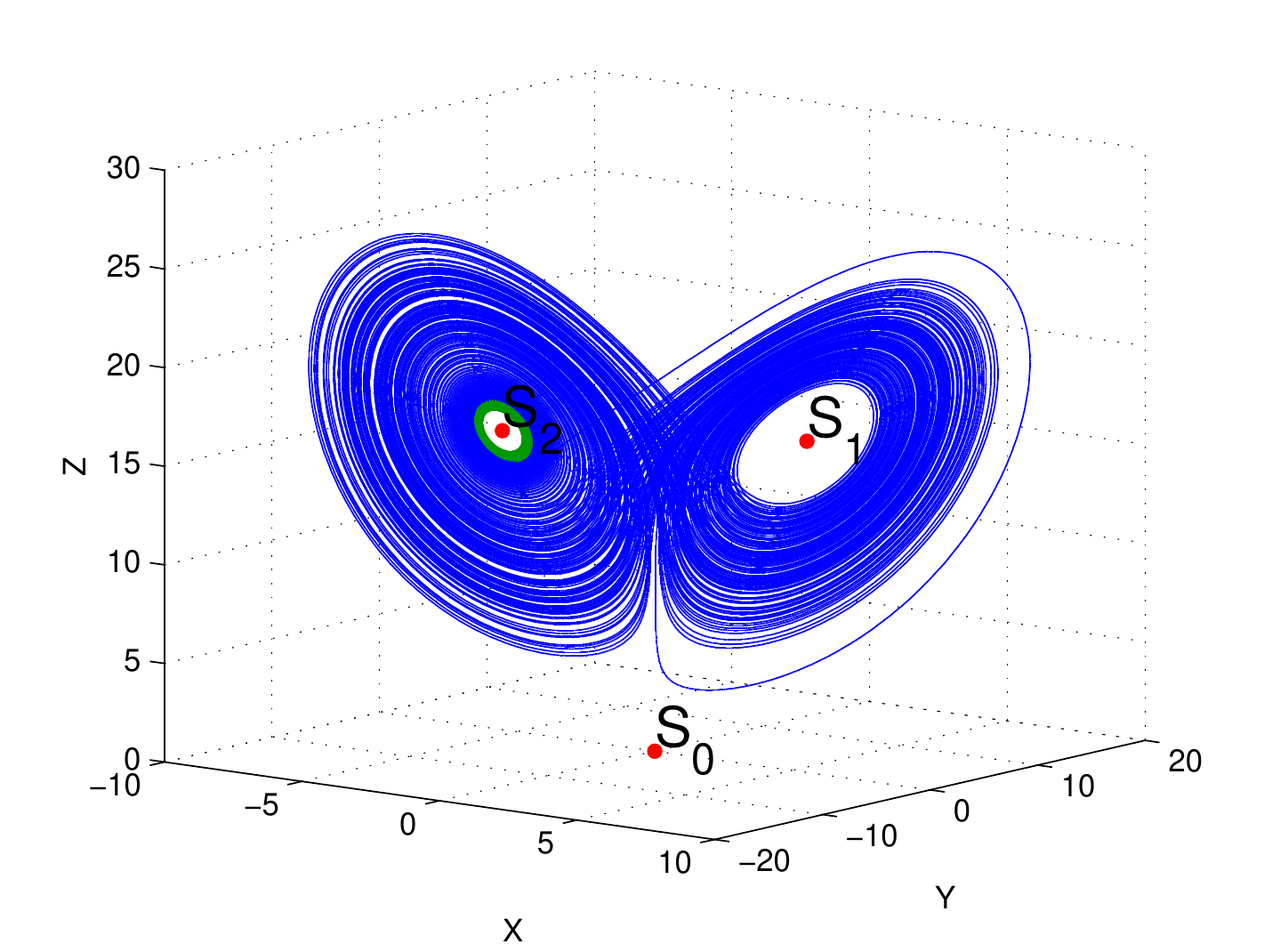}
 }
 \caption{
 Self-excited attractor of system \eqref{sys:lorenz-general} for
 $r = 17$, $\sigma = 4$, and $a = 0.0052$,
 computed from different initial points.
 } 
 \label{fig:gen_lorenz:attr:se}
\end{figure}

\section{\label{sec:self-exc:lorenz} Self-excited attractor localization}

In \cite{GlukhovskyD-1980} system \eqref{sys:conv_fluid} with $\sigma = 4$ was studied.
Consider the following parameters for system \eqref{sys:lorenz-general}
\[
 \sigma = 4, \quad a = 0.0052.
\]


According to Proposition~\ref{proposition:gen_lorenz:stability},
if $r_1\approx 16.4961242... < r < r_2 \approx 690.6735024$,
the equilibria $S_{1,2}$ of system \eqref{sys:lorenz-general}
become (unstable) saddle-focuses.
For example, if  $r = 17$, the eigenvalues of the equilibria of system
\eqref{sys:lorenz-general} are the following
\begin{eqnarray*}
   S_0 : & 5.8815, \quad -1, \quad -10.8815 & \\
   S_{1,2} : & 0.0084 \pm 4.5643 \mi, \quad -6.0168 &
\end{eqnarray*}
and there is a self-excited chaotic attractor
in the phase space of system \eqref{sys:lorenz-general}.
We can easily visualize this attractor (Fig. \ref{fig:gen_lorenz:attr:se})
using the standard computational procedure
with initial data in the vicinity of one of the equilibria $S_{0,1,2}$
on the corresponding unstable manifolds.
To improve the approximation of the attractor
one can consider its neighborhood and compute trajectories from
a grid of points in this neighborhood.


\section{Hidden attractor localization}

We need a special numerical method to localize the hidden attractor of system \eqref{sys:lorenz-general},
because the basin of attraction does not intersect the small neighborhoods of the unstable manifolds of the equilibria.
One effective method for the numerical localization of hidden attractors
is based on a \emph{homotopy} and \emph{numerical continuation}.
We construct a sequence of
similar systems such that the initial data
for numerically computing the oscillating
solution (starting oscillation) can be obtained analytically
for the first (starting) system.
For example, it is often possible to consider a starting
system with a self-excited starting oscillation.
Then we numerically track the transformation
of the starting oscillation while passing
between systems.

\begin{figure}[!ht]
 \centering
 \subfloat[$P_0:\ r = 687.5$. Attractor (blue), B-attractor (green, blue and red)] {
 \label{fig:gen_lorenz:sprtx:se1}
 \includegraphics[width=0.5\textwidth]{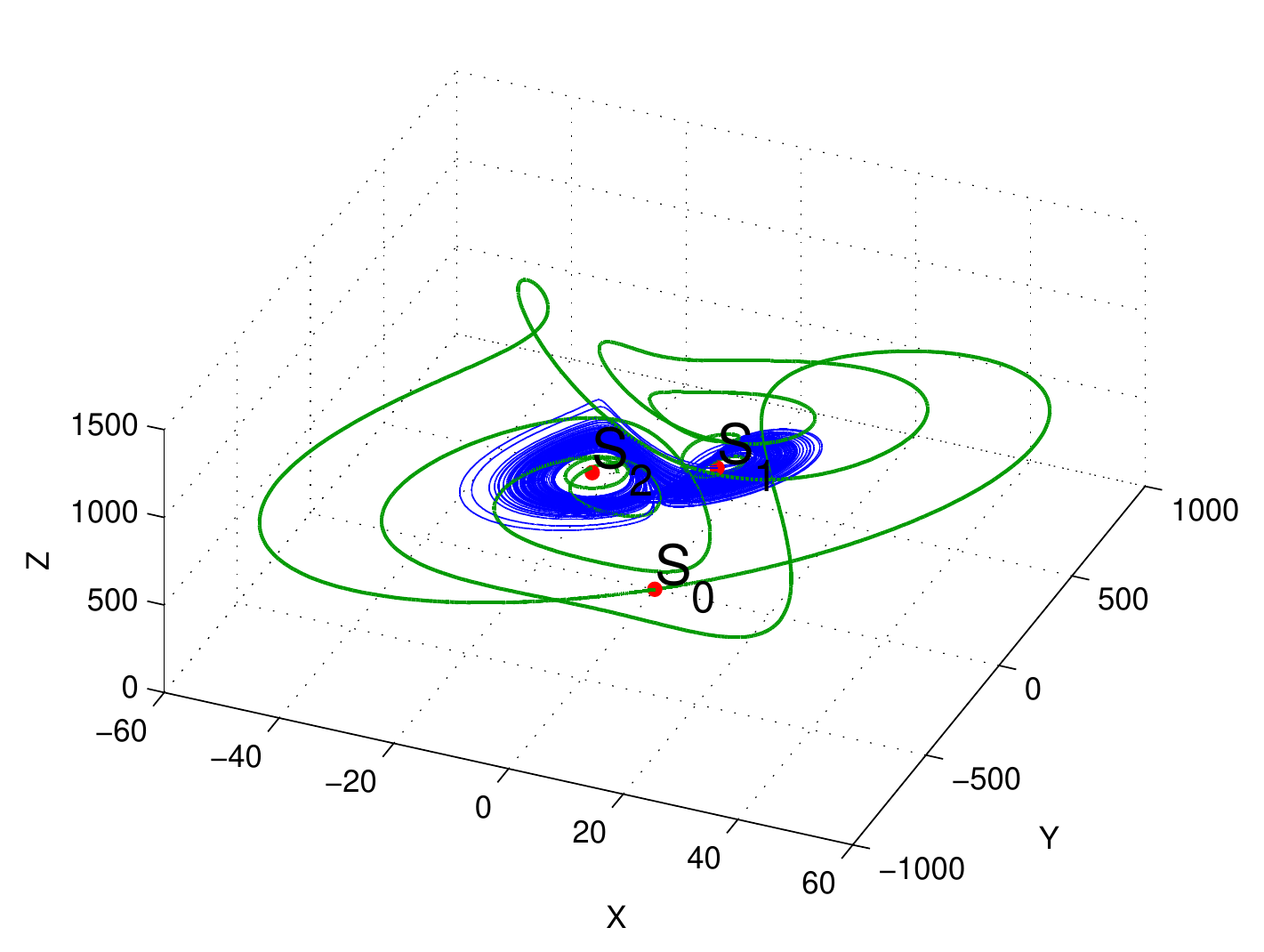}
 }
 \subfloat[$r = 690$. Attractor (blue), B-attractor (green, blue and red)] {
 \label{fig:gen_lorenz:sprtx:se2}
 \includegraphics[width=0.5\textwidth]{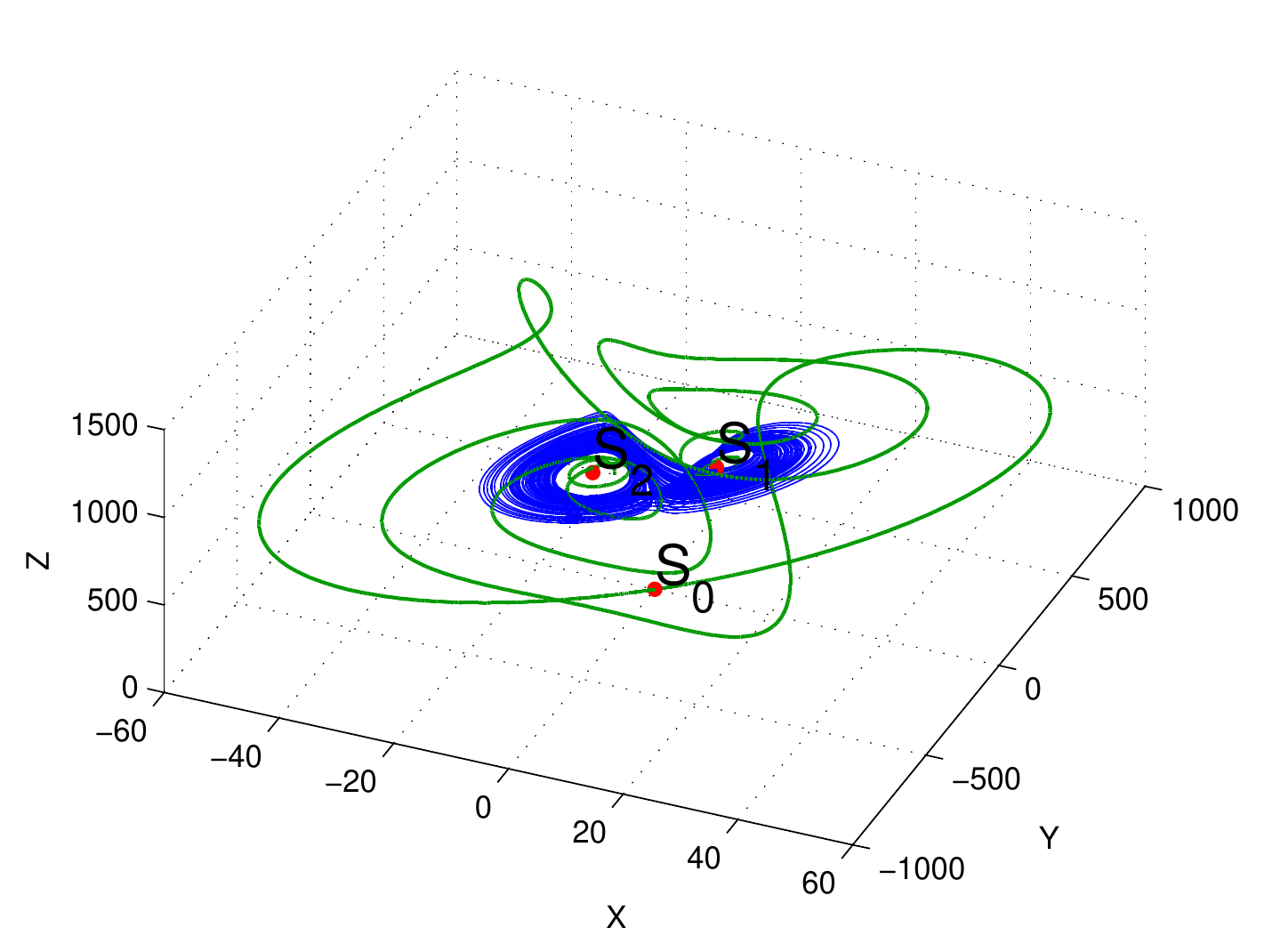}
 }

 \subfloat[$P_1:\ r = 700$. B-attractor (green and red)] {
 \label{fig:gen_lorenz:sprtx:hid}
 \includegraphics[width=0.5\textwidth]{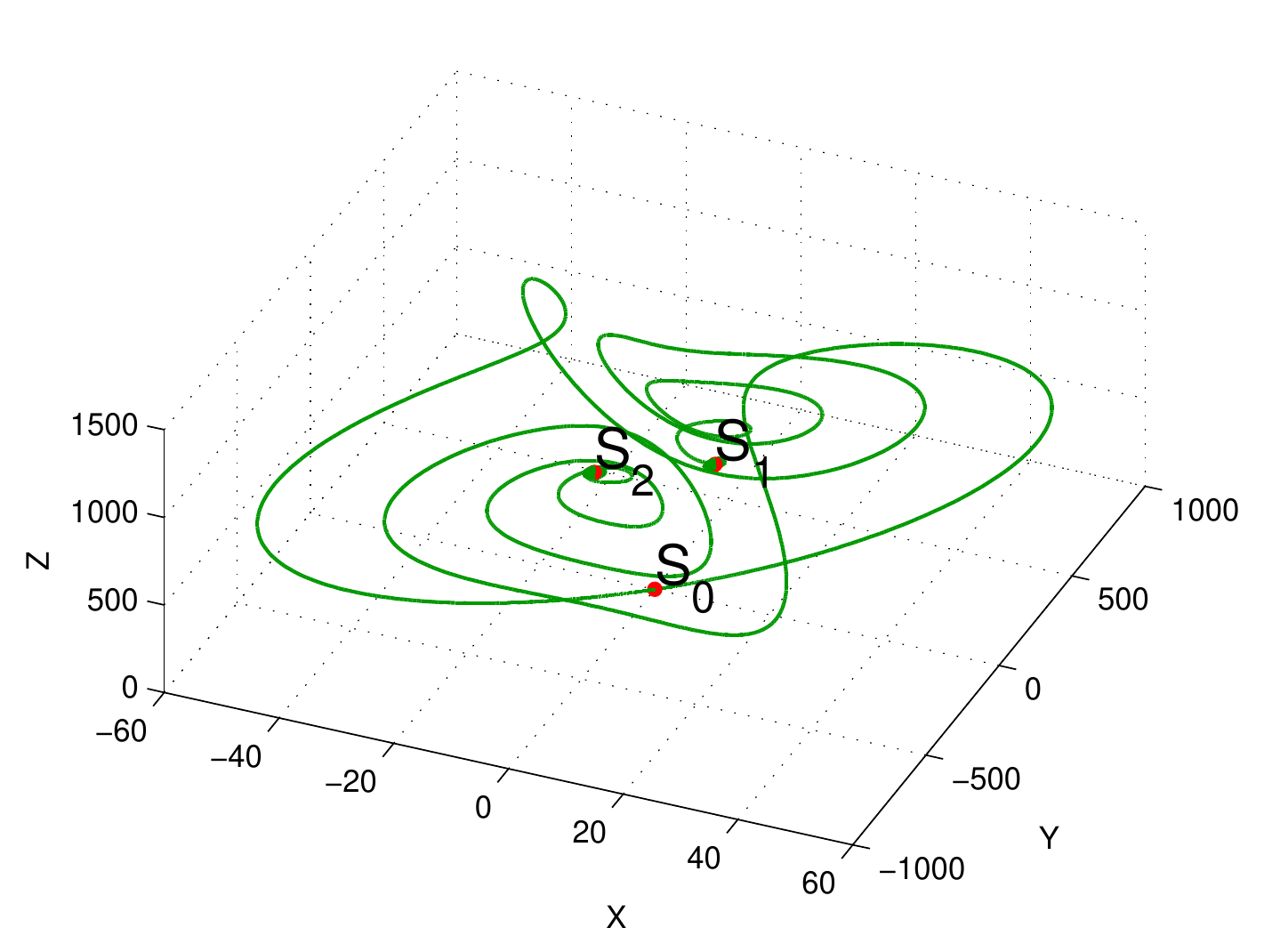}
 }
 \caption{
 B-attractor
 of system \eqref{sys:lorenz-general} for
 fixed $\sigma = 4$, $a = 0.0052$, and various $r$.
 }
\end{figure}
In a scenario of transition to chaos in dynamical system
there is typically a parameter $\lambda \in [a_1,a_2]$,
the variation of which gives the scenario.
We can also artificially introduce the parameter $\lambda$,
let it vary in the interval
$[a_1,a_2]$ (where $\lambda = a_2$ corresponds to the initial system),
and choose a parameter $a_1$ such that
we can analytically or computationally find a certain nontrivial attractor
when $\lambda = a_1$ (often this attractor has a simple form, e.g., periodic).
That is, instead of analyzing the scenario of a transition
into chaos, we can synthesize it.
Further, we consider the sequence
$\lambda_j, ~ \lambda_1 = a_1,~ \lambda_m = a_2, ~ \lambda_j \in [a_1,a_2]$
such that the distance between $\lambda_j$ and $\lambda_{j+1}$
is sufficiently small.
 Then we numerically investigate changes to the shape of the attractor
obtained for $\lambda_1 = a_1$.
 If the change in $\lambda$ (from $\lambda_j$ to $\lambda_{j+1}$) does not
 cause a loss of the stability bifurcation of the considered attractor,
 then the attractor for $\lambda_m = a_2$ (at the end of procedure)
 is localized.

Let us construct a line segment on the plane $(a,r)$ that intersects
a boundary of the stability domain of the equilibria $S_{1,2}$
(see Fig. \ref{fig:cont_proc:paths}).
We choose the point $P_1(r = 700, a = 0.0052)$ as the end point of the line segment.
The eigenvalues for the equilibria of system~\eqref{sys:lorenz-general}
that correspond to the parameters $P_1$ are the following:
\begin{eqnarray*}
S_0 : & 50.4741, \quad -1, \quad -55.4741, \\
S_{1,2} : & -0.1087 \pm 10.4543 \mi, \quad -5.7826.
\end{eqnarray*}
This means that the equilibria $S_{1,2}$ become stable focus-nodes.
Now we choose the point $P_0(r = 687.5, a = 0.0052)$
as the initial point of the line segment.
This point corresponds to the parameters for which system
\eqref{sys:lorenz-general} has a self-excited attractor, which
can be computed using the standard computational procedure.
Then we choose a sufficiently small partition step
for the line segment and compute a chaotic attractor in the phase space of system
\eqref{sys:lorenz-general} at each iteration of the procedure.
The last computed point at each step is used as the initial point
for the computation at the next step (the computation time must be sufficiently large).

\begin{figure}[!ht]
 \centering
 \includegraphics[width=0.55\textwidth]{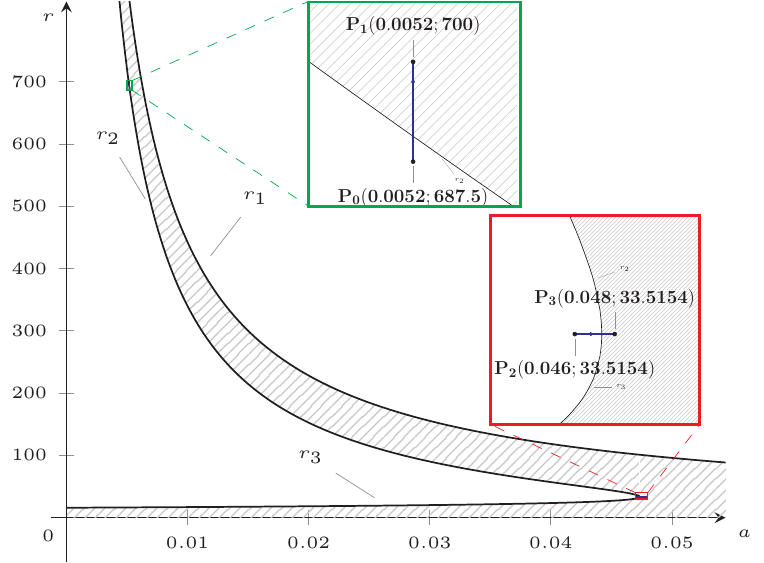}
 \caption{Paths $[P_0,P_1]$ and $[P_2,P_3]$ in the plane of parameters $\{a,r\}$
 used in the continuation procedure.}
 \label{fig:cont_proc:paths}
\end{figure}

In our experiment the length of the path was $2.5$ and there were $6$ iterations.
Here for the selected path and partition, we can visualize a hidden attractor of
system \eqref{sys:lorenz-general} (see Fig.~\ref{fig:gen_lorenz:attr:hidden}).
\begin{figure}[!h]
 \centering
 \includegraphics[width=0.5\textwidth]{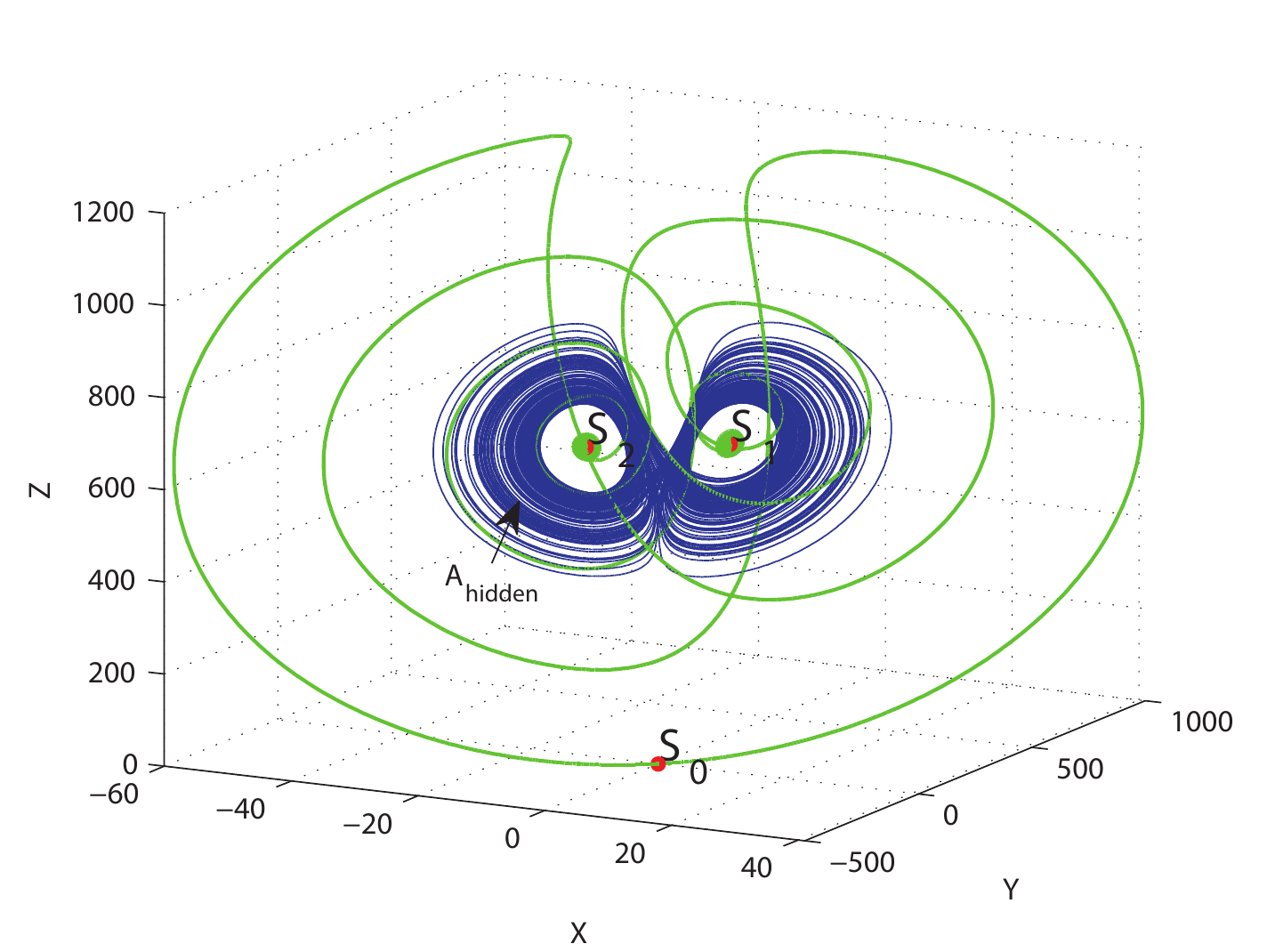}
 \caption{\label{fig:gen_lorenz:attr:hidden}
 Hidden attractor (blue) 
 coexist with B-attractor
 (green outgoing separatrix of the saddle $S_0$ attracted to the red equilibria $S_{1,2}$)}
\end{figure}
The results of continuation procedure are given in \cite{LeonovKM-2015-CNSNS}.

Note that the choice of path and its partitions
in the continuation procedure is not trivial.
For example, a similar procedure does not lead to a hidden attractor
for the following path on the plane $(a,r)$.
Consider $r = 33.51541181$, $a = 0.04735056...$ (the rightmost point on the stability domain),
and take a starting point $P_2$: $r = 33.51541181$, $a = 0.046$ near it (Fig.~\ref{fig:cont_proc:paths}).
If we use the partition step $0.001$, then there are no hidden attractors
after crossing the boundary of the stability domain.
For example, if the end point is $P_3$:
$r = 33.51541181$, $a = 0.048$, there is no chaotic attractor
but only trivial attractors (the equilibria $S_{1,2}$).

\section{Analytical localization of global
attractor via Lyapunov functions}\label{sec:dissip}

In the previous sections, we considered the numerical localization
of various self-excited and hidden attractors of system \eqref{sys:lorenz-general}.
It is natural to question if these attractors (or the union of attractors) are global
(in the sense of Definition~\ref{def:attractor})
or if other coexisting attractors can be found.

The \emph{dissipativity} property is important when
proving the existence of a bounded global attractor for a dynamical system
and gives an analytical localization of the global attractor in the phase space.
The dissipativity of a system, on one hand, proves that there are no
trajectories that tend to infinity as $t \to +\infty$
in the phase space and,
on the other hand, can be used one to determine the boundaries of the domain
that all trajectories enter within a finite time.

\begin{definition}
A set $B_0 \subset \mathbb{R}^n$ is said to be \emph{absorbing}
for dynamical system \eqref{eq:ode}
if for any ${\bf \rm x}_0 \in \mathbb{R}^n$ there exists
$T = T({\bf \rm x}_0)$
such that ${\bf \rm x}(t, {\bf \rm x}_0) \in B_0$ for any $t \geq T$.
\end{definition}
Note that the trajectory ${\bf \rm x}(t, {\bf \rm x}_0)$ with ${\bf \rm x}_0 \in B_0$
may leave $B_0$ for only a finite time before it
returns and stays inside for $t \geq T$.

\begin{remark}
In \cite{Levinson-1944}
the ball $B_R = \left\{{\bf \rm x} \in \mathbb{R}^n : |{\bf \rm x}| < R \right\}$ was regarded as an absorbing set.
In this case, if there exists $R > 0$ such that
\[
 \limsup_{t \to \infty} | {\bf \rm x}(t, {\bf \rm x}_0) | < R,
 \quad \text{for any} ~ {\bf \rm x}_0 \in \mathbb{R}^n,
\]
then it is said that a dynamical system is {\it dissipative in the sense of Levinson }.
$R$ is called {\it a radius of dissipativity}\footnote{Because
 any greater radius also satisfies the definition, the minimal $R$ is of interest
 for the problems of attractor localization and definition of ultimate bound.
}.
\end{remark}

\begin{definition}[\cite{Chueshov-1993,Chueshov-2002-book}]\label{def:abs_set} 
Dynamical system \eqref{eq:ode} is called \emph{(pointwise) dissipative}\footnote{
 Together with the notions of an absorbing set and dissipative system,
 \cite{Ladyzhenskaya-1991,BoichenkoLR-2005} also considered the definitions of a
 {\it B-absorbing set} and a {\it B-dissipative} system
 (uniform convergence of trajectories to the corresponding
 B-absorbing set).
 It is known \cite{BoichenkoLR-2005} that if a dynamical system given on
 $(\mathbb{R}^n, ~ ||\cdot||)$ is dissipative, then it is also B-dissipative.
} if it possesses a bounded absorbing set.
\end{definition}

\begin{theorem}[\cite{Chueshov-1993,Chueshov-2002-book}]
If dynamical system \eqref{eq:ode} is dissipative,
then it possesses a global B-attractor.
\end{theorem}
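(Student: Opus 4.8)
The plan is to build the global B-attractor explicitly as the $\omega$-limit set of an absorbing set and then verify the four properties that Definition~\ref{def:attractor}(4) demands: boundedness, closedness, invariance, and uniform global attraction (Property~\ref{property:unif_glob_attr_set}). First I would upgrade the given pointwise absorbing set to a compact, uniformly absorbing one. Since the system lives on $(\mathbb{R}^n, ||\cdot||)$, dissipativity implies B-dissipativity \cite{BoichenkoLR-2005}, so there is a bounded set $B$ that B-absorbs every bounded set, i.e.\ for each bounded $D$ there is $t_0(D)$ with ${\bf \rm x}(t, D) \subset B$ for all $t \geq t_0(D)$. Replacing $B$ by its closure keeps it absorbing, and a closed bounded subset of $\mathbb{R}^n$ is compact --- this is where finite-dimensionality does the heavy lifting that would otherwise require a separate asymptotic-compactness argument.

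Then I would introduce the candidate
\[
 K = \bigcap_{s \geq 0} F_s, \qquad
 F_s = \overline{\bigcup_{t \geq s} {\bf \rm x}(t, B)}.
\]
For $s \geq t_0(B)$ the inner union lies inside $B$, so each $F_s$ is a nonempty compact set, and the family $\{F_s\}$ decreases as $s$ grows; by the nested-compact-set (Cantor) theorem $K$ is nonempty and compact, hence bounded and closed. Invariance ${\bf \rm x}(t, K) = K$ then follows from the group property \eqref{group_prop} and the continuity of the flow: the system is defined for all $t \in \mathbb{R}$ (completeness was secured earlier via the Lyapunov function~\eqref{flgd}), so the flow is a group. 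Positive invariance comes from pushing the intersection forward through the continuous map ${\bf \rm x}(t, \cdot)$, and the reverse inclusion from tracking orbits backward, which the invertibility of \eqref{sys:lorenz-general} makes legitimate.

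The crux is the uniform global attraction of Property~\ref{property:unif_glob_attr_set}. Because $B$ B-absorbs, it suffices to attract $B$ itself uniformly: for every $\delta > 0$ I must produce $t_1(\delta)$ with ${\bf \rm x}(t, B) \subset K(\delta)$ for all $t \geq t_1(\delta)$. I would argue by contradiction --- if this failed there would be $t_n \to +\infty$ and ${\bf \rm x}_n \in B$ with $\rho(K, {\bf \rm x}(t_n, {\bf \rm x}_n)) \geq \delta$; the points ${\bf \rm y}_n = {\bf \rm x}(t_n, {\bf \rm x}_n)$ eventually lie in the compact set $B$, so along a subsequence ${\bf \rm y}_n \to {\bf \rm y}_*$. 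For any fixed $s$ one has ${\bf \rm y}_n \in F_s$ once $t_n \geq s$, and $F_s$ is closed, so ${\bf \rm y}_* \in F_s$ for every $s$, whence ${\bf \rm y}_* \in K$ and $\rho(K, {\bf \rm y}_*) = 0$, contradicting $\rho(K, {\bf \rm y}_*) \geq \delta$. Composing this with ${\bf \rm x}(t, D) \subset B$ for $t \geq t_0(D)$ yields uniform attraction of an arbitrary bounded $D$, which is exactly Property~\ref{property:unif_glob_attr_set}, so $K$ is a global B-attractor. I expect this compactness-plus-contradiction step, together with the careful passage from pointwise to uniform absorption, to be the main obstacle; the invariance bookkeeping is routine once the flow is known to be a group.
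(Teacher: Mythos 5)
The paper does not prove this statement at all: it is imported verbatim from \cite{Chueshov-1993,Chueshov-2002-book}, so there is no internal proof to compare against. Your reconstruction is the classical one from those references (and from Temam, Ladyzhenskaya): pass from a pointwise absorbing set to a compact B-absorbing set $B$, define $K=\bigcap_{s\ge 0}\overline{\bigcup_{t\ge s}{\bf \rm x}(t,B)}$, get nonemptiness and compactness from the nested-compact-sets theorem, and get uniform attraction by the compactness-plus-contradiction argument. Those parts are correct, and your observation that finite-dimensionality replaces the asymptotic-compactness hypothesis needed in infinite dimensions is exactly the right remark.

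The one step that does not survive the generality of the statement is your proof of the reverse inclusion $K\subset{\bf \rm x}(t,K)$. The theorem is about an arbitrary dissipative system \eqref{eq:ode}, for which the paper only guarantees a semiflow ($\mathbb{T}=\mathbb{R}_{+}$ is explicitly allowed); completeness via the Lyapunov function \eqref{flgd} is a property of the specific system \eqref{sys:lorenz-general}, not a hypothesis of this theorem, so you may not ``track orbits backward.'' The standard repair needs no invertibility: given $y\in K$, choose $t_n\to+\infty$ and $b_n\in B$ with ${\bf \rm x}(t_n,b_n)\to y$ (possible because $y\in F_s$ for every $s$), set $z_n={\bf \rm x}(t_n-t,b_n)$, note that $z_n\in B$ for large $n$ since $B$ absorbs itself, extract a convergent subsequence $z_{n_k}\to z$, conclude $z\in K$ by the same closedness argument you used for uniform attraction, and then ${\bf \rm x}(t,z)=\lim_k{\bf \rm x}(t,z_{n_k})=\lim_k{\bf \rm x}(t_{n_k},b_{n_k})=y$ by continuity of the semiflow. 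With that substitution your argument is a complete and correct proof of the cited theorem; as written, the invariance step silently borrows a group structure that the hypothesis does not provide.
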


We can effectively prove dissipativity by
constructing the Lyapunov function \cite{Yoshizawa-1966,LeonovR-1987}.
Consider a sufficient condition of dissipativity for system \eqref{eq:ode}.
\begin{theorem}[\cite{Yakubovich-1964-I,Reitmann-2013}]\label{theorem:dissip1}
 Suppose that there exists continuously differentiable function
 $V({\bf \rm x}) ~ : ~ \mathbb{R}^n \to \mathbb{R}$,
 possessing the following properties.
 \begin{enumerate}[label=(\arabic*)]
 \item
 $ \lim_{|{\bf \rm x}| \to \infty} V({\bf \rm x}) = +\infty $, and
 \item
 there exist numbers $R$ and $\varkappa$ such that for any solution
 ${\bf \rm x}(t, {\bf \rm x}_0)$
 of system \eqref{eq:ode}, the condition $|{\bf \rm x}(t, {\bf \rm x}_0)| > R$
 implies that
 $\dot{V}({\bf \rm x}(t, {\bf \rm x}_0)) \leq -\varkappa$.
 \end{enumerate}

 \medskip
 Then
 \begin{enumerate}[label=(\alph*)]
 \item any solution ${\bf \rm x}(t, {\bf \rm x}_0)$ to \eqref{eq:ode}
 exists at least on
 $[0,+\infty)$, so system \eqref{eq:ode} generates a dynamical
 system for any $t \geq 0$ and ${\bf \rm x}_0 \in \mathbb{R}^n$; and
 \item if $\eta > 0$ is such that
 $B_0 = \{{\bf \rm x} \in \mathbb{R}^n ~|~ V({\bf \rm x}) \leq \eta\}
 \supset \{{\bf \rm x} \in \mathbb{R}^n ~|~ ||{\bf \rm x}||< R\}$,
 then $B_0$ is a compact absorbing set of dynamical system
\eqref{eq:ode}.
 \end{enumerate}
\end{theorem}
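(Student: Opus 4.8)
The plan is to convert hypotheses (1) and (2) into a control on the scalar function $t\mapsto V({\bf \rm x}(t,{\bf \rm x}_0))$ along orbits. First I would record two consequences of coercivity (1): since $V$ is continuous, each sublevel set $\{{\bf \rm x}:V({\bf \rm x})\le c\}$ is closed, and it is bounded, for otherwise a sequence $\|{\bf \rm x}_k\|\to\infty$ with $V({\bf \rm x}_k)\le c$ would contradict (1); hence every sublevel set, and in particular $B_0$, is compact. Second — and this is the geometric heart of the choice of $\eta$ — the inclusion $B_0=\{V\le\eta\}\supset\{\|{\bf \rm x}\|<R\}$ is equivalent to $\eta\ge\max_{\|{\bf \rm x}\|\le R}V$, which forces $\{V>\eta\}\subset\{\|{\bf \rm x}\|>R\}$. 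Consequently, along any solution the condition $V({\bf \rm x}(t,{\bf \rm x}_0))>\eta$ implies $\|{\bf \rm x}(t,{\bf \rm x}_0)\|>R$, and hence by (2) that $\dot V({\bf \rm x}(t,{\bf \rm x}_0))\le-\varkappa$ (I take $\varkappa>0$, as the dissipativity setting intends).

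To prove (a) I would argue by contradiction using the blow-up alternative quoted earlier: if the maximal forward interval of some solution had a finite endpoint $t_+<+\infty$, then $\|{\bf \rm x}(t,{\bf \rm x}_0)\|\to\infty$ as $t\to t_+$, so $\|{\bf \rm x}(t,{\bf \rm x}_0)\|>R$ on some terminal interval $(t^\ast,t_+)$. There $\dot V\le-\varkappa<0$, so $V({\bf \rm x}(t,{\bf \rm x}_0))\le V({\bf \rm x}(t^\ast,{\bf \rm x}_0))$ stays bounded above, whereas coercivity (1) forces $V({\bf \rm x}(t,{\bf \rm x}_0))\to+\infty$ — a contradiction. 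Thus $t_+=+\infty$ for every ${\bf \rm x}_0$, and together with uniqueness and the group property \eqref{group_prop} this yields a dynamical system on $[0,+\infty)$.

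For (b), compactness of $B_0$ is already in hand, so only the absorbing property remains. Fixing ${\bf \rm x}_0$ and setting $v(t)=V({\bf \rm x}(t,{\bf \rm x}_0))$, a $C^1$ function on $[0,+\infty)$ by part (a), the crux is to show that the level $\eta$ can be crossed only from above. Indeed, on the open set $\{t:v(t)>\eta\}$ we have $\dot v(t)\le-\varkappa$; if $v$ touched $\eta$ at some $s$ and rose afterwards, then $\dot v\le-\varkappa$ on an interval $(s,s+\delta)$ would give $\dot v(s)\le-\varkappa<0$ by continuity, contradicting the rise. Hence $\{t:v(t)>\eta\}$ is an initial interval $[0,T_0)$ on which $v$ decreases at rate at least $\varkappa$, so $T_0\le(v(0)-\eta)/\varkappa<\infty$ and $v(t)\le\eta$ for all $t\ge T_0$; that is, ${\bf \rm x}(t,{\bf \rm x}_0)\in B_0$ for all $t\ge T_0$, which is exactly the absorbing property with $T({\bf \rm x}_0)=T_0$.

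The step I expect to require the most care is this one-way crossing argument, since assumption (2) supplies the strict decrease $\dot V\le-\varkappa$ only on the open exterior $\{\|{\bf \rm x}\|>R\}$ and not on the sphere $\{\|{\bf \rm x}\|=R\}$ itself. The point is that the inclusion $\{V>\eta\}\subset\{\|{\bf \rm x}\|>R\}$ sidesteps this difficulty, because the decrease is needed only where $V$ strictly exceeds $\eta$, and the continuity of $\dot v$ then propagates it to the boundary value of the level. Everything else — the compactness estimate, the blow-up contradiction, and the scalar differential inequality — is routine once this inclusion and the sign $\varkappa>0$ are fixed.
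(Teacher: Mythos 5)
The paper does not prove Theorem~\ref{theorem:dissip1} at all: it is imported verbatim from \cite{Yakubovich-1964-I,Reitmann-2013} and then applied in Appendix~C, so there is no in-paper argument to compare against. Your proof is correct and is the standard one for this kind of Yoshizawa--Yakubovich dissipativity statement: compactness of sublevel sets from coercivity, exclusion of finite-time blow-up via the alternative $\|{\bf \rm x}(t,{\bf \rm x}_0)\|\to\infty$ as $t\to t_+$ (which the paper itself quotes earlier from \cite{Teschl-2012}), and the scalar differential inequality $\dot v\le-\varkappa$ on $\{v>\eta\}$ giving both the finite absorption time $T_0\le (v(0)-\eta)/\varkappa$ and the positive invariance of $B_0$. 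The two delicate points are handled properly: you pass from the hypothesis $B_0\supset\{\|{\bf \rm x}\|<R\}$ to $\eta\ge\max_{\|{\bf \rm x}\|\le R}V$ over the \emph{closed} ball (legitimate by continuity of $V$), which yields the strict inclusion $\{V>\eta\}\subset\{\|{\bf \rm x}\|>R\}$ needed to invoke hypothesis~(2); and you resolve the crossing at a time $s$ with $v(s)=\eta$ --- where (2) gives no information --- by using the decrease only on the open set $\{v>\eta\}$ together with continuity of $\dot v$ (or, equivalently, the integral form of $\dot v\le-\varkappa$ on $(s,s+\delta)$). The only reading you must make explicit, and do, is $\varkappa>0$; without it the statement is false, and the paper's own application in Appendix~C indeed produces a strictly positive $\varkappa$.
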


More general theorems, connected with the application of the Lyapunov functions to
the proof of dissipativity for dynamical systems can be found in \cite{LeonovBSh-1996,Rasvan-2006}.

It is known that the Lorenz system is dissipative
(it is sufficient to choose the Lyapunov function
$V(x,y,z) = \frac{1}{2}(x^2 + y^2 + (z - r - \sigma)^2)$).
However, for example, one of the Rossler systems is not dissipative in the sense of Levinson \cite{LeonovR-1986}
because the outgoing separatrix is unbounded.
In the general case, there is an art in the construction of Lyapunov functions
which prove dissipativity.


\begin{figure}[ht]
 \centering
 \subfloat[
 {\scriptsize Absorbing set (grey),
 self-excited attractor (blue),
 and global B-attractor
 (blue, green and red) for system \eqref{sys:lorenz-general}
 with the parameters $r = 687.5$, $\sigma = 4$, and $a = 0.0052$.
 $\eta \approx 238541.3$.
 }
 ] {
 \label{fig:gen_lorenz:abs_set:self-exc}
 \includegraphics[width=0.5\textwidth]{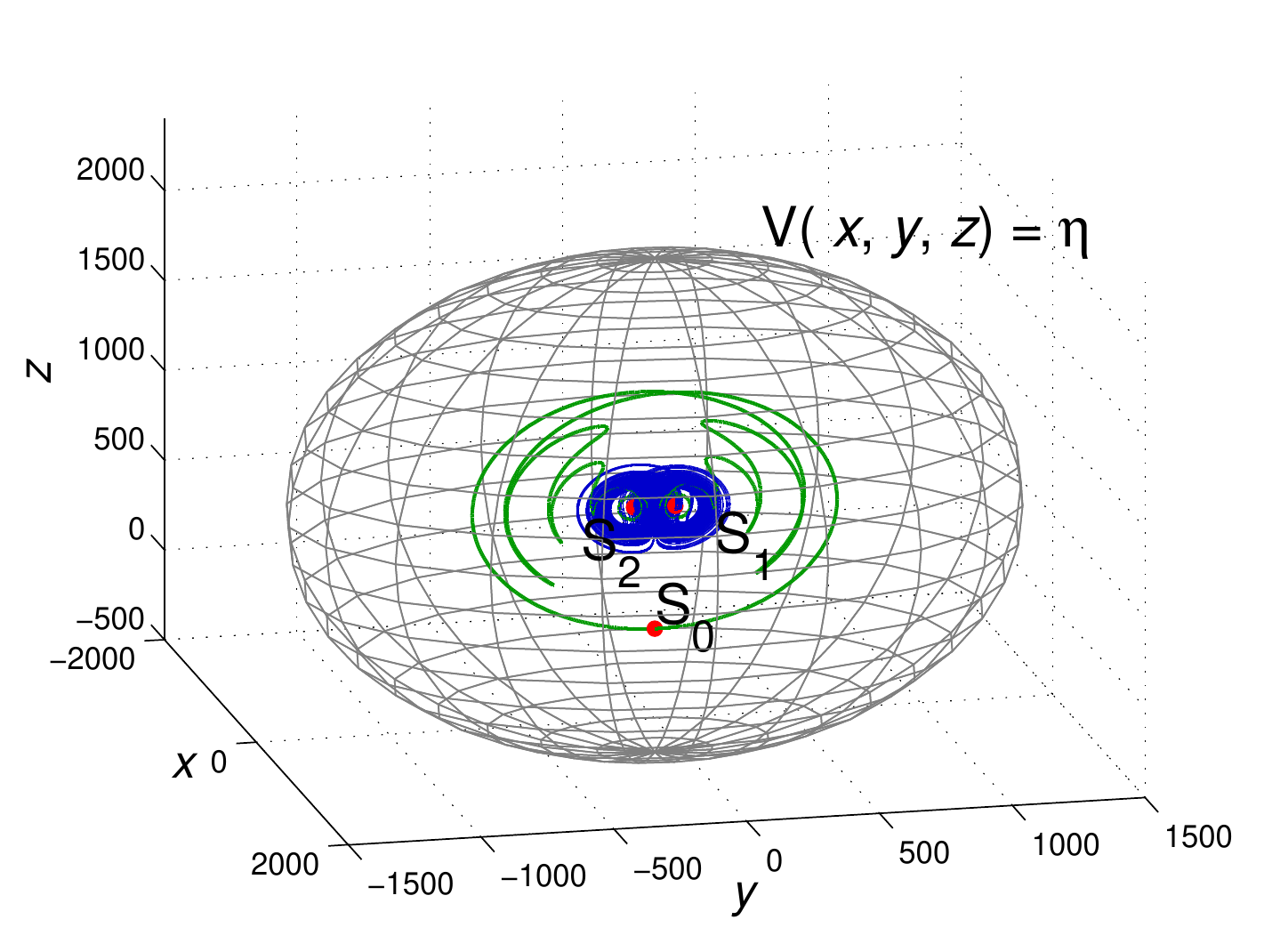}
 }~~
 \subfloat[
 {\scriptsize Absorbing set (grey), hidden attractor (blue),
 and global B-attractor (blue, green and red) for system \eqref{sys:lorenz-general}
 with the parameters $r = 700$, $\sigma = 4$, and $a = 0.0052$.
 $\eta \approx 247230.5$.
 }
 ] {
 \label{fig:gen_lorenz:abs_set:hidden}
 \includegraphics[width=0.5\textwidth]{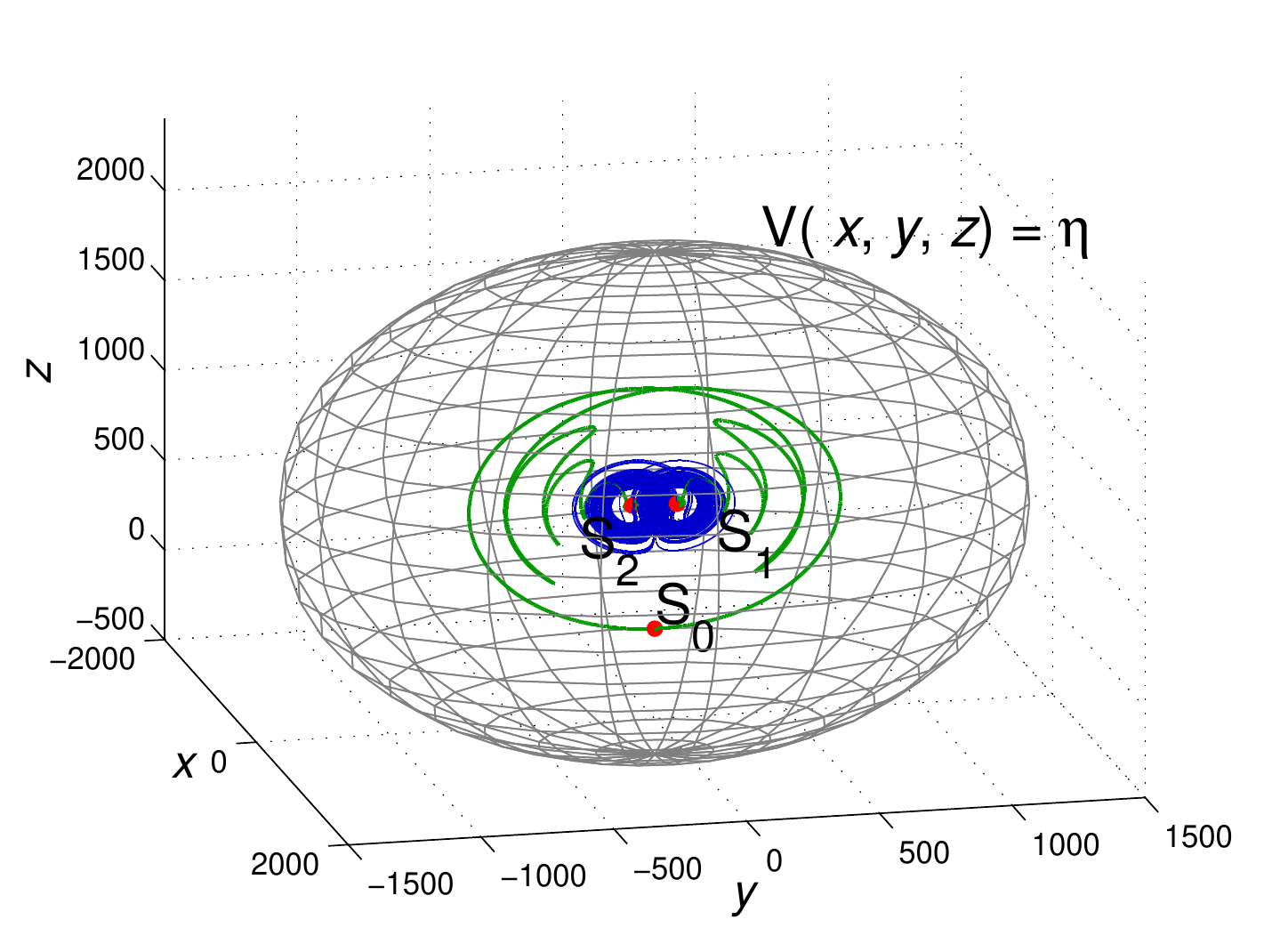}
 }
 \caption{Absorbing sets for system \eqref{sys:lorenz-general}}
 \label{fig:gen_lorenz:abs_set}
\end{figure}

\begin{lemma}\label{theorem:gen_lorenz:dissip}
Dynamical system \eqref{sys:lorenz-general} is dissipative.
\end{lemma}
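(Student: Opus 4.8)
The plan is to apply Theorem~\ref{theorem:dissip1} with the radially unbounded Lyapunov function $V$ already introduced in \eqref{flgd},
\[
 V(x,y,z) = \frac{1}{2}\left(x^2 + y^2 + (a+1)\left(z - \frac{\sigma+r}{a+1}\right)^2\right),
\]
which satisfies $V \to +\infty$ as $|(x,y,z)| \to \infty$, so condition~(1) of the theorem holds automatically. The real content is condition~(2): I must show that the derivative $\dot V$ along trajectories of \eqref{sys:lorenz-general} (with $b=1$) is bounded above by a negative constant $-\varkappa$ outside some ball of radius $R$.

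First I would compute $\dot V = x\dot x + y\dot y + (a+1)\bigl(z - \tfrac{\sigma+r}{a+1}\bigr)\dot z$ by substituting the right-hand sides of \eqref{sys:lorenz-general}. The decisive observation — and the step I expect to be the crux — is that the coefficient $(a+1)$ and the vertical shift $\tfrac{\sigma+r}{a+1}$ are chosen precisely so that all indefinite cubic and quadratic cross-terms cancel. Concretely, the $xyz$ contributions $-a\,xyz$ (from the $-ayz$ term in $\dot x$), $-xyz$ (from $-xz$ in $\dot y$), and $+(a+1)xyz$ (from $+xy$ in $\dot z$) sum to zero, and likewise the $xy$ terms $\sigma xy + rxy - (\sigma+r)xy$ cancel. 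This cancellation is exactly the ``art'' of choosing a Lyapunov function alluded to in the text, and it is what renders the generalized term $-ayz$ harmless.

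After these cancellations, $\dot V$ collapses to the sign-definite expression
\[
 \dot V = -\sigma x^2 - y^2 - (a+1)z^2 + (\sigma+r)z,
\]
which, after completing the square in $z$, equals
\[
 -\sigma x^2 - y^2 - (a+1)\left(z - \frac{\sigma+r}{2(a+1)}\right)^2 + \frac{(\sigma+r)^2}{4(a+1)}.
\]
Since $\sigma > 0$ and $a+1 > 0$, this is a negative-definite quadratic form in the shifted variables plus a bounded constant; hence there exist $R > 0$ and $\varkappa > 0$ such that $|(x,y,z)| > R$ forces $\dot V \leq -\varkappa$. Condition~(2) of Theorem~\ref{theorem:dissip1} is therefore met, and the theorem yields both completeness of the system on $[0,+\infty)$ and a compact absorbing set of the form $B_0 = \{V \leq \eta\}$ for any $\eta$ large enough to contain the ball $\{|(x,y,z)| < R\}$. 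By Definition~\ref{def:abs_set}, the existence of this bounded absorbing set is precisely dissipativity. The only genuine subtlety beyond this is the bookkeeping: verifying that every cross-term truly vanishes and then choosing $R$ and $\eta$ explicitly (the numerical values $\eta \approx 238541.3$ and $\eta \approx 247230.5$ in Fig.~\ref{fig:gen_lorenz:abs_set} indicate that the paper carries out this quantitative estimate).
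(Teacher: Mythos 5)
Your proposal is correct and follows essentially the same route as the paper: the same Lyapunov function \eqref{flgd}, the same cancellation of the cubic and bilinear cross-terms yielding $\dot V = -\sigma x^2 - y^2 - (a+1)z^2 + (\sigma+r)z$, and the same appeal to Theorem~\ref{theorem:dissip1}. The only cosmetic difference is that you complete the square in $z$ with the full coefficient $(a+1)$, whereas the paper splits off an $\varepsilon z^2$ term so as to keep a negative-definite quadratic form in the unshifted variables $(x,y,z)$ before choosing $R$ and $\varkappa$; both variants close the argument.
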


The proof is based on Lyapunov function $V$ from \eqref{flgd} (see Appendix~\ref{appendix:dissip}).
If $R, \eta$ are chosen
as in the proof of Theorem \ref{theorem:gen_lorenz:dissip}, Appendix~\ref{appendix:dissip},
then dynamical system \eqref{sys:lorenz-general}
has a compact absorbing set
\[
  B_0 = \left\{ (x,y,z):\,
  V(x,y,z) = \frac{1}{2} \left( x^2 + y^2 + (a+1)
 \left(z - \frac{\sigma + r}{a + 1} \right)^2 \right) \leq \eta \right\}.
\]
For example, for $\sigma>1, r>1, a<1$ we can choose
$R= \dfrac{\sigma+r}{a+1}$ and $\eta= 2(a+1)R^2$
(see Fig. \ref{fig:gen_lorenz:abs_set}).
Note that for system \eqref{sys:lorenz-general}
the ellipsoidal absorbing set $B_0$
can be improved using special additional transformations
and Yudovich's theorem (see, e.g., \cite{Belykh-1980}),
similarly to \cite{Leonov-1983} for the Lorenz system.

There is also a cylindrical positively invariant set
for system \eqref{sys:lorenz-general} \cite{Leonov-2014-ND},
\begin{equation}\label{pinv}
  C = \{|x| \leq r + \frac{2a}{\sigma}r^2,\quad  y^2 + (z -r)^2 \leq r^2\},
\end{equation}
because
\[
     (y^2 + (z -r)^2)^{\bullet} \leq -(y^2 + (z -r)^2) + r^2 < 0 \quad \forall x,\ |y| > r \text{\ or\ } |z| > 2r
\]
and
\[
  |x|^{\bullet} \leq -\sigma|x|+\sigma|y|+a|y||z| < 0 \quad |y|\leq r,\, |z| \leq 2r,\,|x| > r+2ar^2/\sigma.
\]
Thus, as for the Lorenz system \cite{LeonovBK-1987},
we obtain three different estimates of the attractor:
the ball $B_{R}$, ellipse $B_0$, and cylinder $C$.

\section{Upper estimate of the Lyapunov dimension of attractor}

\subsection{Lyapunov exponents and Lyapunov dimension}

Suppose that the right-hand side of system \eqref{eq:ode} is sufficiently smooth,
and consider a linearized system along a solution ${\bf \rm x}(t, {\bf \rm x}_0)$.
We have
\begin{equation}\label{eq:lin_eq}
 \dot{{\bf \rm u}} = J({\bf \rm x}(t, {\bf \rm x}_0)) \, {\bf \rm u},
 \quad {\bf \rm u} \in \mathbb{R}^n, ~t \in \mathbb{R}_+,
\end{equation}
where
\[
J({\bf \rm x}(t, {\bf \rm x}_0)) =
\left[
 \frac{\partial f_i({\bf \rm x})}{\partial {\bf \rm x}_j}\big|_{{\bf \rm x}
= {\bf \rm x}(t, {\bf \rm x}_0)}
\right]
\]
is the $(n \times n)$ Jacobian matrix evaluated along the trajectory
 ${{\bf \rm x}(t, {\bf \rm x}_0)}$ of system \eqref{eq:ode}.
A fundamental matrix $X(t,{\bf \rm x}_0) $ of linearized system \eqref{eq:lin_eq}
is defined by the variational equation
\begin{equation}\label{eq:var_eq}
 \dot{X}(t,{\bf \rm x}_0) =
 J({\bf \rm x}(t, {\bf \rm x}_0)) \, X(t,{\bf \rm x}_0).
\end{equation}
We typically set $X(0,{\bf \rm x}_0) = I_n$, where $I_n$ is the
identity matrix.
Then
${\bf \rm u}(t, {\bf \rm u}_0) = X(t,{\bf \rm x}_0) {\bf \rm u}_0$.
In the general case,
${\bf \rm u}(t, {\bf \rm u}_0) = X(t,{\bf \rm x}_0)
X^{-1}(0,{\bf \rm x}_0){\bf \rm u}_0$.
Note that if a solution of nonlinear system \eqref{eq:ode}
is known, then we have
\[
 X(t,{\bf \rm x}_0) =
 \frac{\partial {\bf \rm x}(t, {\bf \rm x}_0)}{\partial {\bf \rm x}_0}.
\]

 Two well-known definitions of Lyapunov exponents are
 the upper bounds of the exponential growth rate
 of the norms of linearized system solutions (LCEs) \cite{Lyapunov-1892}
 and the upper bounds of the
 exponential growth rate of the singular values
 of fundamental matrix of linearized system (LEs) \cite{Oseledec-1968}.

Let $\sigma_1 (X(t,{\bf \rm x}_0)) \geq \cdots
\geq \sigma_n (X(t,{\bf \rm x}_0)) > 0$ denote
the singular values of a fundamental matrix $X(t,{\bf \rm x}_0)$
(the square roots of the eigenvalues of the matrix
$X(t,{\bf \rm x}_0)^{*}X(t,{\bf \rm x}_0)$ are reordered for each $t$).

\begin{definition}\label{def:le}
The {\it Lyapunov exponents (LEs)} at the point ${\bf \rm x}_0$
are the numbers (or the symbols $\pm \infty$) defined by
\begin{equation}\label{defLE}
  {\rm LE}_i ({\bf \rm x}_0) = \limsup_{t \to \infty} \frac{1}{t}
  \ln \sigma_i (X(t,{\bf \rm x}_0)).
\end{equation}
\end{definition}

 LEs are commonly used\footnote{
 The  LCEs \cite{Lyapunov-1892} and LEs \cite{Oseledec-1968} are ``often'' equal,
 e.g., for a ``typical'' system that satisfies the conditions
 of Oseledec theorem \cite{Oseledec-1968}.
 However, there are no effective methods for checking
 Oseledec conditions for a given system:
 ``{\it Oseledec proof is important mathematics,
 but the method is not helpful in elucidating dynamics}'' \cite[p.118]{ChaosBook}).
 For a particular system, LCEs and LEs may be different.
 For example,
 for the fundamental matrix
 \(
    X(t)=\left(
      \begin{array}{cc}
        1 & g(t)-g^{-1}(t) \\
        0 & 1 \\
      \end{array}
    \right)
 \) we have the following ordered values:
 $  \LCE_1 =
  {\rm max}\big(\limsup\limits_{t \to +\infty}\mathcal{X}[g(t)],
  \limsup\limits_{t \to +\infty}\mathcal{X}[g^{-1}(t)]\big),
  \LCE_2 = 0$;
 $
  \LE_{1,2} = {\rm max, min}
  \big(
     \limsup\limits_{t \to +\infty}\mathcal{X}[g(t)],
     \limsup\limits_{t \to +\infty}\mathcal{X}[g^{-1}(t)]
  \big)
 $, where $\mathcal{X}(\cdot) = \frac{1}{t}\log|\cdot|$.
 Thus, in general, the Kaplan-Yorke (Lyapunov) dimensions based
 on LEs and LCEs may be different.
 Note also that positive largest LCE or LE, computed via the linearization of the system along a trajectory,
 do not necessary imply instability or chaos,
 because for non-regular linearization
 there are well-known Perron effects of Lyapunov exponent sign reversal
 \cite{LeonovK-2007,KuznetsovL-2001,KuznetsovL-2005}.
 Therefore for the computation of the Lyapunov dimension of an attractor
 one has to consider a grid of points on the attractor and
 corresponding local Lyapunov dimensions \cite{KuznetsovMV-2014-CNSNS}.
 More detailed discussion and examples can be found
 in \cite{KuznetsovAL-2014-arXiv-LE,LeonovK-2007}.
}
in the theories of dynamical systems and attractor dimensions
\cite{Ledrappier-1981,EckmannR-1985,Hunt-1996,Temam-1997,BoichenkoLR-2005,BarreiraG-2011}.

\begin{remark}
 The LEs are independent of the choice  of fundamental matrix at the point ${\bf \rm x}_0$ \cite{KuznetsovAL-2014-arXiv-LE}
 unlike the Lyapunov characteristic exponents (LCEs, see \cite{Lyapunov-1892}).
 To determine all possible values of LCEs, we must consider a \emph{normal fundamental matrix}.
\end{remark}

We now define a Lyapunov dimension \cite{KaplanY-1979}

\begin{definition}
A local Lyapunov dimension of a point ${\bf \rm x}_0$
in the phase space of a dynamical system is as follows:
$\dim_L {\bf \rm x}_0 = 0$ if $\LE_1(x_0) \leq 0$ and $\dim_L {\bf \rm x}_0=n$ if $\sum_{i=1}^{n}\LE^{o}_i(x_0) \geq 0$,
otherwise
\begin{equation}\label{formula:kaplan}
 \dim_L {\bf \rm x}_0 = j({\bf \rm x}_0) +
 \cfrac{{\rm LE}_1({\bf \rm x}_0) + \ldots +
 {\rm LE}_j({\bf \rm x}_0)}{|{\rm LE}_{j+1}({\bf \rm x}_0)|},
\end{equation}
where ${\rm LE}_1({\bf \rm x}_0) \geq \ldots \geq {\rm LE}_n({\bf \rm x}_0)$
are ordered LEs and
$j({\bf \rm x}_0) \in [1, n]$ is the smallest natural number $m$ such that
$$
{\rm LE}_1({\bf \rm x}_0) + \ldots + {\rm LE}_{m}({\bf \rm x}_0) > 0,
\quad {\rm LE}_{m+1}({\bf \rm x}_0) < 0, \quad
\cfrac{{\rm LE}_1({\bf \rm x}_0) + \ldots +
{\rm LE}_m({\bf \rm x}_0)}{|{\rm LE}_{m+1}({\bf \rm x}_0)|} < 1.
$$
\end{definition}

The Lyapunov dimension of invariant set $K$ of a dynamical system is defined as
\begin{equation}
 \dim_L K = \sup_{x_0 \in K} \dim_Lx_0.
\end{equation}

Note that, from an applications perspective,
an important property of the Lyapunov dimension is
the chain of inequalities~\cite{Hunt-1996,IlyashenkoW-1999,BoichenkoLR-2005}
\begin{equation}
 \dim_T K \leqslant \dim_H K \leqslant \dim_F K
 \leqslant \dim_L K \label{ineq:dimensions}.
\end{equation}
Here $\dim_T K, \dim_H K,$ and $\dim_F K$ are the topological,
Hausdorff, and fractal dimensions of $K$, respectively.

Along with commonly used numerical methods for estimating and computing the Lyapunov dimension,
there is an analytical approach that was proposed by Leonov
\cite{Leonov-1991-Vest,LeonovB-1992,BoichenkoLR-2005,Leonov-2008,Leonov-2012-PMM,LeonovK-2015-AMC}.
It is based on the direct Lyapunov method
and uses Lyapunov-like functions.

LEs and the Lyapunov dimension are invariant under linear
changes of variables (see, e.g., \cite{KuznetsovAL-2014-arXiv-LE}).
Therefore we can apply the linear variable change
${\bf\rm y} = S {\bf\rm x}$ with a nonsingular $n \times n$-matrix $S$.
Then system \eqref{eq:ode} is transformed into
\[
 \dot{{\bf\rm y}} = S \,\dot{{\bf\rm x}} = S \,{\bf\rm f}
 (S^{-1}{\bf\rm y}) = \tilde{{\bf\rm f}}({\bf\rm y}).
\]
Consider the linearization along corresponding solution
${\bf\rm y}(t, {\bf\rm y}_0) = S {\bf\rm x}(t, S^{-1} {\bf\rm x}_0)$,
that is,
\begin{equation}\label{eq:lin_eq-new}
 \dot {\bf \rm v} = \tilde{J}({\bf \rm y}(t, {\bf \rm y}_0))\,{\bf \rm v},
 \quad {\bf \rm v} \in \mathbb{R}^n.
\end{equation}
Here the Jacobian matrix is as follows
\begin{align}
 \tilde{J}({\bf\rm y}(t, {\bf\rm y}_0))
 =S \, J({\bf\rm x}(t, {\bf\rm x}_0)) \, S^{-1} \label{jacobian-new}
\end{align}
and the corresponding fundamental matrix satisfies
\(
   Y(t,{\bf \rm y}_0) = S X(t,{\bf \rm x}_0).
\)

For simplicity, let $J({\bf \rm x}) = J({\bf \rm x}(t, {\bf \rm x}_0))$.
Suppose that
$\lambda_1 ({\bf \rm x},S) \geqslant \cdots \geqslant \lambda_n ({\bf \rm x},S)$
are eigenvalues of the symmetrized Jacobian matrix \eqref{jacobian-new}
\begin{equation}
 \frac{1}{2} \left( S J({\bf \rm x}) S^{-1} +
 (S J({\bf \rm x}) S^{-1})^{*}\right).
 \label{SJS}
\end{equation}

\begin{theorem}[\cite{Leonov-2002,Leonov-2012-PMM}]\label{theorem:th1}
Given an integer $j \in [1,n]$ and $s \in [0,1]$,
suppose that there are a continuously
differentiable scalar function $\vartheta: \mathbb{R}^n \rightarrow \mathbb{R}$
and a nonsingular matrix $S$ such that
\begin{equation}\label{ineq:th-1}
 \lambda_1 ({\bf \rm x},S) + \cdots + \lambda_j ({\bf \rm x},S) + s\lambda_{j+1}
 ({\bf \rm x},S) + \dot{\vartheta}({\bf \rm x}) < 0,
 ~ \forall \, {\bf \rm x} \in K.
\end{equation}
Then $\dim_L K \leqslant j+s$.
\end{theorem}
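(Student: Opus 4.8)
The plan is to obtain the bound on $\dim_L K$ from a contraction estimate for the $(j+s)$-dimensional singular value function of the fundamental matrix, whose growth rate I compare with the eigenvalues $\lambda_i(\mathbf{x},S)$ of the symmetrized Jacobian. First I would pass to the transformed system. Since the Lyapunov exponents and the Lyapunov dimension are invariant under the linear change $\mathbf{y}=S\mathbf{x}$ (as recalled before the statement), it suffices to prove $\dim_L K\le j+s$ for $\dot{\mathbf{y}}=\tilde{\mathbf{f}}(\mathbf{y})$, whose Jacobian is $\tilde J=SJ(\mathbf{x})S^{-1}$ and whose fundamental matrix is $Y(t,\mathbf{y}_0)=SX(t,\mathbf{x}_0)$; by construction $\lambda_1(\mathbf{x},S)\ge\cdots\ge\lambda_n(\mathbf{x},S)$ are exactly the eigenvalues of the symmetric matrix in \eqref{SJS}.

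The heart of the argument is a differential inequality for $\omega_{j+s}(Y)=\sigma_1(Y)\cdots\sigma_j(Y)\,\sigma_{j+1}(Y)^{\,s}$. For an integer $k$ I would work with the $k$-th exterior power $\Lambda^k Y$: it satisfies a variational equation driven by the $k$-th additive compound $\tilde J^{[k]}$, its largest singular value equals $\sigma_1(Y)\cdots\sigma_k(Y)$, and, because the additive compound is linear and respects adjoints, the top eigenvalue of $\tfrac12(\tilde J^{[k]}+(\tilde J^{[k]})^{*})$, which equals $\big(\tfrac12(\tilde J+\tilde J^{*})\big)^{[k]}$, is $\lambda_1+\cdots+\lambda_k$. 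The logarithmic-norm bound $\tfrac{d}{dt}\ln\|Z(t)\|\le\mu_{\max}\big(\tfrac12(B+B^{*})\big)$ for $\dot Z=BZ$, applied to $Z=\Lambda^k Y$, then gives
\[
  \tfrac{d}{dt}\ln\big(\sigma_1(Y(t))\cdots\sigma_k(Y(t))\big)\le \lambda_1(\mathbf{y}(t))+\cdots+\lambda_k(\mathbf{y}(t)),\qquad k=j,\,j+1.
\]
Writing $\ln\omega_{j+s}=(1-s)\ln(\sigma_1\cdots\sigma_j)+s\ln(\sigma_1\cdots\sigma_{j+1})$ and combining the two cases yields
\[
  \tfrac{d}{dt}\ln\omega_{j+s}(Y(t))\le \lambda_1(\mathbf{y}(t))+\cdots+\lambda_j(\mathbf{y}(t))+s\,\lambda_{j+1}(\mathbf{y}(t)).
\]
This compound-matrix estimate is the step I expect to be the main obstacle; everything afterwards is bookkeeping.

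Finally I would integrate along a trajectory, which remains in $K$ by invariance, so that hypothesis \eqref{ineq:th-1} applies at every $\mathbf{y}(\tau)$. Using $\int_0^t\dot\vartheta(\mathbf{y}(\tau))\,d\tau=\vartheta(\mathbf{y}(t))-\vartheta(\mathbf{y}_0)$ together with the fact that the continuous left-hand side of \eqref{ineq:th-1} attains a negative maximum $-\varepsilon$ on the compact set $K$, I obtain
\[
  \ln\frac{\omega_{j+s}(Y(t))}{\omega_{j+s}(Y(0))}\le\int_0^t\!\big(\lambda_1+\cdots+\lambda_j+s\lambda_{j+1}\big)\,d\tau\le -\varepsilon t+\vartheta(\mathbf{y}_0)-\vartheta(\mathbf{y}(t)),
\]
which is bounded above by $-\varepsilon t+2\max_{K}|\vartheta|$. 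Hence $\limsup_{t\to\infty}\tfrac1t\ln\omega_{j+s}(Y(t))<0$, and since replacing $Y=SX$ by $X$ only multiplies $\omega_{j+s}$ by a factor lying between fixed powers of the singular values of the constant matrix $S$, the same contraction holds for $\omega_{j+s}(X(t,\mathbf{x}_0))$. By the Kaplan--Yorke formula relating the $(j+s)$-dimensional singular value function to the Lyapunov dimension, this gives $\dim_L\mathbf{x}_0\le j+s$ for every $\mathbf{x}_0\in K$, and taking the supremum over $K$ yields $\dim_L K\le j+s$.
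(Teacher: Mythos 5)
The paper itself gives no proof of Theorem~\ref{theorem:th1}; it is quoted from \cite{Leonov-2002,Leonov-2012-PMM}, so your attempt has to be judged against the method of those sources. Your route is exactly that method: work in the metric defined by $S$, bound the growth of $\sigma_1\cdots\sigma_k$ for $k=j$ and $k=j+1$ via the $k$-th additive compound of the Jacobian and the logarithmic norm, interpolate with weights $1-s$ and $s$ to control $\omega_{j+s}=\sigma_1\cdots\sigma_j\sigma_{j+1}^{\,s}$, and absorb $\dot{\vartheta}$ as an exact derivative that stays bounded along trajectories remaining in the (compact, positively invariant) set $K$. The compound-matrix step is stated correctly, including the identity between the symmetrization of $\tilde J^{[k]}$ and the compound of the symmetrization, whose largest eigenvalue is $\lambda_1+\cdots+\lambda_k$; the only routine caveat is that one should use upper Dini derivatives, since $\ln\bigl(\sigma_1(Y(t))\cdots\sigma_k(Y(t))\bigr)$ need not be differentiable in $t$.

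The genuine gap is your final sentence. From $\limsup_{t\to\infty}\frac1t\ln\omega_{j+s}(X(t,{\bf \rm x}_0))<0$ you pass to $\dim_L {\bf \rm x}_0\le j+s$ ``by the Kaplan--Yorke formula''. With the definition used in this paper --- formula \eqref{formula:kaplan} applied to the exponents $\LE_i=\limsup_{t\to\infty}\frac1t\ln\sigma_i$ --- this implication is not automatic, because the $\limsup$ goes the wrong way:
\begin{equation*}
 \limsup_{t\to\infty}\tfrac1t\ln\omega_{j+s}\;\le\;\LE_1+\cdots+\LE_j+s\,\LE_{j+1},
\end{equation*}
and negativity of the left-hand side does not force negativity of the right-hand side. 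For instance, if $\tfrac1t\ln\sigma_1$ oscillates between $0$ and $1$ while $\tfrac1t\ln\sigma_2$ oscillates in antiphase between $-1$ and $-4$, then for $j=1$, $s=\tfrac12$ one gets $\limsup\tfrac1t(\ln\sigma_1+\tfrac12\ln\sigma_2)=-\tfrac12<0$, yet $\LE_1+\tfrac12\LE_2=\tfrac12>0$ and the Kaplan--Yorke value is $2>\tfrac32$. The cited works sidestep this by defining the Lyapunov dimension directly through the contraction of the singular value function $\omega_d(X(t,\cdot))$ (a Douady--Oesterl\'e-type definition), for which your estimate immediately yields the bound $j+s$; relating that quantity to \eqref{formula:kaplan} requires a separate comparison argument that your proof omits. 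Either restate the conclusion for the singular-value-function dimension, or supply the lemma that bridges the two definitions.
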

Here $\dot{\vartheta}$ is the derivative of $\vartheta$ with respect
to the vector field ${\bf\rm f}$:
$$
 \dot{\vartheta} ({\bf \rm x}) = ({\rm grad}(\vartheta))^{*}{\bf\rm f}({\bf \rm
x}).
$$
The introduction of the matrix $S$ can be regarded as a change of the space metric.

\begin{theorem}[\cite{Leonov-1991-Vest,LeonovB-1992,BoichenkoLR-2005,Leonov-2012-PMM}]
\label{theorem:th2}
Assume that there are a continuously differentiable scalar function $\vartheta$
and a nonsingular matrix $S$ such that
\begin{equation}\label{ineq:th-2}
 \lambda_1 ({\bf \rm x},S) + \lambda_2 ({\bf \rm x},S) +
 \dot{\vartheta}({\bf \rm x}) < 0,
 ~ \forall \, {\bf \rm x} \in \mathbb{R}^n.
\end{equation}
Then any solution of system \eqref{eq:ode} bounded on $[0,+\infty)$
tends to an equilibrium as $t \rightarrow +\infty$.
\end{theorem}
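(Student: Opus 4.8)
The plan is to recast the hypothesis \eqref{ineq:th-2} as a statement about the strict contraction of two-dimensional volume elements along the flow. First note that \eqref{ineq:th-2} is exactly \eqref{ineq:th-1} with $j=1$ and $s=1$, so Theorem~\ref{theorem:th1} already yields $\dim_L K \le 2$ for every compact invariant set $K$; the point of the present statement is to upgrade this dimension bound to a genuine dynamical conclusion. To that end I would work in the metric induced by the nonsingular matrix $S$ and consider, along a trajectory ${\bf \rm x}(t,{\bf \rm x}_0)$, two solutions ${\bf \rm u}_1(t),{\bf \rm u}_2(t)$ of the variational equation \eqref{eq:var_eq} together with the weighted area element
\[
 \alpha(t) = e^{\vartheta({\bf \rm x}(t,{\bf \rm x}_0))}\,
 \bigl| S{\bf \rm u}_1(t) \wedge S{\bf \rm u}_2(t) \bigr|.
\]
Since the eigenvalues of the symmetrized matrix \eqref{SJS} are $\lambda_1({\bf \rm x},S) \ge \cdots \ge \lambda_n({\bf \rm x},S)$, the standard singular-value (Lozinskii) estimate bounds the instantaneous growth rate of $|S{\bf \rm u}_1 \wedge S{\bf \rm u}_2|$ by $\lambda_1({\bf \rm x},S)+\lambda_2({\bf \rm x},S)$; adding the contribution $\dot\vartheta$ of the weight gives
\[
 \frac{d}{dt}\ln\alpha(t) \le
 \lambda_1({\bf \rm x},S)+\lambda_2({\bf \rm x},S)+\dot\vartheta({\bf \rm x}) < 0 .
\]
Thus every weighted two-dimensional area strictly contracts, and on a compact set this contraction is uniformly exponential.

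Next I would use area contraction to exclude recurrent behaviour other than equilibria, via a higher-dimensional Bendixson--Dulac principle. If $\gamma$ were a nonconstant periodic orbit (or a homoclinic/heteroclinic cycle), then $\gamma$ is an invariant closed curve, and any two-dimensional surface $\Sigma_0$ spanning $\gamma$ is carried by the flow to surfaces $\Sigma_t = {\bf \rm x}(t,\Sigma_0)$ that still span $\gamma$. The weighted-area contraction forces $\mathrm{Area}(\Sigma_t)\to 0$, whereas the fixed embedded closed curve $\gamma$ admits a positive lower bound for the area of any spanning surface --- a contradiction. Hence system \eqref{eq:ode} has no nonconstant periodic orbits and no homoclinic or heteroclinic cycles.

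Finally I would analyse the $\omega$-limit set $\Omega$ of a solution ${\bf \rm x}(t,{\bf \rm x}_0)$ that is bounded on $[0,+\infty)$. Boundedness makes $\Omega$ nonempty, compact, connected, and invariant, and because \eqref{ineq:th-2} holds on all of $\mathbb{R}^n$ the area contraction of the first step is valid on $\Omega$. Since $\Omega$ is chain transitive and the contraction rules out all cycles, its internally chain-recurrent set consists solely of equilibria; a connected, chain-transitive invariant set whose only recurrent points are equilibria and which supports no heteroclinic cycle must reduce to a single equilibrium. Therefore $\Omega=\{{\bf \rm x}^{*}\}$ with ${\bf \rm f}({\bf \rm x}^{*})=0$, i.e.\ ${\bf \rm x}(t,{\bf \rm x}_0)\to{\bf \rm x}^{*}$ as $t\to+\infty$.

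I expect the main obstacle to be precisely this last step --- passing from ``areas contract / no periodic orbits'' to convergence to a \emph{single} equilibrium. Establishing that $\Omega$ cannot be a continuum of equilibria or carry a heteroclinic cycle is exactly where the \emph{global} validity of \eqref{ineq:th-2} and the generalized Bendixson argument must be combined with care: the area estimate of the first step should be converted into a genuinely monotone Lyapunov-type quantity on $\Omega$, so that a LaSalle-type invariance argument pins the limit set down to one point rather than merely bounding its dimension.
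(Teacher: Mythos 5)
The paper itself does not prove Theorem~\ref{theorem:th2}: it is quoted with attribution to \cite{Leonov-1991-Vest,LeonovB-1992,BoichenkoLR-2005,Leonov-2012-PMM}, so there is no in-text argument to compare yours against, and your proposal must be judged on its own terms. Your first two steps are sound and are indeed the standard machinery behind results of this type: the weighted area element $\alpha(t)=e^{\vartheta}\,\bigl|S{\bf \rm u}_1\wedge S{\bf \rm u}_2\bigr|$ does satisfy $\tfrac{d}{dt}\ln\alpha\le\lambda_1+\lambda_2+\dot\vartheta$, and the spanning-surface (generalized Bendixson--Dulac) argument correctly excludes nonconstant periodic orbits and finite homoclinic or heteroclinic cycles. (One technical point you gloss over: \eqref{ineq:th-2} is only pointwise negative on $\mathbb{R}^n$, so uniform exponential decay of the area of a spanning surface is available only where the relevant orbits remain in a compact set; this is repairable but must be addressed, since a spanning surface of a periodic orbit cannot be confined a priori.)

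The genuine gap is your final step. From ``no periodic orbits and no finite cycles'' you infer that the internally chain-recurrent set of $\Omega$ consists solely of equilibria, and then that $\Omega$ is a single equilibrium; neither inference is justified. An $\omega$-limit set is internally chain transitive, so \emph{every} point of $\Omega$ is chain recurrent within $\Omega$, and chain recurrence is far weaker than lying on a closed orbit or a finite heteroclinic cycle: the spanning-surface argument says nothing about nontrivial minimal sets that are not periodic orbits, nor about generalized cycles passing through a continuum of equilibria, nor does it by itself preclude $\Omega$ from being a nondegenerate connected set of equilibria. What is actually required here is R.~A.~Smith's autonomous convergence theorem (equivalently his higher-dimensional Poincar\'e--Bendixson theorem): under two-dimensional area contraction, a compact invariant set containing no equilibria must be a single closed orbit, and the $\omega$-limit set of a bounded semi-orbit collapses to a single equilibrium. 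Its proof uses a mechanism beyond volume contraction alone---area contraction forces trajectories inside the limit set to attract one another, reducing the dynamics on $\Omega$ to an essentially one-dimensional situation---and this is precisely the content of the cited sources \cite{LeonovB-1992,BoichenkoLR-2005}. Your closing paragraph correctly identifies this passage as the main obstacle, but identifying it is not overcoming it: as written, the proposal establishes absence of cycles, not convergence to an equilibrium.
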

Thus, if \eqref{ineq:th-2} holds,
then the global attractor of system \eqref{eq:ode}
coincides with its stationary set.

Theorems \ref{theorem:th1} and \ref{theorem:th2}
give the following results for system~\eqref{sys:lorenz-general}.

\begin{theorem}\label{theorem:th3}
Suppose that $\sigma > 1$. \\
If
\begin{equation}
 \left(r+\frac{\sigma}{a}\right)^2 <
 \frac{2(\sigma + 1)}{a}, \label{cond:param1}
\end{equation}
then any solution of system~\eqref{sys:lorenz-general}
bounded on $\left[0,+\infty\right)$
tends to an equilibrium as $t \rightarrow +\infty$. \\
If
\begin{equation}
 \left(r+\frac{\sigma}{a}\right)^2 >
 \frac{2(\sigma + 1)}{a}, \label{cond:param2}
\end{equation}
then
\begin{equation}
 \dim_L{K} \leqslant 3 - \frac{2(\sigma + 2)}{\sigma + 1
 + \sqrt{(\sigma - 1)^2 + a\left(\frac{\sigma}{a}+r\right)^2}}.
 \label{ineq:lyap-dim}
\end{equation}

\end{theorem}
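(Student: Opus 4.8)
The plan is to reduce both assertions to the Lyapunov‑type estimates \eqref{ineq:th-1}--\eqref{ineq:th-2}, exploiting that these are already written in terms of the eigenvalues $\lambda_i({\bf x},S)$ of the symmetrized matrix \eqref{SJS} for a free nonsingular $S$. First I would record the Jacobian
\[
 J(x,y,z)=\begin{pmatrix} -\sigma & \sigma-az & -ay \\ r-z & -1 & -x \\ y & x & -1 \end{pmatrix}
\]
and take $S=\operatorname{diag}(1,\sqrt a,\sqrt a)$. A short computation shows that this choice cancels the $(1,3)$ and $(2,3)$ entries of \eqref{SJS}, so the symmetrized matrix is block‑diagonal: it has the eigenvalue $-1$ together with the two eigenvalues of the symmetric $2\times2$ block with diagonal $(-\sigma,-1)$ and off‑diagonal entry $\beta=\tfrac{\sqrt a}{2}\bigl(\tfrac\sigma a+r-2z\bigr)$. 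Setting $P(z)=\sqrt{(\tfrac{\sigma-1}{2})^2+\beta^2}$, the ordered eigenvalues are $\lambda_{1}=-\tfrac{\sigma+1}{2}+P(z)$, $\lambda_2=-1$, $\lambda_{3}=-\tfrac{\sigma+1}{2}-P(z)$, where the ordering holds precisely because $\sigma>1$; in particular they depend on ${\bf x}$ through $z$ alone.

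For the dimension bound I would apply Theorem~\ref{theorem:th1} with $j=2$ and $\vartheta\equiv0$. The key point is an a priori localization of $K$: the cylinder $C$ of \eqref{pinv} is positively invariant and absorbing, so $K\subset C$, and $y^2+(z-r)^2\le r^2$ forces $0\le z\le 2r$ on $K$. Since $P$ is convex in $z$ with minimum at $z^{*}=\tfrac12(\tfrac\sigma a+r)$, its maximum over $[0,2r]$ is attained at an endpoint, and the standing hypothesis $\sigma>ar$ gives $P(0)\ge P(2r)$; hence $P(z)\le P(0)$ for every point of $K$. Therefore
\[
 \lambda_1+\lambda_2+s\,\lambda_3\le -\tfrac{(\sigma+1)(1+s)}{2}-1+(1-s)\,P(0)\qquad\text{on }K,
\]
and \eqref{ineq:th-1} holds whenever the right‑hand side is negative. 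Solving this inequality for $s$, writing $2P(0)=\sqrt{(\sigma-1)^2+a(\tfrac\sigma a+r)^2}$ and passing to the infimum over admissible $s\in[0,1]$, one finds $\dim_L K\le 2+s$ with $s=\tfrac{2P(0)-\sigma-3}{2P(0)+\sigma+1}$, which rearranges into \eqref{ineq:lyap-dim}; condition \eqref{cond:param2} marks exactly the regime complementary to the first part in which this estimate is the operative one.

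For the first assertion the cylinder is of no help, since Theorem~\ref{theorem:th2} requires \eqref{ineq:th-2}, i.e. $\lambda_1+\lambda_2+\dot\vartheta<0$ on all of $\mathbb{R}^3$, where $P(z)$ is unbounded; a genuine $\vartheta$ is thus needed to tame the $z$‑direction. I would take the quadratic form $\vartheta=\tfrac p2x^2+\tfrac q2y^2+\tfrac w2z^2+\kappa z$ and fix $w=pa+q$, $\kappa=-(p\sigma+qr)$ so that the $xyz$‑ and $xy$‑terms of $\dot\vartheta$ cancel, leaving $\dot\vartheta=-p\sigma x^2-qy^2-(pa+q)z^2+(p\sigma+qr)z$. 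With $p,q\ge0$ the $x$‑ and $y$‑directions become harmless, and after the elementary bound $P(z)\le\tfrac{1}{2\rho}\bigl(P(z)^2+\rho^2\bigr)$ the expression $\lambda_1+\lambda_2+\dot\vartheta$ turns into a quadratic in $z$ plus a negative‑definite part; optimizing over $p,q,\rho$ to make this quadratic negative reduces to the algebraic inequality \eqref{cond:param1}, after which Theorem~\ref{theorem:th2} yields convergence to an equilibrium.

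The delicate heart of the argument is this last construction: exhibiting one function $\vartheta$ for which \eqref{ineq:th-2} holds on the entire phase space and squeezing out of it the precise constant $\tfrac{2(\sigma+1)}{a}$ of \eqref{cond:param1}. The obstacle is that the admissible $\vartheta$ must at once annihilate the two nonlinear cross‑terms in $\dot\vartheta$, keep the $x,y$‑directions bounded (which constrains the signs of $p,q$), and still dominate the linear‑in‑$z$ growth of $P(z)$; it is the interaction of these competing requirements with the inequality $\sigma>ar$ that fixes the threshold. By comparison, the second part is comparatively routine once \eqref{pinv} is available, as it needs only the pointwise bound $P(z)\le P(0)$ and no auxiliary function.
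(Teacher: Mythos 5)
Your treatment of the dimension bound is correct but follows a genuinely different route from the paper's. Your $S=\mathrm{diag}(1,\sqrt a,\sqrt a)$ is a scalar multiple of the paper's $\mathrm{diag}(-a^{-1/2},1,1)$, so the symmetrized eigenvalues coincide, and your $2+s$ with $s=\frac{2P(0)-\sigma-3}{2P(0)+\sigma+1}$, where $2P(0)=\bigl[(\sigma-1)^2+a(\tfrac{\sigma}{a}+r)^2\bigr]^{1/2}$, rearranges exactly to \eqref{ineq:lyap-dim}. The paper, however, keeps $\vartheta$ nontrivial (proportional to $(1-s)V$ for a quadratic form $V$) precisely so that \eqref{ineq:th-1} holds on all of $\mathbb{R}^3$ with no information about $K$; you instead take $\vartheta\equiv0$ and pay with the localization $K\subset C$ and the bound $P(z)\le P(0)$ on $0\le z\le 2r$ (which checks out: convexity plus $\sigma>ar$ gives $P(0)\ge P(2r)$). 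Two repairs are needed: the paper proves only that $C$ in \eqref{pinv} is positively invariant, not absorbing, so to get $K\subset C$ you should argue that a bounded invariant set cannot meet the complement of $C$ (backward orbits there are unbounded, by the same differential inequalities); and, exactly as in the paper's own proof, Theorem~\ref{theorem:th1} with $j=2$ requires $s\ge 0$, so the stated right-hand side is obtained this way only when $2P(0)\ge\sigma+3$ --- in the remaining window allowed by \eqref{cond:param2} one must pass to $j=1$ to get a bound below $2$.

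The genuine gap is in the first assertion, where you stop at the decisive step. The ansatz and the cancellations $w=pa+q$, $\kappa=-(p\sigma+qr)$ are right, but ``optimizing over $p,q,\rho$ reduces to \eqref{cond:param1}'' is the whole content of that half of the theorem, and it is neither carried out nor, as phrased, accurate. Carrying it out: take $p=\tfrac{1}{2\rho}$, $q=\tfrac{a}{2\rho}$, so $\dot\vartheta=-\tfrac{\sigma}{2\rho}x^2-\tfrac{a}{2\rho}y^2-\tfrac{a}{\rho}z^2+\tfrac{\sigma+ar}{2\rho}z$; since $P(z)^2=P(0)^2-(\sigma+ar)z+az^2$, the bound $P(z)\le\tfrac{P(z)^2+\rho^2}{2\rho}$ cancels the linear $z$-term and half the quadratic one, leaving $\lambda_1+\lambda_2+\dot\vartheta\le \tfrac{P(0)^2}{2\rho}+\tfrac{\rho}{2}-\tfrac{\sigma+3}{2}$, which at $\rho=P(0)$ is negative iff $2P(0)<\sigma+3$, i.e.\ $(\tfrac{\sigma}{a}+r)^2<\tfrac{8(\sigma+1)}{a}$. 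So your scheme does prove the first assertion (indeed under a weaker hypothesis), but the ``precise constant $\tfrac{2(\sigma+1)}{a}$'' you promise to squeeze out never appears; the natural threshold of your method is $\tfrac{8(\sigma+1)}{a}$, the point where the right-hand side of \eqref{ineq:lyap-dim} crosses $2$. Note also that your sign constraint $p,q\ge0$ excludes the paper's own witness: its $\vartheta=V/P(0)$ has $q=2\gamma/P(0)$ with $\gamma=\tfrac{\sigma+ar}{2(r-1)}<0$ whenever $r<1$, and \eqref{cond:param1} forces $r<\tfrac{\sigma+1}{2\sigma}<1$ (moreover that $\vartheta$ does not cancel the $xy$-term at all --- it absorbs it into $-2\gamma(x-y)^2$). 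So you could not have fallen back on the paper's function; the computation above, or an equivalent one, has to be done explicitly for the proof to exist.
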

\begin{proof}

We use the matrix
\begin{equation*}
 S = \left(
 \begin{array}{ccc}
 -a^{-\frac{1}{2}} & 0 & 0 \\
 0 & 1 & 0 \\
 0 & 0 & 1
 \end{array}
 \right).
\end{equation*}
Then the eigenvalues of the corresponding matrix \eqref{SJS}
are the following
\begin{align*}
	& \lambda_2 = -1, & \\
	& \lambda_{1,3} = -\frac{\sigma+1}{2} \pm
	\frac{1}{2}\left[(\sigma - 1)^2 +
	a \left(2 z - \frac{\sigma + a r}{a}\right)^2
	\right]^{\frac{1}{2}}. &
\end{align*}
To check property \eqref{ineq:th-1} of Theorem \ref{theorem:th1}
and property \eqref{ineq:th-2} of Theorem \ref{theorem:th2},
we can consider the Lyapunov-like function
\begin{equation*}
 \vartheta(x,y,z) = \frac{2(1-s) V(x,y,z)}{\left[(\sigma - 1)^2
 + a\left(\frac{\sigma}{a}+r\right)^2 \right]^{\frac{1}{2}}},
\end{equation*}
where
\begin{equation*}
 V(x,y,z) = \frac{\gamma}{\sigma} x^2 + \gamma y^2 +
 \gamma \left( 1 + \frac{a}{\sigma}\right)z^2 - 2 \gamma (r-1) z,
 \quad \gamma = \frac{\sigma + a r}{2(r-1)}.
\end{equation*}

Finally, for system \eqref{sys:lorenz-general}
with given $S$ and $\vartheta$,
if condition \eqref{cond:param2} is satisfied and
\begin{equation*}
 s > \frac{-(\sigma + 3) + \sqrt{(\sigma - 1)^2
 + a\left(\frac{\sigma}{a}+r\right)^2}}
 {\sigma + 1 + \sqrt{(\sigma - 1)^2 +
 a\left(\frac{\sigma}{a}+r\right)^2}}, \label{ineq:s}
\end{equation*}
then Theorem \ref{theorem:th1} gives \eqref{ineq:lyap-dim}.
If condition \eqref{cond:param1} is valid and $s = 0$,
then the conditions of Theorem \ref{theorem:th2} are satisfied
and any solution bounded on $[0, +\infty)$ tends to
an equilibrium as $t \to +\infty$. \qed
\end{proof}


Note that for  $\sigma = 4$, $r = 687.5$, and $a = 0.0052$
the analytical estimate of the Lyapunov dimension of the corresponding self-excited attractor
is as follows
\[
 \dim_L K < 2.890997461...
\]
and the values of the local Lyapunov dimension at equilibria are
\[
 \dim_L {S_0} = 2.890833450..., \quad \dim_L {S_{1,2}} = 2.009763700... .
\]
Numerically, by an algorithm in Appendix~\ref{appSVD},
the Lyapunov dimension of the self-excited attractor is  ${\rm LD} = 2.1405$.

The analytical estimate of the Lyapunov dimension of the hidden attractor for  $\sigma = 4$, $r = 700$, and $a = 0.0052$ is as follows
\[
 \dim_L K < 2.891882349...,
\]
and the local Lyapunov dimension at the stationary points
are the following
\[
 \dim_L {S_0} = 2.891767634..., \dim_L {S_{1,2}} = 1.966483617...
\]
Numerically, the Lyapunov dimension of the hidden attractor is ${\rm LD} = 2.1322$.

Thus, the Lyapunov dimensions of B-attractor (which involve equilibrium $S_0$)
and the global attractor are very close to the analytical estimate.

In the general case the coincidence of the analytical upper estimate
with the local Lyapunov dimension at a stationary point
gives the exact value of the Lyapunov dimension of the global attractor
(see, e.g., studies of various Lorenz-like systems
\cite{Leonov-1991-Vest,LeonovB-1992,LeonovPS-2011,Leonov-2012-PMM,LeonovKKK-2015,LeonovK-2015-AMC}).

\newpage
\appendix
\section*{Appendix}
\section{Description of the physical problem}
\label{appendix:phys-problem}

Consider the convection of viscous incompressible fluid motion inside the ellipsoid
\[
 \left(\frac{x_1}{a_1}\right)^2 + \left(\frac{x_2}{a_2}\right)^2 +
 \left(\frac{x_3}{a_3}\right)^2 = 1,
 \quad a_1 > a_2 > a_3 > 0
\]
under the condition of stationary inhomogeneous external heating.
We assume that the ellipsoid and heat sources rotate
with constant velocity ${\bf \Omega_0}$ around the axis.
Vector ${\bf l_0}$ determines the orientation of the ellipsoid
and has the same direction as the gravity vector ${\bf g}$.
Vector ${\bf g}$ is stationary with respect to the ellipsoid motion.
The value ${\bf \Omega_0}$ is assumed to be such that
the centrifugal forces can be neglected when
compared with the influence of the gravitational field.
Consider the case when the ellipsoid rotates around the axis $x_3$
that has a constant angle $\alpha$ with gravity vector
${\bf g}$ ($|{\bf g}| = g$).
The vector ${\bf g}$ is placed in the plane $x_1 x_3$.
Then, ${\bf \Omega_0} = (0,\,0,\,\Omega_0)$ and
${\bf l_0} = (a_1 \sin \alpha, \, 0, \, -a_3 \cos \alpha)$.
Let the steady-state temperature difference
$\Delta {\bf \hat{T}} = (q_0, \, 0, \, 0)$
be generated along the axis $x_1$ (Fig.~\ref{fig:ellipsoid}).
The corresponding mathematical model (three-mode model of convection)
was obtained by Glukhovsky and Dolzhansky \cite{GlukhovskyD-1980}
in the form (see \eqref{sys:conv_fluid})
\begin{equation*}
\begin{cases}
\dot{x} $ = $ - \sigma x + C z + A y z, \\
\dot{y} $ = $ R_a - y -x z, \\
\dot{z} $ = $ - z + x y.
\end{cases}
\end{equation*}

Here
\begin{align*}
 \sigma &= \frac{\lambda}{\mu}, \qquad T_a = \frac{\Omega_0^2}{\lambda^2}, \quad
 R_a = \frac{g \beta a_3 q_0}{2 a_1 a_2 \lambda \mu}, & \\
 A &= \frac{a_1^2 - a_2^2}{a_1^2 + a_2^2} \cos^2 \alpha \, T_a^{-1}, \quad
 C = \frac{2 a_1^2 a_2}{a_3 (a_1^2 + a_2^2)} \sigma \sin \alpha, &\\
 x(t) &= \mu^{-1} \left(\omega_3(t) + \frac{g \beta a_3 \cos \alpha}{2 a_1 a_2 \Omega_0} q_3(t) \right), \quad
 y(t) = \frac{g \beta a_3}{2 a_1 a_2 \lambda \mu} q_1(t), &\\
 z(t) &= \frac{g \beta a_3}{2 a_1 a_2 \lambda \mu} q_2(t), &
\end{align*}
and $\lambda, \mu, \beta$ are the coefficients of viscosity,
heat conduction, and volume expansion, respectively;
$q_{1}(t)$, $q_{2}(t)$, and $q_{3}(t)$
($q_3(t) \equiv 0$) are
temperature differences on the principal axes of the ellipsoid;
$\omega_1 (t)$, $\omega_2 (t)$, and $\omega_3 (t)$ are the projections
of the vectors of fluid angular velocities
on the axes $x_1$, $x_2$, and $x_3$, respectively. Here
\[
 \omega_1(t) = - \frac{g \beta a_3}{2 a_1 a_2 \Omega_0} \cos \alpha \, q_1(t), \quad
 \omega_2(t) = - \frac{g \beta a_3}{2 a_1 a_2 \Omega_0} \cos \alpha \, q_2(t).
\]
The parameters $\sigma$, $T_a$, and $R_a$ are the Prandtl, Taylor, and Rayleigh numbers, respectively.

\begin{figure}[h!]
 \centering
 \includegraphics[width=0.25\textwidth]{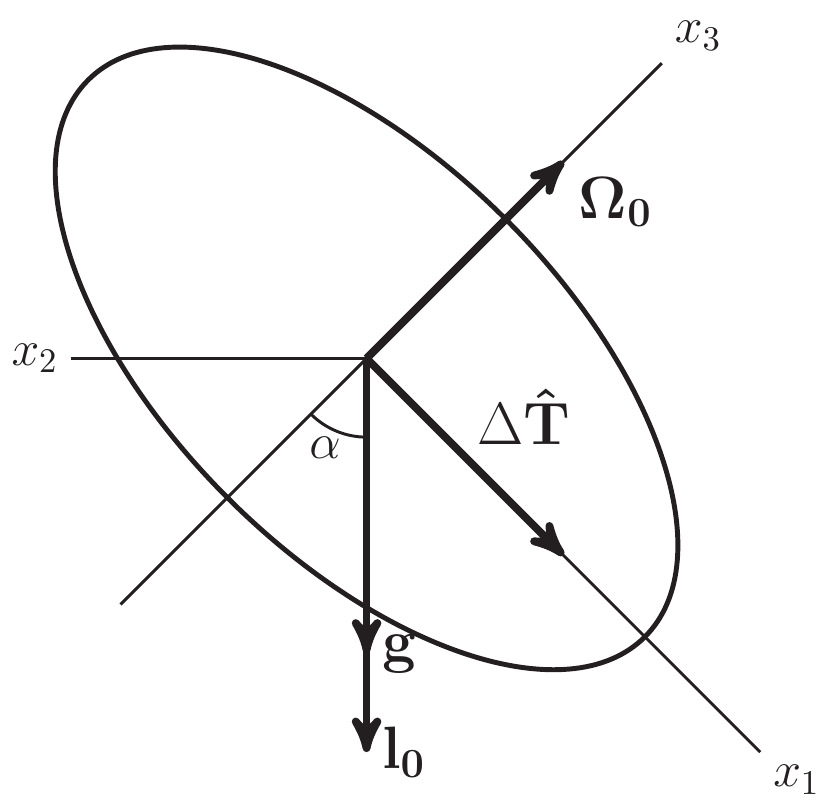}
 \caption{Illustration of the problem}
 \label{fig:ellipsoid}
\end{figure}

The linear change of variables \cite{GlukhovskyD-1980}
\begin{equation*}
   x \to x, \quad y \to C^{-1} y, \quad z \to C^{-1} z,
\end{equation*}
transforms system \eqref{sys:conv_fluid} into the system
\begin{equation}\label{sys:glukh-golzh}
\begin{cases}
\dot{x} $ = $ - \sigma x + z + A_c y z, \\
\dot{y} $ = $ R_c - y -x z, \\
\dot{z} $ = $ - z + x y.
\end{cases}
\end{equation}
with
\[
	R_c = R_a C, \quad A_c = \frac{A}{C^2}.
\]

After the linear transformation (see, e.g., \cite{LeonovB-1992}):
\begin{equation}\label{sys:glukh-dolzh:change_var}
   x \to x, \quad y \to R_c - \frac{\sigma}{R_c A_c + 1} z,
   \quad z \to \frac{\sigma}{R_c A_c + 1} y,
\end{equation}
system \eqref{sys:glukh-golzh} takes the form of \eqref{sys:lorenz-general}
with
\begin{equation}
	a = \frac{A_c \sigma^2}{(R_c A_c + 1)^2},
	\quad r = \frac{R_c}{\sigma}(R_c A_c + 1).
	\label{sys:glukh-dolzh:change_var:param}	
\end{equation}

\section{Proof of Proposition \ref{proposition:gen_lorenz:stability}}
\label{appendix:stability}


For system \eqref{sys:lorenz-general}
the characteristic polynomial of the Jacobian matrix of the right-hand side
at the point ${\bf \rm x}_0 = (x_0,y_0,z_0) \in \mathbb{R}^3$
has the form
\[
\chi({\bf \rm x}_0) = \lambda^3 + p_1({\bf \rm x}_0)
\lambda^2 + p_2({\bf \rm x}_0) \lambda + p_3({\bf \rm x}_0),
\]
where
\begin{align*}
& p_1 ({\bf \rm x}_0) = \sigma + 2, \\
& p_2 ({\bf \rm x}_0) = x_0^2 + a y_0^2 - a z_0^2 +
 (\sigma + a r)z_0 - r \sigma + 2 \sigma + 1, \\
& p_3 ({\bf \rm x}_0) = \sigma x_0^2 + a y_0^2 -
a z_0^2 - 2 a x_0 y_0 z_0 + (\sigma + a r) x_0 y_0 +
(\sigma + a r) z_0 - r \sigma + \sigma.
\end{align*}

Applying the Hurwitz criterion, 
the necessary and sufficient stability conditions for stationary point ${\bf \rm x}_0$
are the following
\begin{align}
 p_1({\bf \rm x}_0) &> 0, \\
 p_2({\bf \rm x}_0) &> 0, \label{eq:hurw_crit:cond1} \\
 p_3({\bf \rm x}_0) &> 0, ~ \text{and} \label{eq:hurw_crit:cond2} \\
 p_1({\bf \rm x}_0) p_2({\bf \rm x}_0) - p_3({\bf \rm x}_0) &> 0.
 \label{eq:hurw_crit:cond3}
\end{align}

Equilibria $S_{1,2}$ exist for $r>1$, and we can check that $\chi (S_1) = \chi (S_2)$.
For further analysis we can introduce
\begin{equation}
 D = a \left(\sqrt{(\sigma - a r)^2 + 4 \sigma a} - (\sigma - a r)\right) > 0.
\end{equation}
Then for stationary points \eqref{eq:equil_s12}, condition
\eqref{eq:hurw_crit:cond1} takes the form
\begin{equation}
 p_2 (S_{1,2}) = \frac{2}{D}\left( C_1 \sqrt{(\sigma - a r)^2 + 4 \sigma a} -
 C_1 (\sigma - a r) - 2 \sigma^2 a (\sigma - a r) \right) > 0,
 \label{eq:hurw_crit:cond1:s12}
\end{equation}
where
$$
 C_1 = \sigma (\sigma - a r)^2 + a^2 r + a \sigma^2 > 0.
$$
Because $\sigma > a r > 0$, we have
\begin{align*}
 C_1 \sqrt{(\sigma - a r)^2 + 4 \sigma a} &> C_1 (\sigma - a r)
 + 2 \sigma^2 a (\sigma - a r) \quad \text{iff}\\
 (\sigma - a r)^2 + 4 \sigma a &> \left((\sigma - a r) +
 \frac{2 \sigma^2 a (\sigma - a r)}{C_1} \right)^2
 \quad \text{iff} \\
 4 \sigma a &> \frac{4 \sigma^2 a (\sigma - a r)^2}{C_1} +
 \frac{4 \sigma^4 a^2 (\sigma - a r)^2}{C_1^2}
 \quad \text{iff}\\
 C_1^2 &> \sigma (\sigma - a r)^2 C_1 + \sigma^3 a (\sigma - a r)^2.
\end{align*}
The last inequality is satisfied because
\[
 \frac{1}{a} \left( C_1^2 - \sigma (\sigma - a r)^2 C_1 -
 \sigma^3 a (\sigma - a r)^2 \right) =
 \sigma a r (\sigma - a r)^2 + a (\sigma^2 + a r)^2 > 0.
\]
This implies \eqref{eq:hurw_crit:cond1:s12}.

Condition \eqref{eq:hurw_crit:cond2} for $S_{1,2}$ takes the form
\begin{align}
 p_3 (S_{1,2}) = &\frac{2\sigma}{D} \left( \sqrt{(\sigma - a r)^2
 + 4 \sigma a} - (\sigma - a r + 2 a)\right) \cdot
 \nonumber\\
 &\cdot \left((\sigma - a r)^2 + 4 \sigma a - (\sigma - a r)
 \sqrt{(\sigma - a r)^2 + 4 \sigma a}) \right) > 0.
 \label{eq:hurw_crit:cond2:s12}
\end{align}
Since
\[
 (\sigma - a r)^2 + 4 \sigma a - (\sigma - a r + 2 a)^2 = 4 a^2 (r - 1) > 0
\]
and
\[
 (\sigma - a r)^2 + 4 \sigma a > (\sigma - a r) \sqrt{(\sigma - a r)^2
 + 4 \sigma a} \quad \text{iff} \quad
 \sqrt{(\sigma - a r)^2 + 4 \sigma a} > (\sigma - a r),
\]
condition \eqref{eq:hurw_crit:cond2:s12} is also satisfied.

Condition \eqref{eq:hurw_crit:cond3} for stationary points $S_{1,2}$ is as follows
\begin{align}
p_1 (S_{1,2}) p_2 (S_{1,2}) - p_3 (S_{1,2}) = &
\frac{2}{D} \left(C_2 \sqrt{(\sigma - a r)^2 + 4 \sigma a} - \right.\nonumber\\
&\left.- C_2 (\sigma - a r) - 2\sigma^2 a (\sigma (\sigma - a r) - 4 a)\right)>0,
 \label{eq:hurw_crit:cond3:s12}
\end{align}
where
\[
 C_2 = \left((\sigma (\sigma - a r) - a)^2 + a^2 (r \sigma + 2 r - 1)
+ a \sigma^2 (\sigma - 2) \right).
\]
If $\sigma > 2$, then $C_2 > 0$
and we can derive a chain of inequalities for \eqref{eq:hurw_crit:cond3:s12}:
\begin{align*}
 C_2 \sqrt{(\sigma - a r)^2 + 4 \sigma a} &> C_2 (\sigma - a r)
 + 2\sigma^2 a (\sigma (\sigma - a r) - 4a)
 \quad \text{iff} \\
 (\sigma - a r)^2 + 4 \sigma a &> \left((\sigma - ar) +
 \frac{2\sigma^2 a (\sigma (\sigma - a r) - 4 a)}{C_2}\right)^2 \quad
 \text{iff} \\
 4 \sigma a &> \frac{4\sigma^2 a (\sigma - a r) (\sigma (\sigma - a r) - 4a)}{C_2}
 + \frac{4\sigma^4 a^2 (\sigma (\sigma - a r) - 4 a)^2}{C_2^2} \quad \text{iff} \\
 C_2^2 &> \sigma (\sigma - a r) (\sigma (\sigma - a r) - 4 a) C_2 + \sigma^3 a
 (\sigma (\sigma - a r) - 4 a)^2.
\end{align*}
We can divide the last inequality by $\left(-a^2\right)$
and rewrite it in the form of polynomial
\begin{align*}
 a^2 \sigma^2 (\sigma -2) r^3 - a \left(2\sigma^4 - 4\sigma^3 - 3 a
 \sigma^2 + 4 a \sigma + 4 a\right) r^2
 + \sigma^2 \left(\sigma^3 + 2(3 a - 1)\sigma^2 - \right. & \\
 \left. - 8 a \sigma + 8 a\right) r - \sigma^3 \left(\sigma^3 + 4\sigma^2
 - 16a\right)& < 0. \label{eq:ineq:rA1}
\end{align*}
This inequality corresponds to the stability condition for the equilibria $S_{1,2}$.

\qed

\section{Proofs of Lemma \ref{theorem:gen_lorenz:dissip}
 and the completeness of system \eqref{sys:lorenz-general}} 
\label{appendix:dissip}
Suppose that the Lyapunov function has the form
\begin{equation}\label{lyap_func_1}
 V(x,y,z) = \frac{1}{2} \left[ x^2 + y^2 + (a+1)
 \left(z - \frac{\sigma + r}{a + 1} \right)^2 \right].
\end{equation}

Here $V(x,y,z) \to \infty$ as  $|(x,y,z)| \to \infty$.
For an arbitrary solution ${\bf \rm x}(t) = (x(t),y(t),z(t))$
of system \eqref{sys:lorenz-general} we have
\begin{align*}
 \dot{V}(x,y,z) &= x (-\sigma x + \sigma y - a yz)
 + y (rx - y -xz) + ((a + 1) z - (\sigma +r)) (-z + xy)\\
 &= -\sigma x^2 - y^2 - (a + 1) z^2 + (\sigma + r)z.
\end{align*}

Suppose that $\varepsilon \in \left(0, (a+1)\right)$ and
$c = \min\left\{\sigma, 1, (a+1)-\varepsilon\right\} > 0$.
Then
\begin{align*}
\dot{V}(x,y,z) &= -\sigma x^2 - y^2 -
((a + 1) - \varepsilon) z^2 - \varepsilon z^2 + (\sigma + r)z \\
&= -\sigma x^2 - y^2 - ((a + 1) - \varepsilon) z^2 -
\left(\sqrt{\varepsilon} z-\frac{(\sigma + r)}{2\sqrt{\varepsilon}}\right)^2+
\frac{(\sigma + r)^2}{4\varepsilon} \\
&\leq - c (x^2 + y^2 + z^2) + \frac{(\sigma + r)^2}{4\varepsilon}.
\end{align*}

Suppose that $x^2 + y^2 + z^2 \geq R^2$. Then a positive $\varkappa$ exists such that
\[ \dot{V}(x,y,z) \leq -c R^2 + \frac{(\sigma + r)^2}{4\varepsilon} < -\varkappa \quad \text{for } \quad R^2 >
\frac{1}{c} \frac{(\sigma + r)^2}{4\varepsilon}.\]
We choose a number $\eta > 0$ such that
\[ \left\{(x,y,z)~|~ V(x,y,z) \leq \eta \right\} \supset
\left\{(x,y,z)~|~ x^2 + y^2 + z^2 \leq R^2 \right\}, \]
i.e., the relation $x^2 + y^2 + z^2 \leq R^2$ implies that
\[ x^2 + y^2 + (a+1) \left(z - \frac{\sigma + r}{a + 1} \right)^2 =
x^2 + y^2 + z^2 + a z^2 - 2(\sigma+r)z + \frac{(\sigma+r)^2}{a+1} \leq 2\eta.\]
Since
\[ -2(\sigma+r)z ~ \leq ~ 2(\sigma+r) |z| ~ \leq ~ 2(\sigma+r) R,\]
it is sufficient to choose $\eta > 0$ such that
\[ (a+1)R^2 + 2(\sigma+r)R + \frac{(\sigma+r)^2}{a+1} \leq 2\eta,
\quad \text{i.e.} \quad
\eta \geq \frac{1}{2}(a+1)\left(R+\frac{\sigma+r}{a+1}\right)^2.
\]
Further, we can apply Theorem \ref{theorem:dissip1},
which implies the Lemma.
\qed

\medskip

Using  Lyapunov function \eqref{lyap_func_1},
we can prove the boundedness of solutions of
system \eqref{sys:lorenz-general} for  $t \leq 0$.
Note that
\[
\left(2\left(z - \frac{\sigma + r}{a+1}\right)^2+
\frac{(\sigma+r)^2}{2(a+1)^2}\right) -
\left(z - \frac{\sigma + r}{2 (a+1)} \right)^2 =
\left(z - \frac{3(\sigma + r)}{2 (a+1)} \right)^2 \geq 0,
\]
so the inequality
\[
\left(z - \frac{\sigma + r}{2 (a+1)} \right)^2 \leq
\left(2\left(z - \frac{\sigma + r}{a+1}\right)^2+
\frac{(\sigma+r)^2}{2(a+1)^2}\right)
\]
is satisfied.
This implies that
\begin{align*}
 \dot{V} &\geq 2\sigma\left(-\frac{1}{2}x^2\right) +
 2\left(-\frac{1}{2}y^2\right) -
 2 (a+1)\left[\left(z - \frac{\sigma + r}{a+1}\right)^2+
 \frac{(\sigma+r)^2}{4(a+1)^2}\right] +
 \frac{1}{4}\frac{(\sigma+r)^2}{a+1} \\
 &= 2\sigma\left(-\frac{1}{2}x^2\right) + 2\left(-\frac{1}{2}y^2\right) +
 4 \left(-\frac{1}{2} (a+1) \left(z - \frac{\sigma+r}{a+1}\right)^2 \right) -
 \frac{1}{4}\frac{(\sigma+r)^2}{a+1}\\
 &\geq 2\sigma(-V) + 2(-V) + 4(-V) - \frac{1}{4}\frac{(\sigma+r)^2}{a+1}.
\end{align*}
Suppose that $k = 2\sigma + 2 + 4$, and $m = \frac{1}{4}\frac{(\sigma+r)^2}{a+1}$.
Then
\[
 \dot{V} + k V \geq -m.
\]
This implies that
\[
 \frac{d}{dt}(e^{kt}V) = e^{kt}\dot{V} + ke^{kt}V \geq -e^{kt}m.
\]
Thus for $t \leq 0$ we have
\[
 V(0) - e^{kt}V(t) \geq (m e^{kt} - m)/k
\]
or
\[
 V(t) \leq e^{-kt} V(0) + (m e^{-kt} - m)/k.
\]
This implies that $V$ does not tend to infinity
in a finite negative time.
Therefore, any solution $\left(x(t),y(t),z(t)\right)$ of system
\eqref{sys:lorenz-general} does not tend to infinity
in a finite negative time. Thus, differential equation
\eqref{sys:lorenz-general} generates a dynamical system
for $t \in \mathbb{R}$.

\section{Computation of Lyapunov exponents and Lyapunov dimension using MATLAB}\label{appSVD}

The singular value decomposition (\emph{SVD}) of a fundamental matrix ${\rm X}(t)$ has the from
\[
{\rm X}(t)={\rm U}(t){\rm \Sigma}(t){\rm V}^{T}(t): \quad
{\rm U}(t)^T{\rm U}(t) \equiv I \equiv {\rm V}(t)^T{\rm V}(t),
\]
where ${\rm \Sigma}(t)=\text{\rm diag}\{\sigma_1(t),...,\sigma_n(t)\}$
is a  diagonal matrix with positive real diagonal entries
known as \emph{singular values}.
The singular values are the square roots of the eigenvalues
of the matrix ${\rm X}(t)^{*}{\rm X}(t)$ (see \cite{HornJ-1994-book}).
Lyapunov exponents are defined as the upper bounds of the
exponential growth rate of the singular values
of the fundamental matrix of linearized system (see eq.~\eqref{defLE}).

We now give a MATLAB implementation of the discrete SVD method
for computing Lyapunov exponents based on the
product SVD algorithm (see, e.g., \cite{Stewart-1997,DieciL-2008}).

{
\beginMatlab{\textbf{productSVD.m} -- product SVD algorithm}
\fontsize{8}{8}\selectfont
\begin{lstlisting}
function [U, R, V] = productSVD(initFactorization, nIterations)
% Parameters:
%   initFactorization - array containing factor matrices of the
%                       fundamental matrix X, such that:
%          X = initFactorization(:,:,1) * ... * initFactorization(:,:,end);
%   nIterations - number of iterations in the product SVD algorithm.

% dimOde - dimension of the ODEs, nFactors - number of factor matrices
[~, dimOde, nFactors] = size(initFactorization);

% A - 2D array of matrices storing the factor matrices at each iteration
A = zeros(dimOde, dimOde, nFactors, nIterations);
A(:, :, :, 1) = initFactorization;

% Q - array of matrices storing orhogonal matrices of the QR decomposition
Q = zeros(dimOde, dimOde, nFactors+1);

% U, V - orthogonal matrices in the SVD decomposition
U = eye(dimOde); V = eye(dimOde);

% R - array of upper triangular factor matrices, such that after
% the last iteration \Sigma = R(:,:,1) * ... * R(:,:,end)
R = zeros(dimOde, dimOde, nFactors);

% Main loop
for iIteration = 1 : nIterations
    Q(:, :, nFactors + 1) = eye(dimOde, dimOde);
    for jFactor = nFactors : -1 : 1
        C = A(:, :, jFactor, iIteration) * Q(:, :, jFactor+1);
        [Q(:, :, jFactor), R(:, :, jFactor)] = qr(C);
        for kCoord = 1 : dimOde
            if R(kCoord, kCoord, jFactor) < 0
                R(kCoord, :, jFactor) = -1 * R(kCoord, :, jFactor);
                Q(:, kCoord, jFactor) = -1 * Q(:, kCoord, jFactor);
            end;
        end;
    end;

    if mod(iIteration, 2) == 1
        U = U * Q(:, :, 1);
    else
        V = V * Q(:, :, 1);
    end

    for jFactor = 1 : nFactors
        A(:, :, jFactor, iIteration + 1) = R(:, :, nFactors-jFactor+1)';
    end
end

end
\end{lstlisting}
}

{
\beginMatlab{\textbf{computeLEs.m} -- computation of the Lyapunov exponents}
\fontsize{8}{8}\selectfont
\begin{lstlisting}
function LEs = computeLEs(extOde, initPoint, tStep, ...
                                nFactors, nSvdIterations, odeSolverOptions)
% Parameters:
%   extOde - extended ODE system (system of ODEs + var. eq.);
%   initPoint -  initial point;
%   tStep - time-step in the factorization procedure;
%   nFactors - number of factor matrices in the factorization procedure;
%   nSvdIterations - number of iterations in the product SVD algoritm;
%   odeSolverOptions - solver options (sover = ode45);

% Dimension of the ODE :
dimOde = length(initPoint);

% Dimension of the extended ODE (ODE + Var. Eq.):
dimExtOde = dimOde * (dimOde + 1);

tBegin = 0; tEnd = tStep;
tSpan = [tBegin, tEnd];
initFundMatrix = eye(dimOde);
initCond = [initPoint(:); initFundMatrix(:)];

X = zeros(dimOde, dimOde, nFactors);

% Main loop : factorization of the fundamental matrix
for iFactor = 1 : nFactors
    [~, extOdeSolution] = ode45(extOde, tSpan, initCond, odeSolverOptions);

    X(:, :, iFactor) = reshape(...
                        extOdeSolution(end, (dimOde + 1) : dimExtOde), ...
                                                           dimOde, dimOde);
    currInitPoint = extOdeSolution(end, 1 : dimOde);
    currInitFundMatrix = eye(dimOde);

    tBegin = tBegin + tStep;
    tEnd = tEnd + tStep;
    tSpan = [tBegin, tEnd];
    initCond = [currInitPoint(:); currInitFundMatrix(:)];
end

% Product SVD of factorization X of the fundamental matrix
[~, R, ~] = productSVD(X, nSvdIterations);

% Computation of the Lyapunov exponents
LEs = zeros(1, dimOde);
for jFactor = 1 : nFactors
    LEs = LEs + log(diag(R(:, :, jFactor))');
end;
finalTime = tStep * nFactors;
LEs = LEs / finalTime;

end
\end{lstlisting}
}

{
\beginMatlab{\textbf{lyapunovDim.m} -- computation of the Lyapunov dimension}
\fontsize{8}{8}\selectfont
\begin{lstlisting}
function LD = lyapunovDim( LEs )
% For the given array of Lyapunov exponents of a point the function
% compute local Lyapunov dimention of this point.

% Parameters:
%   LEs - array of Lyapunov exponents.

% Initialization of the local Lyupunov dimention :
LD = 0;

% Number of LCEs :
nLEs = length(LEs);

% Sorted LCEs :
sortedLEs = sort(LEs, 'descend');

% Main loop :
leSum = sortedLEs(1);
if ( sortedLEs(1) > 0 )
     for i = 1 : nLEs-1
        if sortedLEs(i+1) ~= 0
           LD = i + leSum / abs( sortedLEs(i+1) );
           leSum = leSum + sortedLEs(i+1);
           if leSum < 0
              break;
           end
        end
    end
end
end
\end{lstlisting}
}

{
\beginMatlab{\textbf{genLorenzSyst.m} -- generalized Lorenz system
\eqref{sys:lorenz-general} along with the variational equation}
\fontsize{8}{8}\selectfont
\begin{lstlisting}
function OUT = genLorenzSyst(t, x, r, sigma, b, a)

% Generalized Lorenz system with
% parameters: r sigma b a

OUT(1) = sigma*(x(2) - x(1)) - a*x(2)*x(3);
OUT(2) = r*x(1) - x(2) - x(1)*x(3);
OUT(3) = -b*x(3) + x(1)*x(2);

% Jacobian at the point [x(1), x(2), x(3)]
J = [-sigma, sigma-a*x(3),  -a*x(2);
    r-x(3),    -1,    -x(1);
    x(2),     x(1), -b];

X = [x(4), x(7), x(10);
    x(5), x(8), x(11);
    x(6), x(9), x(12)];

% Variational equation
OUT(4:12) = J*X;
\end{lstlisting}
}

{
\beginMatlab{\textbf{main.m} -- computation of the Lyapunov exponents and
local Lyapunov dimension for the hidden attractor of generalized Lorenz system
\eqref{sys:lorenz-general}}
\fontsize{8}{8}\selectfont
\begin{lstlisting}
function main

% Parameters of generalized Lorenz system
% that correspond to the hidden attractor
r = 700; sigma = 4; b = 1; a = 0.0052;

% Initial point for trajectory that visualizes the hidden attractor
x0 = [-14.551336132013954 -173.86811769236883 718.92035664071227];

tStep = 0.1;
nFactors = 10000;
nSvdIterations = 3;

% ODE solver parameters
acc = 1e-8; RelTol = acc; AbsTol = acc; InitialStep = acc/10;
odeSolverOptions = odeset('RelTol', RelTol, 'AbsTol', AbsTol, ...
                        'InitialStep', InitialStep, 'NormControl', 'on');

LEs = computeLEs(@(t, x) genLorenzSyst(t, x, r, sigma, b, a), ...
                    x0, tStep, nFactors, nSvdIterations, odeSolverOptions);

fprintf('Lyapunov exponents: %6.4f, %6.4f, %6.4f\n', LEs);

LD = lyapunovDim(LEs);

fprintf('Lyapunov dimension: %6.4f\n', LD);

end
\end{lstlisting}
}

\section{Fishing principle and the existence of a homoclinic orbit in the Glukhovsky--Dolghansky system} \label{appendix:homo}
\subsection{Fishing Principle} 

Consider autonomous system of differential equations \eqref{sys:lorenz-general} with the parameter
\begin{equation} \label{appendix:homo:eq}
	\dot{{\rm x}} = {\bf \rm f}({\bf \rm x},q), \quad t \in \mathbb{R},~
	{\bf \rm x} \in \mathbb{R}^n, \ q\in \mathbb{R}^m.
\end{equation}
Let $\gamma(s), s\in[0,1]$ be a smooth path in the space of the parameter $\{q\}=\mathbb{R}^m$.
Consider the following Tricomi problem \cite{Tricomi-1933}:
{\it Is there a point $q_0\in \gamma(s)$ for which system \eqref{appendix:homo:eq}
with $q_0$ has a homoclinic trajectory?}

Consider system \eqref{appendix:homo:eq} with $q=\gamma(s)$,
and introduce the following notions.
Let ${\rm x}(t,s)^+$ be an outgoing separatrix of the saddle point ${\rm x}_0$
(i.e. $\lim\limits_{t\to-\infty}{\rm x}(t,s)^+={\rm x}_0$)
with a one-dimensional unstable manifold.
Define by ${\rm x}_{\Omega}(s)^+$
the point of the first crossing of separatrix ${\rm x}(t,s)^+$ with the closed set $\Omega$:
$$
{\rm x}(t,s)^+ \, \overline{\in} \, \Omega, \quad t\in(-\infty,T),
$$
$$
{\rm x}(T,s)^+ = {\rm x}_{\Omega}(s)^+ \in \Omega.
$$

If there is no such crossing, we assume that ${\rm x}_{\Omega}(s)^+=\emptyset$ (the empty set).

\begin{theorem}[Fishing Principle \cite{Leonov-2012-PLA,Leonov-2013-IJBC,Leonov-2014-ND}] \label{t1}
Suppose that for the path $\gamma(s)$ there is an $(n-1)$-dimensional bounded manifold $\Omega$ with
a piecewise-smooth edge $\partial\Omega$
that possesses the following properties.
\begin{enumerate}
	\item For any ${\rm x}\in\Omega\setminus\partial\Omega$ and $s\in[0,1]$,
the vector $f({\rm x},\gamma(s))$ is transversal to the manifold $\Omega\setminus\partial\Omega$,
	\label{fish_princ:cond1}

	\item for any $s\in[0,1]$, $f({\rm x}_0,\gamma(s))=0$,
    the point ${\rm x}_0\in\partial\Omega$ is a saddle; \label{fish_princ:cond2}

	\item for $s=0$ the inclusion ${\rm x}_{\Omega}(0)^+\in\Omega\setminus\partial\Omega$
is valid (Fig.~\ref{fig:fish_princ:1}), \label{fish_princ:cond3}

	\item for $s=1$ the relation ${\rm x}_{\Omega}(1)^+=\emptyset$ is valid
	(i.e. ${\rm x}_{\Omega}(1)^+$ is an empty set),
	\label{fish_princ:cond4}

	\item for any $s\in[0,1]$ and ${\rm y}\in\partial\Omega\setminus {\rm x}_0$
there exists a neighborhood $U({\rm y},\delta)=\{{\rm x} \vert \, |{\rm x}-{\rm y}|<\delta\}$
such that ${\rm x}_{\Omega}(s)^+ \, \overline{\in} \, U({\rm y},\delta)$. \label{fish_princ:cond5}
\end{enumerate}
\smallskip

If conditions \ref{fish_princ:cond1})--\ref{fish_princ:cond5}) are satisfied,
then there exists $s_0\in[0,1]$ such that ${\rm x}(t,s_0)^+$ is a homoclinic trajectory
of the saddle point ${\rm x}_0$ (Fig.~\ref{fig:fish_princ:2}).
\end{theorem}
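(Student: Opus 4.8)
The plan is to run a continuity (shooting) argument in the parameter $s$, turning the dichotomy between conditions~\ref{fish_princ:cond3}) and~\ref{fish_princ:cond4}) into the existence of a homoclinic orbit. First I would record the regularity of the object being tracked: by condition~\ref{fish_princ:cond2}) the saddle ${\rm x}_0$ and its one-dimensional unstable manifold persist for every $s\in[0,1]$, so the unstable eigendirection at ${\rm x}_0$ depends continuously on $s$. Combined with the theorem on continuous dependence of solutions on initial data and parameters, this shows that the outgoing separatrix ${\rm x}(\cdot,s)^+$, and therefore its first-crossing point ${\rm x}_\Omega(s)^+$, varies continuously with $s$ as long as the crossing is a transversal interior crossing.

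Next I would set $\mathcal{S}=\{s\in[0,1] : {\rm x}_\Omega(s')^+\in\Omega\setminus\partial\Omega \ \text{for all}\ s'\in[0,s]\}$ and define $s_0=\sup\mathcal{S}$. Condition~\ref{fish_princ:cond3}) gives $0\in\mathcal{S}$, and transversality (condition~\ref{fish_princ:cond1})) shows, via the implicit function theorem, that a transversal interior crossing persists under small changes of $s$; hence the set of $s$ with an interior first crossing is relatively open, so $s_0>0$. Condition~\ref{fish_princ:cond4}) yields $1\notin\mathcal{S}$, so $s_0\leqslant 1$ and the interior crossing must fail at $s_0$. Indeed, if ${\rm x}_\Omega(s_0)^+$ were a transversal interior crossing, openness would force $[0,s_0+\varepsilon)\subset\mathcal{S}$, contradicting the definition of $s_0$; thus $s_0\notin\mathcal{S}$.

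The core of the argument is to identify the limiting behaviour of the separatrix at $s=s_0$. Using compactness of $\Omega$, I would extract a limit point $p$ of ${\rm x}_\Omega(s)^+$ as $s\uparrow s_0$. Since an interior crossing at $s_0$ is excluded, $p$ must lie on $\partial\Omega$, and condition~\ref{fish_princ:cond5}) rules out every boundary point except ${\rm x}_0$, forcing $p={\rm x}_0$. I would then promote $p$ to an $\omega$-limit point of ${\rm x}(\cdot,s_0)^+$: either the first-crossing point converges to ${\rm x}_0$ directly, or the crossing time diverges to $+\infty$ (the separatrix meets $\Omega$ later and later), and in the latter case transversality forbids any interior point of $\Omega$ from being an $\omega$-limit (a transversal crossing is not an accumulation) while condition~\ref{fish_princ:cond5}) again pins the boundary limit to ${\rm x}_0$. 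In either case ${\rm x}(t,s_0)^+\to{\rm x}_0$ as $t\to+\infty$; since ${\rm x}(\cdot,s_0)^+$ is the outgoing separatrix we also have ${\rm x}(t,s_0)^+\to{\rm x}_0$ as $t\to-\infty$, so the orbit is homoclinic to the saddle ${\rm x}_0$.

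The main obstacle I anticipate is the bookkeeping around the \emph{first} crossing, specifically ruling out a discontinuous disappearance of the crossing --- a jump from a transversal interior crossing to the empty set --- as $s\uparrow s_0$. This is exactly the point at which the transversality hypothesis and condition~\ref{fish_princ:cond5}) must be used together: transversality prevents the flow from being tangent to $\Omega$ in the interior, so the escape of the crossing can only be caused by the crossing time diverging, and condition~\ref{fish_princ:cond5}) guarantees that the corresponding asymptotic approach is to ${\rm x}_0$ rather than to some other recurrent set on $\partial\Omega$. Making this ``crossing time $\to+\infty$'' alternative rigorous is the delicate step that the five hypotheses are designed to control.
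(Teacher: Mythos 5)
Your overall strategy --- a shooting argument in $s$, with $s_0$ defined as the supremum of parameters for which the separatrix still has a transversal interior first crossing of $\Omega$, openness from condition~\ref{fish_princ:cond1}), and conditions~\ref{fish_princ:cond4}) and~\ref{fish_princ:cond5}) used to force the degeneration at $s_0$ onto the saddle --- is exactly the intended one. Note, however, that the paper itself does not prove Theorem~\ref{t1}: it imports it from \cite{Leonov-2012-PLA,Leonov-2013-IJBC,Leonov-2014-ND} and supplies only the fisherman heuristic and a bisection scheme for locating the homoclinic parameter (which is your supremum argument restated as a nested-interval construction, with the limit point asserted, not proved, to give a homoclinic orbit). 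Measured against the argument your sketch would actually have to complete, there are two genuine gaps, both located at the step you yourself flag as delicate.

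First, the inference from ``the first-crossing points ${\rm x}_{\Omega}(s)^+$ accumulate at some $p\in\Omega$ as $s\uparrow s_0$ while the crossing times $T(s)$ diverge'' to ``$p$ is an $\omega$-limit point of ${\rm x}(\cdot,s_0)^+$'' is unjustified: continuous dependence on parameters is uniform only on compact time intervals, so the points ${\rm x}(T(s_n),s_n)^+$ with $T(s_n)\to+\infty$ lie on different trajectories at unbounded times and need not accumulate anywhere near the $s_0$-separatrix (which could, for all the stated hypotheses say, escape to infinity while nearby separatrices still cross $\Omega$ at late times). Closing this requires a uniform a priori bound on the whole family of separatrices --- in the application this is exactly what the positively invariant cylinder $C$ of \eqref{pinv} provides --- and your argument never invokes such a bound. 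Second, and more seriously, even once ${\rm x}_0$ is shown to be an $\omega$-limit point of ${\rm x}(\cdot,s_0)^+$, this does not yield $\lim_{t\to+\infty}{\rm x}(t,s_0)^+={\rm x}_0$: a trajectory can accumulate on a hyperbolic saddle without converging to it (a trajectory spiralling onto a homoclinic loop is the standard example). To obtain a homoclinic orbit you must show that the separatrix at $s=s_0$ actually lies on the stable manifold $W^s({\rm x}_0)$, which requires a local analysis in a neighborhood of the saddle (trajectories entering a small neighborhood of ${\rm x}_0$ either converge to ${\rm x}_0$ along $W^s({\rm x}_0)$ or are expelled along the unstable direction and, by the construction, would then produce a new interior crossing of $\Omega$, contradicting the choice of $s_0$). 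This saddle-neighborhood argument is where the real work of the cited proofs lies, and it is absent from the proposal.
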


\begin{figure}[!h]
 \centering
 \subfloat[
 {Separatrix ${\rm x}(t,s)^+$, where $s\in[0,s_0)$.}
 ] {
 \label{fig:fish_princ:1}
 \includegraphics[width=0.4\textwidth]{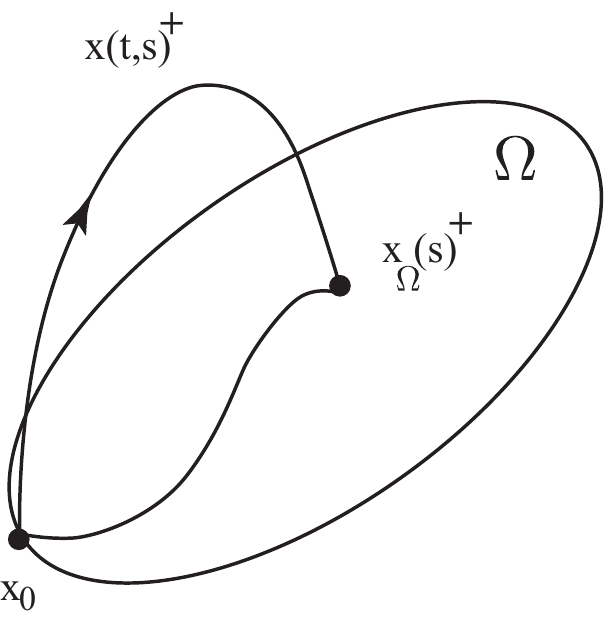}
 }\quad
 \subfloat[
 {Separatrix ${\rm x}(t,s)^+$, where $s=s_0$.}
 ] {
 \label{fig:fish_princ:2}
 \includegraphics[width=0.4\textwidth]{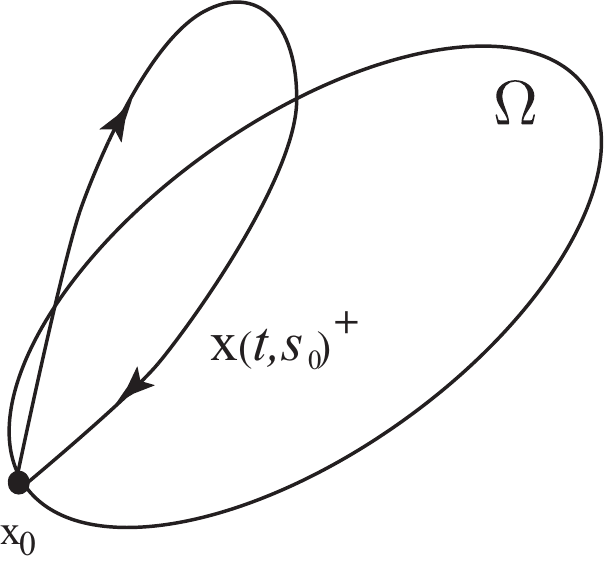}
 }
 \caption{Separatrix ${\rm x}(t,s)^+$, where $s\in[0,s_0]$.}
 \label{fig:fish_princ:1-2}
\end{figure}

The fishing principle can be interpreted as follows.
Fig.~\ref{fig:fish_princ:1} shows a fisherman at the point ${\rm x}_0$ with a
fishing rod ${\rm x}(t, s)^+$.
The manifold $\Omega$ is a lake surface and $\partial\Omega$ is a shore line.
When $s = 0$, a fish has been caught by the fishing
rod. Then, ${\rm x}(t, s)^+, s\in[0,s_0)$ is a path taken by the fishing
rod when it brings the fish to the shore.
Assumption 5) implies that the fish cannot be taken to the
shore $\partial\Omega\backslash {\rm x}_0$, because $\partial\Omega\backslash {\rm x}_0$
is a forbidden zone.
Therefore, only the situation shown in Fig.~\ref{fig:fish_princ:2}
is possible (i.e., at $s = s_0$ the fisherman has caught a fish).
This corresponds to a homoclinic orbit.

Now let us describe the numerical procedure for defining the point $\Gamma$
on the path $\gamma(s)$, which corresponds to a homoclinic trajectory.
Here we assume that conditions \ref{fish_princ:cond1}), \ref{fish_princ:cond2}),
and \ref{fish_princ:cond5}) of the fishing Principle are satisfied.
Consider a sequence of paths
$\gamma_j(s)\subset\{\gamma_{j-1}(s),\,\,s\in[0,1]\}\subset
\{\gamma(s),\,s\in[0,1]\}$, $\forall s\in[0,1]$
such that the length $\{\gamma_j(s)\}$ tends to zero as $j\to +\infty$.
Condition 3) is satisfied for $\gamma_j(0)$ and condition 4) is satisfied
for $\gamma_j(1)$.
This sequence can be obtained if the paths $\gamma$ and $\gamma_j$
are sequentially divided into two paths of the same length
and we choose the path such that for its end points
condition 3) is satisfied and condition 4) is not satisfied (or vice versa).
Obviously, the sequence $\gamma_j(s)$, $s\in[0,1]$
is contracted to the point
$\Gamma\in\{\gamma_j(s),\,s\in[0,1]\},\,\,\forall j$.
This point corresponds to a homoclinic trajectory of system
\eqref{appendix:homo:eq}.

\medskip

Now, consider the conditions of the non-existence of a homoclinic orbit.
Consider the Jacobian matrix of system \eqref{appendix:homo:eq}
$$
  J({\rm x},s)=\frac{\partial {\rm f}}{\partial {\rm x}}({\rm x},\gamma(s)).
$$
Let $\lambda_1 ({\bf \rm x},s,S) \geqslant \cdots \geqslant \lambda_n ({\bf \rm x},s,S)$
denote the eigenvalues of the symmetrized matrix
\begin{equation*}
 \frac{1}{2} \left( S J({\rm x},s) S^{-1} +
 (S J({\rm x},s) S^{-1})^{*}\right),
\end{equation*}
where $S$ is a nonsingular matrix.

Suppose system \eqref{appendix:homo:eq}
has a saddle point ${\rm x}_0 \equiv {\rm x}_0(s),\,\forall s\in[0,1]$,
the point ${\rm x}_0$ belongs to a positively invariant bounded set $K$,
and $J({\rm x}_0,s)$ has only real eigenvalues.

\begin{theorem} [\cite{Leonov-2014-ND}] \label{t2}
Assume that there are a continuously differentiable scalar function $\vartheta({\rm x},s)$
and a nonsingular matrix $S$ such that for system \eqref{appendix:homo:eq} with $q=\gamma(s)$,
the inequality
\begin{equation}\label{ineq:th2}
 \lambda_1 ({\bf \rm x},S) + \lambda_2 ({\bf \rm x},S) +
 \dot{\vartheta}({\bf \rm x}) < 0,
 ~ \forall \, {\bf \rm x} \in K, \quad \forall s \in [0,1]
\end{equation}
is satisfied.
Then system \eqref{appendix:homo:eq} has no homoclinic trajectories for all
$s\in[0,1]$ such that
\begin{equation}\label{6add}
   \lim\limits_{t\to-\infty}{\rm x}(t)=\lim\limits_{t\to+\infty}{\rm x}(t)={\rm x}_0.
\end{equation}
\end{theorem}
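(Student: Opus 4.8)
The plan is to read hypothesis \eqref{ineq:th2} as the statement that the flow strictly contracts two-dimensional areas along every orbit contained in $K$, and then to exclude a homoclinic loop by the higher-dimensional Bendixson-type (area) argument that one uses to rule out closed orbits. Recall that the growth rate of a two-dimensional area element transported by the linearized flow \eqref{eq:var_eq} is controlled by the sum $\lambda_1({\rm x},S)+\lambda_2({\rm x},S)$ of the two largest eigenvalues of the symmetric part of $SJ({\rm x})S^{-1}$ (cf. \eqref{SJS}); with the scalar weight $e^{\vartheta}$ this rate is corrected by $\dot\vartheta$. Thus, for any two solutions of \eqref{eq:var_eq} along a trajectory in $K$, the $(S,\vartheta)$-weighted area of the parallelogram they span has logarithmic derivative at most $\lambda_1+\lambda_2+\dot\vartheta<0$. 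Since $K$ is compact and these functions are continuous, I would fix $\delta:=-\max_{{\rm x}\in K}\bigl(\lambda_1+\lambda_2+\dot\vartheta\bigr)>0$, so that every transported weighted $2$-area decays at least like $e^{-\delta t}$.

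Next I would set up the contradiction. Suppose that for some $s$ there is a homoclinic trajectory ${\rm x}(t)$ realizing \eqref{6add}. Then $\gamma:=\{{\rm x}(t):t\in\mathbb R\}\cup\{{\rm x}_0\}$ is a \emph{closed, flow-invariant, rectifiable curve}: flowing ${\rm x}(\tau)$ yields ${\rm x}(\tau+t)$, the point ${\rm x}_0$ is fixed, so $\phi^t(\gamma)=\gamma$ for all $t$. Because $\gamma\subset K$ (the forward orbit stays in the positively invariant $K$, and its $\alpha$- and $\omega$-limit set is $\{{\rm x}_0\}\subset K$) and the absorbing/invariant sets produced for \eqref{sys:lorenz-general} are simply connected, $\gamma$ bounds a smooth spanning surface $\Sigma\subset K$ with $\partial\Sigma=\gamma$. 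I then transport it, $\Sigma_t:=\phi^t(\Sigma)$, which still lies in $K$ and still satisfies $\partial\Sigma_t=\phi^t(\gamma)=\gamma$.

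The contradiction comes from two incompatible estimates for the weighted area $A(t):=\mathrm{area}_{S,\vartheta}(\Sigma_t)$. Integrating the pointwise decay bound over the surface gives $A'(t)\le-\delta\,A(t)$, hence $A(t)\le A(0)e^{-\delta t}\to0$. On the other hand, every surface spanning the \emph{fixed} nondegenerate loop $\gamma$ has area bounded below by a positive constant: projecting $\gamma$ onto a plane in which it encloses positive area, the surface area dominates the enclosed area, and the weights, bounded on the compact $K$, only rescale the bound by a positive factor, so $A(t)\ge c>0$. Thus $0<c\le A(t)\to0$, which is impossible; hence no such homoclinic orbit exists, and since the hypothesis holds for every $s\in[0,1]$, the conclusion is uniform in $s$.

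The hard part will be the two analytic points hidden in the area argument. First, I must justify $A'(t)\le-\delta\,A(t)$ rigorously, i.e. differentiate the weighted-area functional of the transported surface and bound it using the two largest eigenvalues of the symmetrized Jacobian together with the $\dot\vartheta$ correction from the $e^{\vartheta}$ weight and the metric change $S$; this is the computation underlying Theorems \ref{theorem:th1} and \ref{theorem:th2}, but it must now be carried out for a genuine (possibly self-intersecting after transport) surface rather than for a single tangent vector. Second, and more delicately, the loop passes through the equilibrium ${\rm x}_0$, where $f=0$ and the flow stalls, so I must verify that $\Sigma_t$ remains a legitimate spanning surface and that the lower bound $A(t)\ge c>0$ survives near ${\rm x}_0$. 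The assumption that $J({\rm x}_0,s)$ has only real eigenvalues is exactly what secures a clean saddle geometry there (a genuine one-dimensional outgoing separatrix closing up into the loop), so that $\gamma$ is a nondegenerate closed curve rather than a spiralling set and the estimates extend up to ${\rm x}_0$.
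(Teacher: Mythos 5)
The paper does not actually prove Theorem~\ref{t2}: it is imported verbatim from \cite{Leonov-2014-ND} (see also R.~A. Smith's area-contraction criterion and \cite{BoichenkoLR-2005}), so there is no in-paper argument to compare against. Your proposal is, in substance, the proof used in that cited literature: the homoclinic orbit together with the saddle is a flow-invariant Jordan curve $\gamma$, one transports a surface spanning $\gamma$, the condition $\lambda_1+\lambda_2+\dot\vartheta<0$ forces the $e^{\vartheta}$-weighted $2$-area (computed in the metric induced by $S$) to decay exponentially, and this contradicts a positive lower bound for the area of any compact set spanning the fixed curve $\gamma$. Two remarks on where your sketch leans on facts you have not supplied. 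First, the lower bound $A(t)\ge c>0$ is the genuinely nontrivial step: your projection argument needs the existence of a $2$-plane onto which $\gamma$ projects with nonzero winding number about some point, which is not obvious for an arbitrary Jordan curve in $\mathbb{R}^n$; in \cite{BoichenkoLR-2005} this is handled by a separate topological lemma (positivity of the two-dimensional Hausdorff measure of any compactum ``spanning'' a Jordan curve, via a degree/linking argument), and it is essentially the only place the argument can fail, so it deserves to be a named lemma rather than a parenthesis. Second, the hypothesis \eqref{ineq:th2} is only assumed on $K$, so you must produce a spanning surface $\Sigma\subset K$ and keep $\phi^t(\Sigma)\subset K$; positive invariance of $K$ gives the second point, but the first requires $\gamma$ to bound inside $K$, which you justify only by appealing to the specific (convex) absorbing sets of system \eqref{sys:lorenz-general} rather than the standing hypotheses of the theorem. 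By contrast, your worry about the flow ``stalling'' at ${\rm x}_0$ is not a real obstruction: $\gamma$ is invariant as a set, so $\partial\phi^t(\Sigma)=\gamma$ for all $t$, and the infinitesimal area estimate is uniform on the compact $K$ including at the equilibrium.
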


\subsection{Existence of a homoclinic trajectories in the Glukhovsky-Dolzhansky system}


Consider the separatrix $x^+(t),y^+(t),z^+(t)$ of the zero saddle point of system \eqref{sys:lorenz-general},
where $x(t)^+>0$, $\forall t\in(-\infty,\tau)$, $\tau$ is a number,
and $\lim\limits_{t\to-\infty}x(t)^+=0$ (i.e. positive outgoing separatrix is considered).

Define the manifold $\Omega$ as
$$
  \begin{aligned}
   \Omega=\{x=0, \, y \leq 0, \, y(\sigma-az) \leq 0, \, y^2+z^2 \leq 2r^2 \}.
  \end{aligned}
$$

\begin{figure}[!h]
 \centering
 \includegraphics[width=0.3\textwidth]{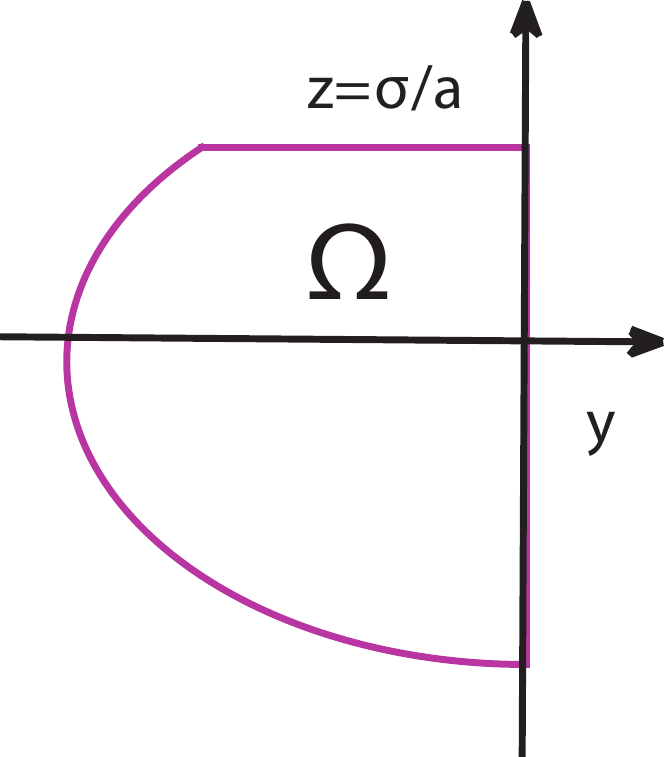}
 \caption{Manifold $\Omega$.}
 \label{fig:fish_princ:3-4}
\end{figure}

\noindent{\bf Check condition 1).}

Inside the set $\Omega \backslash \partial\Omega$ we have
\[
  \dot x= y(\sigma -az) < 0.
\]

\noindent{\bf Check condition 5).}

{\bf a)}  On
\(
  B_1= \{x=0,\, y= 0, \, -\sqrt{2}r \leq z \leq \sigma/a \} 
\)
system \eqref{sys:lorenz-general} has the solution
\[
  		x(t)\equiv y(t)\equiv 0, \quad z(t)=z(0)\exp(-t).
\]

{\bf b)} On
\(
	B_2 =\{x=0, \, y<0, \, z=\sigma\!/a, \, y^2+z^2 \leq 2r^2 \}	
\)
we have
\[
  \ddot x=\sigma y.
\]
\begin{figure}[!h]
 \centering
 \includegraphics[width=0.5\textwidth]{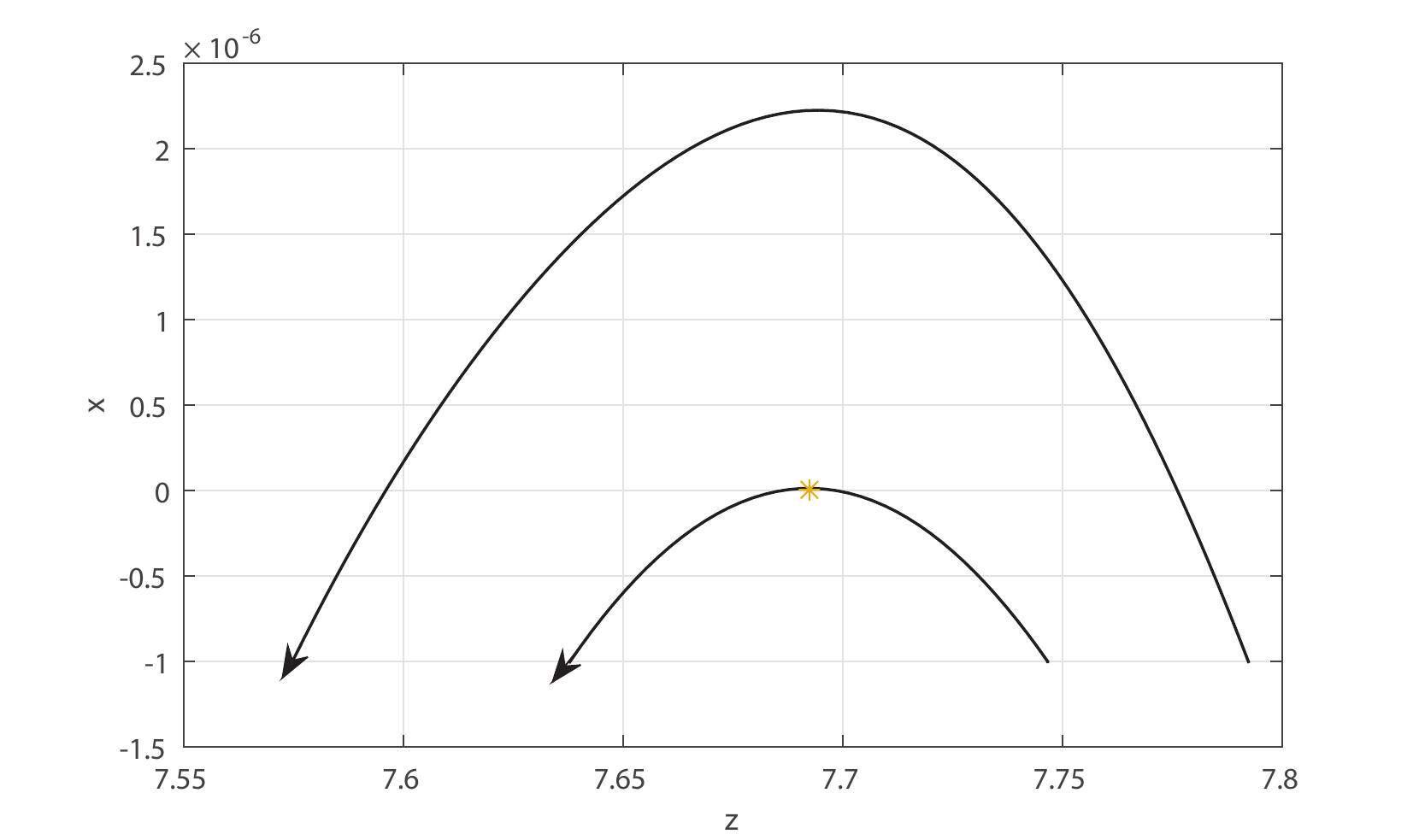}
 \caption{Local behavior of the trajectories of system \eqref{sys:lorenz-general}
 in the neighborhood of set $B_2$  ($\sigma=4$, $a=0.52$, $r=10^5$)}
 \label{fig:fish_princ:5}
\end{figure}
Therefore the local behavior of trajectories in the neighborhood of $B_2$
is shown in Fig.~\ref{fig:fish_princ:5}.

{\bf c)} The set
\(
	B_3 = \{x=0, \, y<0, \, -\sqrt{2}r \leq z \leq \sigma/a, \, y^2+z^2 = 2r^2\} 	
\)
is located outside of the positively invariant cylinder $C$ (see eq.~\eqref{pinv}).
Thus, the separatrices of the zero saddle point (which belongs to $C$) can not reach the set $B_3$.

{\bf Check condition 3)}.

Consider the development of the asymptotic integration of system \eqref{sys:lorenz-general} \cite{Leonov-2015-DAN-GD}.
Assume that
\begin{equation}\label{arseries}
	ar=c-\lambda\varepsilon+O(\varepsilon^2),
\end{equation}
where $c$ and $\lambda$ are some numbers and $\varepsilon=1\!/\sqrt r$ is a small parameter.

\begin{lemma}\label{l10}
For any $\sigma > c, \sigma>1$
there exists a time $T>0$ such that for sufficiently large $r$,
$(x^+(T),y^+(T),z^+(T)) \in \Omega\setminus\partial \Omega$
(i.e. condition \ref{fish_princ:cond3}) of the fishing principle is valid).
\end{lemma}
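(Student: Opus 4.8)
The plan is to establish condition \ref{fish_princ:cond3}) by asymptotic integration of the separatrix for large $r$, using the scaling dictated by the onset of nonlinearity. First I would set $\varepsilon = 1/\sqrt r$ and introduce the rescaled variables $x = \sqrt r\,\xi$, $y = r\eta$, $z = r\zeta$ together with the fast time $\tau = \sqrt r\,t$. Substituting these and the expansion \eqref{arseries} into \eqref{sys:lorenz-general} gives, with ${}' = d/d\tau$,
\begin{align*}
 \xi' &= \eta(\sigma - c\zeta) + \varepsilon(\lambda\eta\zeta - \sigma\xi),\\
 \eta' &= \xi(1-\zeta) - \varepsilon\eta,\\
 \zeta' &= \xi\eta - \varepsilon\zeta,
\end{align*}
so that as $r\to\infty$ the separatrix is governed by an $O(\varepsilon)$ perturbation of the autonomous limit system $\xi' = \eta(\sigma-c\zeta)$, $\eta' = \xi(1-\zeta)$, $\zeta' = \xi\eta$.

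Next I would integrate the limit system explicitly. It has the two first integrals
\begin{align*}
 H_1 &= \eta^2 + (\zeta-1)^2, & H_2 &= \tfrac12\xi^2 - \sigma\zeta + \tfrac{c}{2}\zeta^2,
\end{align*}
and the outgoing separatrix of the origin (leaving with $\xi,\eta>0$) lies on $\{H_1 = 1,\ H_2 = 0\}$, i.e. $\eta^2 = \zeta(2-\zeta)$ and $\xi^2 = \zeta(2\sigma - c\zeta)$ with $\zeta\in[0,2]$. Here the hypothesis $\sigma>c$ enters: it yields $2\sigma/c>2$, so $\xi^2>0$ on $(0,2]$ and the curve is a genuine bounded, non-degenerate homoclinic loop. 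Along it $\zeta$ rises from $0$ to $2$ with $\eta>0$, and past the turning point $\zeta=2$ the component $\eta$ becomes negative while $\zeta$ decreases back toward $0$. Rewriting the target manifold in scaled variables as $\Omega = \{\xi=0,\ \eta\le 0,\ \zeta\le\sigma/c,\ \eta^2+\zeta^2\le 2\}$, the return branch runs through the interior of $\Omega$, except that at leading order it meets the plane $\xi=0$ only at the corner $\eta=\zeta=0$; so the $O(\varepsilon)$ perturbation must be used to push the crossing strictly into $\eta<0$.

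For this I would carry out a Melnikov-type computation. Along the unperturbed loop (using $\xi^2 = 2\sigma\zeta - c\zeta^2$ and $\eta^2+\zeta^2-\zeta=\zeta$) one finds
\begin{align*}
 \frac{dH_1}{d\tau} &= -2\varepsilon\,\zeta, & \frac{dH_2}{d\tau} &= \varepsilon\big[\lambda\,\xi\eta\zeta + \sigma(1-2\sigma)\zeta + c(\sigma-1)\zeta^2\big].
\end{align*}
The loop is invariant under $(\xi,\eta,\zeta,\tau)\mapsto(\xi,-\eta,\zeta,-\tau)$, under which $\xi,\zeta$ are even and $\eta$ is odd, so $\int\xi\eta\zeta\,d\tau = 0$ and the unknown coefficient $\lambda$ drops out. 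Writing $I_1 = \int_{-\infty}^{\infty}\zeta\,d\tau>0$ and $I_2 = \int_{-\infty}^{\infty}\zeta^2\,d\tau>0$ for the convergent integrals over the homoclinic solution, the accumulated drifts are $\Delta H_1 = -2\varepsilon I_1$ and $\Delta H_2 = \varepsilon[\sigma(1-2\sigma)I_1 + c(\sigma-1)I_2]$. Matching the drifted integrals at the return crossing $\xi=0$ determines $\zeta^\ast = -\Delta H_2/\sigma + O(\varepsilon^2)$ and then
\begin{align*}
 (\eta^\ast)^2 &= \Delta H_1 - \tfrac{2}{\sigma}\Delta H_2 + O(\varepsilon^2) = 2\varepsilon(\sigma-1)\Big(2I_1 - \tfrac{c}{\sigma}I_2\Big) + O(\varepsilon^2),
\end{align*}
which is strictly positive: $\sigma>1$ fixes the first factor, and since $\zeta\le 2$ on the loop gives $I_2\le 2I_1$ one has $2I_1 - \tfrac{c}{\sigma}I_2 \ge 2I_1(1-c/\sigma)>0$ by $\sigma>c$. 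Hence $\eta^\ast = -\sqrt{\Delta H_1 - \tfrac2\sigma\Delta H_2}<0$ with $|\eta^\ast|=O(\sqrt\varepsilon)$ and $|\zeta^\ast|=O(\varepsilon)$; the constraints $\eta^{\ast2}+\zeta^{\ast2}<2$ and $\zeta^\ast<\sigma/c$ then hold trivially for large $r$ (and $\Omega$ imposes no lower bound on $\zeta$, so the sign of $\Delta H_2$ is irrelevant). Thus the first return of the separatrix to $\{x=0\}$ lies in $\Omega\setminus\partial\Omega$, which is condition \ref{fish_princ:cond3}).

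I expect the genuine difficulty to be the rigor of the return analysis rather than the algebra. Because the loop is homoclinic to the saddle, the return takes unbounded rescaled time and the flow is slow near the origin, so the Melnikov reduction must be justified by matching an inner expansion near the saddle (governed by the linearization with eigenvalues $\pm\sqrt\sigma$ and approach direction $\eta=-\xi/\sqrt\sigma$) to the outer loop, while controlling the $O(\varepsilon)$ error uniformly up to the crossing. Establishing this uniform control, together with the convergence of $I_1,I_2$ at the saddle, is the step I would budget the most effort for; the explicit integrability of the limit system is what makes it tractable.
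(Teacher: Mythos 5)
Your proposal is correct and follows essentially the same route as the paper's proof: the same rescaling $\varepsilon=1/\sqrt{r}$ with $x\to\sqrt{r}x$, $y\to ry$, $z\to rz$, the same two first integrals of the limit system (your $H_1,H_2$ are linear combinations of the paper's $W$ and $V$), and the same $O(\varepsilon)$ drift computation showing the separatrix returns to $\{x=0\}$ with $y<0$ at distance $O(\sqrt{\varepsilon})$ while $z=O(\varepsilon)$, using $\sigma>1$ and $\sigma>c$ exactly as the paper does. The only cosmetic differences are that the paper absorbs $\lambda$ into the $\varepsilon$-corrected integral $V_\varepsilon$ so that its derivative is exactly $\lambda$-free, whereas you eliminate the $\lambda$-term via the reversibility of the unperturbed loop, and that you solve for the crossing point $(\eta^{*},\zeta^{*})$ directly instead of arguing from the mutual disposition of the drifted level surfaces near the saddle.
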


{\it Proof (sketch).}
Using the transformation
\begin{equation}\label{tr}
	t \to\frac{t}{\sqrt r}, \quad x \to \sqrt r x, \quad y \to r y, \quad z \to rz,
\end{equation}
 we can obtain 
\begin{equation}\label{3.32}
\begin{aligned}
	&\dot x=\sigma y-\varepsilon\sigma x-(c-\lambda\varepsilon+O(\varepsilon^2))yz\\
	&\dot y=x-\varepsilon y-xz \\
	&\dot z=-\varepsilon z+xy.
\end{aligned}	
\end{equation}

{\bf a)}
Consider the zero approximation of system \eqref{3.32}
(system \eqref{3.32}  without $\lambda\varepsilon$ and $\varepsilon=0$)
and its solution $(x_0(t),y_0(t),z_0(t))$.
There are two independent integrals
$$
\begin{aligned}
	& V(x_0(t),y_0(t),z_0(t)) = (\sigma-c)z_0(t)^2+\sigma y_0(t)^2-x_0(t)^2=C_1, \\
	& W = y_0(t)^2+z_0(t)^2-2z_0(t)=C_2.
\end{aligned}
$$
Thus, the positive outgoing separatrix $x_0^+(t),y_0^+(t),z_0^+(t)$ of the saddle point $(x=y=z=0)$
of zero approximation of system \eqref{3.32}
belongs to the intersection of surfaces $V=0$ and $W=0$, i.e.
\begin{equation}\label{3.33}
	V(x_0^+(t),y_0^+(t),z_0^+(t))=0=W(x_0^+(t),y_0^+(t),z_0^+(t)), \forall t \in (-\infty,+\infty).
\end{equation}
From \eqref{3.33} it follows that $x_0^+(t)\ne 0$, $\forall t\in \mathbb{R}^n$ and
$$
 \lim\limits_{t\to+\infty}x_0^+(t)=\lim\limits_{t\to+\infty}y_0^+(t)=
 \lim\limits_{t\to+\infty}z_0^+(t)=0.
$$

\begin{figure}[!ht]
 \centering
 \subfloat[Zero approximation of system \eqref{3.32}:
 integrals ($V$ -- blue, $W$ -- red) and separatrices (black)] {
 \label{fig:gen_lorenz:sprtx:se1}
 \includegraphics[width=0.5\textwidth]{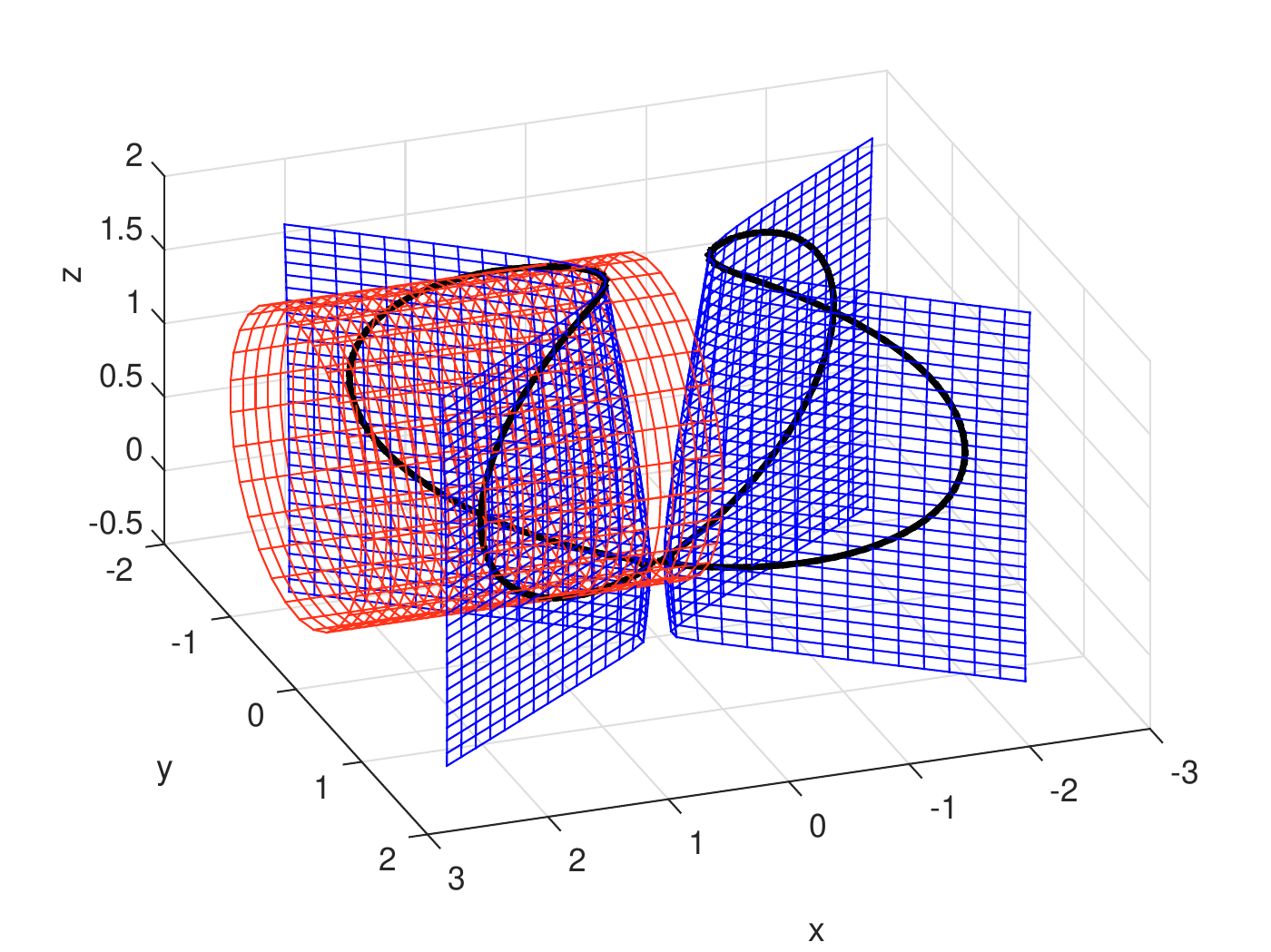}
 }
 \subfloat[
 First approximation of system \eqref{3.32}:
 for $x_1^+(T)=0$ the positive outgoing separatrix is located
 inside magenta domain $\Omega$,
 and near red circuit and the border of blue ellipsoid \eqref{Vforbidden}
 ] {
 \label{fig:gen_lorenz:sprtx:se2}
 \includegraphics[width=0.5\textwidth]{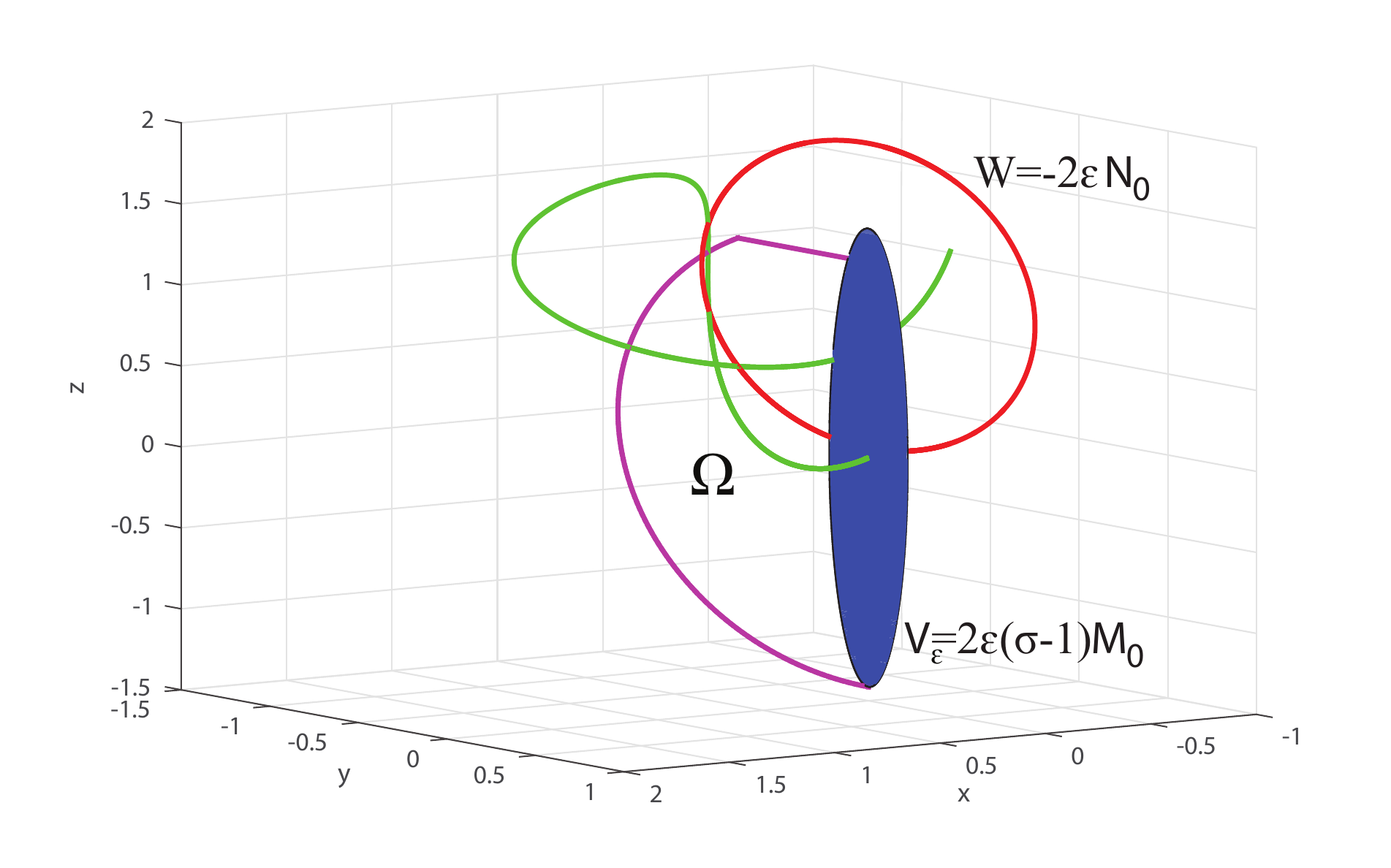}
 }
 \caption{
   $\sigma=2.3445$, $a = 0.0065$, $r = 300$, $c=2$, $\lambda = 1$.
 }
 \label{fig:integrals}
\end{figure}

{\bf b)}
Consider the first approximation of system \eqref{3.32} (system \eqref{3.32} without $O(\varepsilon^2)$).
For the small values of $\varepsilon$  the outgoing separatrix $(x_1^+(t),y_1^+(t),z_1^+(t))$
of the zero saddle point of the first approximation of system \eqref{3.32}
is close  to $(x_0^+(t), y_0^+(t), z_0^+(t))$ on $(-\infty,\tau)$.
Therefore for sufficiently small $\varepsilon$ and some $\tau$
the separatrix $(x_1^+(t), y_1^+(t), z_1^+(t))$
reaches $\delta$-vicinity of the zero saddle.
Then there exists finite $\tau=\tau(\varepsilon,\delta)$ such that
$$
 |x_1^+(\tau(\varepsilon,\delta))|<\delta,\quad |y_1^+(\tau(\varepsilon,\delta))| <
 \delta,\quad |z_1^+(\tau(\varepsilon,\delta))| < \delta.
$$

Consider two functions
\begin{equation}\label{VWe}
\begin{aligned}
	& V_\varepsilon(x,y,z)=(\sigma-c+\lambda\varepsilon)z^2+\sigma y^2-x^2, \quad
	& W = y^2+z^2-2z.
\end{aligned}
\end{equation}
For the derivatives of \eqref{VWe} along the positive outgoing separatrix we have
\begin{equation}\label{dVWe}
 \begin{aligned}
 &
 \frac{d}{dt}V_\varepsilon (t) \equiv \frac{d}{dt}V_\varepsilon(x_1^+(t),y_1^+(t),z_1^+(t))=
 -2\varepsilon V_\varepsilon(x_1^+(t),y_1^+(t),z_1^+(t))
 +2\varepsilon(\sigma-1)x_1^+(t)^2, \\
 &
 \frac{d}{dt}W(t) \equiv \frac{d}{dt}W(x_1^+(t),y_1^+(t),z_1^+(t))=-2\varepsilon W(x_1^+(t),y_1^+(t),z_1^+(t))
  -2\varepsilon z_1^+(t).
 \end{aligned}
\end{equation}
Integrating \eqref{dVWe} from $-\infty$ to $\tau$ and taking into account
\[
  \lim\limits_{\tau_0 \to -\infty} V_\varepsilon(\tau_0)=\lim\limits_{\tau_0 \to -\infty} W(\tau_0)=0,
\]
we obtain
\begin{equation}\label{3.34}
	V_\varepsilon(\tau) =
    2\varepsilon(\sigma-1)\int\limits_{-\infty}^\tau e^{-2\varepsilon (\tau-s)} (x_1^+(t))^2\,dt
    = 2\varepsilon (\sigma-1) M_0 + o(\varepsilon),
\end{equation}
\begin{equation}\label{3.35}
	W(\tau)=
    -2\varepsilon\int\limits_{-\infty}^\tau e^{-2\varepsilon (\tau-s)} (z_1^+(s))^2\,ds
    = -2\varepsilon N_0 + o(\varepsilon),
\end{equation}
where $M_0$ and $N_0$ are some positive numbers.
If $z_1$ and $z_2$ satisfy
\begin{equation}\label{Vforbidden}
   V_\varepsilon(0,0,z_1) = 2\varepsilon (\sigma-1) M_0, \quad W(0,0,z_2)=-2\varepsilon N_0,
\end{equation}
then
\begin{equation}\label{zz}
   z_1 = \sqrt\varepsilon\sqrt{\frac{2M_0(\sigma-1)}{\sigma-c+\lambda\varepsilon}},\quad
   z_2= \varepsilon \frac{2N_0}{1+\sqrt{1-2N_0\varepsilon}}.
\end{equation}
Hence the situation shown in Fig.~\ref{fig:gen_lorenz:sprtx:se2} occurs in
the neighborhood of the saddle point $(x=y=z=0)$ for the surfaces
\begin{equation}\label{forbidden}
\begin{aligned}
 V_\varepsilon(x,y,z) = 2\varepsilon(\sigma-1)M_0,\quad
 W(x,y,z)=2\varepsilon N_0.
\end{aligned}
\end{equation}
The separatrix $(x_1^+(\tau),y_1^+(\tau),z_1^+(\tau)$
has to be near the surfaces $V_\varepsilon(x,y,z)$ and $W(x,y,z)$.
From the mutual disposition of surfaces \eqref{forbidden}
and different order of smallness in \eqref{zz}
it follows that if $x_1^+(\tau)>0$, then $y_1^+(\tau)<0$
and $\dot x_1^+(\tau)<0$ for sufficiently small $\varepsilon$.
Moreover, $\dot x_1^+(t)<0$ for $t>\tau$ and $x_1^+(t) \leq 0$.
This implies the existence of $T>\tau$ such that $x_1^+(T)=0$.
We can obtain similar results for the case $x_1^-(\tau)<0$ (then $y_1^-(\tau)>0$).
The behavior of separatrix $(x^+(\tau),y^+(\tau),z^+(\tau)$
is consistent with the behavior of $(x_1^+(\tau),y_1^+(\tau),z_1^+(\tau)$.

{\bf Check condition 4)}.
We now check condition \ref{fish_princ:cond4}) for system \eqref{sys:lorenz-general}
with parameters \eqref{sys:glukh-dolzh:change_var:param}.
For this system it was proved \cite[pp. 276--277, 269--272]{BoichenkoLR-2005} that if
\begin{equation}\label{3.36}
	R_c<\frac{4(\sigma+1)}{1+\sqrt{1+8 A_c (\sigma+1)}},
\end{equation}
then condition \eqref{ineq:th2} of Theorem~\ref{t2} is satisfied for $S=I, V(x,s)\equiv 0$.

Now we can show that if \eqref{3.36} holds, then condition \ref{fish_princ:cond4})
for system \eqref{sys:lorenz-general} is also satisfied.
Consider the path
\begin{equation}\label{fppath}
\begin{aligned}
	& R_c(s),\quad A_c(s),\quad \sigma(s)\equiv \sigma,\\
	& A_c(0)=0,\quad A_c(1)=A_c,\\
	& R_c(s)\in\left(\frac{2\sigma}{1+\sqrt{1+4A_c(s)\sigma}},\frac{4(\sigma+1)}{1+\sqrt{1+8A_c(s)(\sigma+1)}}\right),\\
	& R_c(0)=\sigma(1+\delta),
\end{aligned}
\end{equation}
where $\delta$ is a small positive number.

For $s=0$ condition \ref{fish_princ:cond4}) is satisfied
(see, e.g., \cite{Leonov-2012-PLA,Leonov-2015-PLA}).
If for some $s_1\in[0,1]$ condition \ref{fish_princ:cond4}) is not satisfied,
then condition \ref{fish_princ:cond3}) is satisfied for $s_1$.
In this case Theorem~\ref{t1} implies that there exists $s_2 \in [0,s_1]$ for which a homoclinic trajectory exists.
But $R_c(s)$ is chosen in such a way that conditions of Theorem \ref{t2} are valid and hence
the homoclinic trajectories do not exist.
This contradiction proves the fulfillment of condition \ref{fish_princ:cond4}) of Theorem~\ref{t1}
for all $s\in[0,1]$.
Condition~\ref{fish_princ:cond4}) is checked.

{\bf Check condition 2)}.
From \eqref{fppath} it is obvious that condition~2 is satisfied.

\begin{remark}
  For $c=\sigma$ Lemma~\ref{l10} is not valid
  since a positive outgoing separatrix of the zero approximation of system \eqref{3.32}
  follows a heteroclitic orbit
  $$
    \lim\limits_{t\to+\infty}x^0_0(t)=\lim\limits_{t\to+\infty}y^0_0(t)=
    \lim\limits_{t\to+\infty}z^0_0(t)=2.
  $$
  In this case we may consider a sequence of systems close to \eqref{sys:lorenz-general}.
  For example, instead of \eqref{sys:glukh-dolzh:change_var:param} we can consider
  $a=a(\beta_k)= \frac{A_c \sigma(\sigma-\beta_k)}{(R_c A_c + 1)^2}$,
  where  $\beta_k$ are a small positive numbers and $\lim\limits_{k \to +\infty} \beta_k =0$,
  such that path \eqref{fppath} satisfies condition 4) of the fishing principle.
  Then, using Lemma~\ref{l10} and the fishing principle,
  we get the sequences of $r^h_k$ and corresponding homoclinic orbits.
  Choosing a convergent subsequence from  $r^h_k$ and using Arzela--Ascoli theorem,
  we can justify the existence of a homoclinic orbit in the initial system with $a(0)$.

  Note also that since  $a$ and $r$  are varying in the asymptotic integration,
  $\partial \Omega$ is also varying.
\end{remark}

\bigskip

Finally we get the following
\begin{theorem}\label{t3} For any fixed $A_c>0, \sigma>1$ there exists a number
$$
  R_c\in\left(\frac{2\sigma}{1+\sqrt{1+4 A_c\sigma}},+\infty\right)
$$
such that system \eqref{sys:lorenz-general}
with parameters \eqref{sys:glukh-dolzh:change_var:param}
has a homoclinic trajectory of the zero saddle point.
\end{theorem}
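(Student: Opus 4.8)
The plan is to apply the Fishing Principle (Theorem~\ref{t1}) to the manifold $\Omega$ constructed above, with the path $\gamma(s)$ varying only the parameter $R_c$ while $A_c>0$ and $\sigma>1$ stay fixed and $a,r$ are read off from \eqref{sys:glukh-dolzh:change_var:param}. Since $r=\frac{R_c}{\sigma}(R_cA_c+1)$ is strictly increasing in $R_c$, one checks that $r>1$ is equivalent to $R_c>\frac{2\sigma}{1+\sqrt{1+4A_c\sigma}}$, so the stated interval is exactly the saddle regime for the origin and ``$R_c$ large'' coincides with ``$r$ large''. First I would record that the geometric conditions~1), 2), and 5) hold at every parameter value along such a path: condition~1) because $\dot x=y(\sigma-az)<0$ strictly inside $\Omega\setminus\partial\Omega$; condition~2) because the origin is a saddle of \eqref{sys:lorenz-general} with one-dimensional unstable manifold; and condition~5) from the boundary analysis on $B_1,B_2,B_3$, where $B_3$ lies outside the positively invariant cylinder $C$ of \eqref{pinv} that no separatrix of the origin can leave.

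The two remaining conditions I would arrange at opposite ends of the path. At the small-$R_c$ end I would secure condition~4) (the outgoing separatrix never meets $\Omega$): taking $R_c$ just above $\frac{2\sigma}{1+\sqrt{1+4A_c\sigma}}$, hence below the threshold \eqref{3.36}, Theorem~\ref{t2} forbids homoclinic orbits, and a continuation along the auxiliary path \eqref{fppath} propagates condition~4) from the Lorenz case $A_c=0$ (were it to fail somewhere, condition~3) would hold there and Theorem~\ref{t1} would produce a homoclinic orbit contradicting Theorem~\ref{t2}). At the large-$R_c$ end I would invoke Lemma~\ref{l10}: after the rescaling \eqref{tr} the zero-approximation system has the two independent integrals $V,W$ whose common zero level set carries a separatrix returning to the origin, and the first-order correction \eqref{dVWe}--\eqref{3.35} drives the true separatrix across $x=0$ into $\Omega\setminus\partial\Omega$ with $y<0$, giving condition~3).

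The hard part is that this parametrization is degenerate for Lemma~\ref{l10}: a direct computation gives $ar=\frac{A_c\sigma R_c}{R_cA_c+1}\to\sigma$ as $R_c\to+\infty$, so the leading coefficient of the asymptotic expansion is $c=\sigma$, exactly the value excluded by the hypothesis $\sigma>c$. At $c=\sigma$ the zero-approximation separatrix becomes heteroclinic, tending to $(0,0,2)$ instead of returning to the origin, so condition~3) cannot be read off directly. I would circumvent this as in the Remark, replacing $a$ by $a(\beta_k)=\frac{A_c\sigma(\sigma-\beta_k)}{(R_cA_c+1)^2}$ with $\beta_k\downarrow0$; then $ar\to\sigma-\beta_k<\sigma$, Lemma~\ref{l10} applies, the path still satisfies condition~4), and the Fishing Principle yields a homoclinic orbit of the perturbed system at some parameter $r^h_k$ with $R_c$ in the interval.

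Finally I would pass to the limit $\beta_k\to0$. Provided the $r^h_k$ stay in a bounded range, a convergent subsequence exists and the corresponding homoclinic trajectories lie in the bounded cylinder $C$, so by the Arzela--Ascoli theorem a subsequence converges uniformly on compacta to a trajectory of the limiting system, whose parameters reduce to $a(0)=\frac{A_c\sigma^2}{(R_cA_c+1)^2}$, i.e. the original parametrization \eqref{sys:glukh-dolzh:change_var:param}. As each approximant is homoclinic to the origin and the origin depends continuously on the parameters, the limit is again a homoclinic orbit, with $R_c$ in $\bigl(\frac{2\sigma}{1+\sqrt{1+4A_c\sigma}},+\infty\bigr)$. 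The two points I would watch most carefully here are keeping the $r^h_k$ bounded and ensuring the limit orbit does not collapse onto the equilibrium; both should follow because every such orbit must cross $\Omega$ and hence makes an excursion bounded away from the origin.
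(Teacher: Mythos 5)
Your proposal follows essentially the same route as the paper: the fishing principle applied to the manifold $\Omega$, with conditions 1), 2), 5) checked geometrically, condition 4) secured via Theorem~\ref{t2} and the auxiliary path \eqref{fppath}, condition 3) via the asymptotic integration of Lemma~\ref{l10}, and the degenerate case $c=\sigma$ arising from $ar\to\sigma$ handled exactly as in the paper's Remark by the perturbation $a(\beta_k)$ and an Arzela--Ascoli limit passage. Your closing observations on keeping $r^h_k$ bounded and preventing the limit orbit from collapsing onto the equilibrium are sensible points of care that the paper leaves implicit, but they do not change the argument.
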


\section*{Acknowledgments}
This  work  was supported by Russian Scientific Foundation (project 14-21-00041)
and Saint-Petersburg State University.
\bibliographystyle{spmpsci} 

\end{document}